\newtheoremstyle{mythm}{3pt}{3pt}{}{16pt}{\bfseries}{:}{.5em}{}
\theoremstyle{mythm}
\newtheorem{theorem}{Theorem}
\newtheorem{example}{Example}
\newtheorem{definition}{Definition}
\newtheorem{remark}{Remark}
\newtheorem{corollary}{Corollary}
\newtheorem{lemma}{Lemma}
\newtheorem{construction}{Construction}
\newcommand{\tabincell}[2]{\begin{tabular}{@{}#1@{}}#2\end{tabular}}
\begin{document}
\title{Linear Coded Caching Scheme for Centralized Networks
\author{Minquan Cheng, Jie Li \IEEEmembership{Member~IEEE}, Xiaohu Tang \IEEEmembership{Member~IEEE}, Ruizhong Wei
}
\thanks{M. Cheng is with Guangxi Key Lab of Multi-source Information Mining $\&$ Security, Guangxi Normal University,
Guilin 541004, China, (e-mail: chengqinshi@hotmail.com).}
\thanks{J. Li is  with the Hubei Key Laboratory of Applied Mathematics, Faculty
of Mathematics and Statistics, Hubei University, Wuhan 430062, China (e-mail: jieli873@gmail.com)}
\thanks{X. Tang is with the Information Security and National Computing Grid Laboratory,
Southwest Jiaotong University, Chengdu, 610031, China (e-mail: xhutang@swjtu.edu.cn).}
\thanks{R. Wei is with Department of Computer Science, Lakehead University, Thunder Bay, ON, Canada, P7B 5E1,(e-mail: rwei@lakeheadu.ca).}
}
\date{}
\maketitle

\begin{abstract}
Coded caching systems   have been widely studied to reduce the data transmission during the peak traffic time.
In practice,  two important parameters of a coded caching system should be considered, i.e., the rate which is the maximum amount of the data transmission during the peak traffic time, and the subpacketization level, the number of divided packets of each file  when we implement a coded caching scheme. We prefer to design a scheme with rate and packet number as small as possible since they  reflect the transmission efficiency and complexity of the caching scheme, respectively.

In this paper, we first characterize a coded caching scheme from the viewpoint of linear algebra and show that designing
a linear coded caching scheme is equivalent to constructing three classes of matrices satisfying some rank conditions. Then based on the invariant linear subspaces and combinatorial design theory, several classes of new coded caching schemes over $\mathbb{F}_2$ are obtained by constructing these three classes of matrices. It turns out  that the rate of our new rate is the same as  the scheme construct by Yan et al. (IEEE Trans. Inf. Theory 63, 5821-5833, 2017), but the packet number is significantly reduced. A concatenating construction then is used for flexible number of users. Finally by means of these matrices, we show that the  minimum storage regenerating codes can also be used to construct coded caching schemes.
\end{abstract}

\begin{IEEEkeywords}
Linear coded caching scheme, matrices, rate, packet number
\end{IEEEkeywords}

%
\IEEEpeerreviewmaketitle

\section{Introduction}
\label{sec-intr}
Due to the fast growth of network applications, there exist  tremendous pressures on the data transmission over computer networks,
and the high temporal variability of network traffic further imposes these pressures. As a result, the communication systems are congested during peak-traffic hours and underutilized during off-peak hours. Recently caching {{systems}}, where some traffics are proactively placed into the user's memory during the off peak traffic hours, are regarded as an interesting solution to relieve the pressures and have been used in heterogeneous wireless networks \cite{ART,MN1,RAN,JCM,JWTL,MN,KNMD}.

A centralized caching system, which has been discussed widely, is the basic model of networks for
discussing the caching systems. In a centralized caching system, a server, which contains $N$ files with equal size, connects to $K$ users,
each of them has  cache size of $M$ files, by a single shared link. A caching scheme consists of two independent phases, i.e., {\em placement phase} during the off-peak traffic hours and {\em delivery phase} during the peak traffic hours. Assume that each user requests one file during the peak traffic hours. We prefer to design a scheme for the caching system such that in the delivery phase the data broadcasting can satisfy various demands of all users, and at the same time, the amount of the data, which is called {\em rate}, is as small as possible.

Maddah-Ali and Niesen in \cite{MN} proposed a coded caching approach. The proposed coded caching schemes, i.e., coded caching in placement phase and coded broadcasts in delivery phase,  allow   significant reductions in the number of bits transmitted from the server to the end users. A coded caching scheme called $F$-division scheme if each file is split into $F$ packets. If the packets of all files are directly cached in the placement phase, we call it uncoded placement. Otherwise we call it coded placement.
Through an elaborate uncoded placement and a coded delivery, Maddah-Ali and Niesen in \cite{MN} proposed the first deterministic scheme for an $F$-division $(K,M,N)$ coded caching system with $F={K\choose KM/N}$, when $\frac{KM}{N}$ is an integer.
 Such a scheme is referred to as MN scheme in this paper.




\subsection{Prior work}

In this paper, we are interested in the centralized coded caching schemes. So far, many results have been obtained based on that
model, for instances, \cite{MN, AG,CYTJ,GR,STC,TC,WLG,YCTC,YTCC,YMA,JCLC,WTP} etc. The following parameters are two of
the  main
concerns when researchers construct  caching schemes.

{\bf Transmission rate:} The first parameter is  the transmission rate under the assumption that each file in the library is large enough. In \cite{GR}, an improved lower bound of the transmission rate is derived by a combinatorial problem of optimally labeling the leaves of a directed tree. By interference elimination, the transmission rate of a coded caching scheme is obtained for the case $N\leq K$. There are many discussions on the transmission rate by adding another condition, i.e., under the uncoded placement. For example, \cite{WTP} showed that MN scheme has minimum rate by the graph theory, when $K< N$. \cite{YMA} obtained the transmission rate for the various demand patterns by modifying MN scheme in the delivery phase. \cite{JCLC} translated the transmission rate for the non-uniform file popularity and the various demand patterns into solving a optimization problem with $N2^K$ variables. It is interesting that when $K<N$ and  all of the files have the same popularity the solution of this optimization problem is exactly  the rate of MN scheme. But for the other cases, this optimization problem is NP-Hard.

{\bf Packet number:} The second parameter is the value of packet number $F$. In practice the packet number $F$ is finite. Furthermore, the complexity of a coded caching scheme increases when the parameter $F$ increasing. $F$ is also referred as subpachetization level by many authors.
So far, all the known discussions for the packet number are proposed under the assumption of identical uncoded caching policy, i.e.,  each user caches packets with the same indices from all files, where packets belonging to every file is ordered according to a chosen numbering (note that we assume every
file has the same size). In this paper we only consider the schemes under this assumption when $K<N$. All the following introduced previously known studies have this assumption.
 In \cite{CYTJ} authors showed that the minimum packet number is $F={K\choose KM/N}$, i.e., the packet number of MN scheme, for the minimum rate.
  It is easy to see that $F$ in MN scheme increases very quickly with the number of users $K$. This would become infeasible when $K$ becomes large. It is well known that there is tradeoff between $R$ and $F$.

The first scheme with  lower packet number was proposed by Shanmugam et al. in \cite{SJTLD}.
 Recently, Yan et al. in \cite{YCTC} characterized an $F$-division $(K,M,N)$ caching scheme by a simple array which is called $(K,F,Z,S)$ placement delivery array (PDA), where $\frac{M}{N}=\frac{Z}{F}$ and $R=\frac{S}{F}$. Then they 
generated   two infinite classes of PDAs which gives two classes of schemes.
Some other deterministic coded caching schemes with lower  subpacketization level were proposed while the  rate $R$ increased. For example, \cite{SZG} obtained two classes of schemes by constructing the special $(6,3)$-free hypergraphs. \cite{CJYT} generalized all
the constructions in \cite{YCTC} and most results in \cite{SZG} by means of PDAs.
\cite{TR} used resolvable combinatorial designs and linear block codes to construct two classes of schemes.
\cite{STD,SDLT} obtained some schemes by the known results on the $(r,t)$ Ruzsa-Szem\'{e}redi graphs.
\cite{YTCC} obtained a class of schemes by the results of strong edge coloring of bipartite graphs.
\cite{K} got a class of schemes by means of projective space over $\mathbb{F}_p$ where $p$ is an prime power.
However, up to now there are a few results on the above related graphs for some special parameters, and the existences of these graphs are open problems, especially  for the deterministic constructions. Furthermore, the authors in \cite{SDLT} pointed that all the deterministic coded caching schemes can be recast into the PDA. By means of PDA, the authors in \cite{CYTJ} obtained two classes of Pareto-optimal PDAs based on MN scheme.



As we have mentioned,  $F$ in the MN scheme has to be at least ${K\choose KM/N}$. When $K$ is large, $F$  will be very large so that MN scheme would become unfeasible in practice. Therefore researchers investigated schemes with smaller $F$ thereafter.
 On the other hand, the rate of MN scheme is almost optimal. In this paper, we focus on explicit coded caching schemes with small
 rate and reduced packet number when $K\leq N$.
To compare with the known results, we select one scheme from \cite{YCTC} and the MN scheme from \cite{MN}  in  Table \ref{tab-main}.
As we have mentioned that there are many results about the constructions of coded caching schemes. However,   up to now
only the schemes from \cite{YCTC} and \cite{MN} have the rate similar to our new scheme,
which is approximating to the optimal  rate. The rate of other schemes are
not competitive as that of our new scheme.
\begin{table}[H]
  \centering
  \caption{MN schemes and schemes with rate closing  MN scheme for any positive integer $q\geq 2$\label{tab-main}}
  \normalsize{
  \begin{tabular}{|c|c|c|c|c|}
\hline
results & $M/N$   & $R$& $F$   \\ \hline
MN scheme in \cite{MN} &$\frac{1}{q}$&$ \frac{K}{q+K}(q-1)$&
 $\sim\frac{q}{\sqrt{2\pi K(q-1)}}\cdot e^{\frac{K}{q}\left(\ln  q+\left(q-1\right)\ln \frac{q}{q-1}\right)}$\\  \hline
The scheme in \cite{YCTC}&$\frac{1}{q}$&$q-1$&$e^{\left(\frac{K}{q}-1\right)\ln q} $\\  \hline
Theorem \ref{th-main-R}  &$\frac{1}{q}$& $q-1$&$e^{\frac{K}{q+1}\ln q}$\\  \hline
   \end{tabular}}
\end{table}
 From the table, we can see that the scheme in \cite{YCTC} significantly reduces $F$ while the rate
  approximates that of MN scheme when $K$ is large.
An interesting question is that for the same or similar rate of MN scheme, do there exist some classes of schemes
 in which the packet number is smaller than that of the scheme in \cite{YCTC}. In this paper, we shall give the positive answer.
 The scheme at third row is from this paper, which further significantly  reduced the value
  of $F$.
 Our new determined scheme has the rate almost as small as MN scheme, but the packet
  number is much smaller than that in \cite{YCTC}.
   It is easy to check that for the fixed $\frac{M}{N}$, the ratio of packet numbers of
   these two schemes is $q^{\frac{K}{q(q+1)}-1}$, 
    which is
    exponential with $K$. We also provided some methods to release the restriction of $K$ so that the value of $K$ is flexible.


The main contributions of this paper are as follows. Firstly, we give a generalization of PDA method which is essentially the
basic method for most  previous constructions (see \cite{SDLT}). We characterize a general coded caching system from the view point of linear algebra. Consequently a coded caching scheme can be represented by three classes of matrices, say caching matrices, coding matrices and decoding matrices, satisfying some rank conditions. It means that we reduced the problem to
finding sequences of subspace of a linear space satisfying certain properties and interrelationships. Secondly,
as one example of implementation, we use the invariant subspaces of linear algebra and combinatorial design theory to
construct three classes of matrices  over $\mathbb{F}_2$. Thus a class of determined coded caching scheme over $\mathbb{F}_2$ is obtained.
The new scheme has good rate and significant reduced packet number. Thirdly, since many previous schemes required strict user numbers,
 we use some concatenating constructions to obtain schemes with flexible $K$.
 Finally we  demonstrate  that the regenerating code with optimal repair bandwidth, which is a hot topic in the distributed storage,
  can also be used to constructed a coded caching scheme.

The rest of this paper is organized as follows. Section \ref{sec_prob} introduces system model and backgrounds for coded caching system.
 In Section \ref{se-new character}, a new characterization of coded caching scheme is proposed from the viewpoint of matrices.
In Section \ref{sec-new-construction}, a new class of coded caching schemes is proposed.
Section \ref{sec-concatenating} introduces the concatenating construction. We indicate that the regenerating code with optimal repair bandwidth can be used to construct a coded caching scheme in Section \ref{sec_MSRC}. Finally a conclusion is drawn in Section \ref{sec_conclusion}. Some of detailed proofs are
attached in Appendix.

\section{Problem formulation and preliminaries}
\label{sec_prob}
In the centralized caching system, a single server
containing $N$ files with the same length, denoted by $\mathcal{W}$ $=\{\mathbf{W}_0$, $\mathbf{W}_1$, $\ldots$, $\mathbf{W}_{N-1}\}$, connects to $K$ users, denoted by $\mathcal{K}=\{0,1,\ldots,K-1\}$, over a shared link, and each user has a cache memory of size $M$ files (see Fig. \ref{system}).
\begin{figure}[h]
\centering\includegraphics[width=0.4\textwidth]{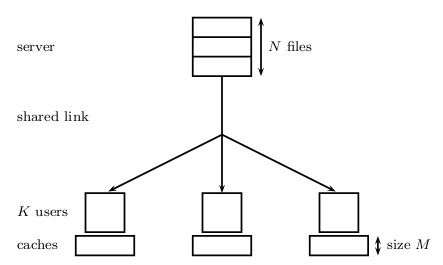}
\caption{Caching system}\label{system}
\end{figure}
An $F$-division $(K,M,N)$ coded caching scheme consists of two independent phases.
\begin{itemize}
\item Placement phase: Server divides each file into {$F$ packets, each of which has the same size,} and then places some
uncoded/coded packets of each file into users' cache memories. Denote the content cached in user $k$ by $\mathcal{Z}_k$.
In this phase server does not know the users' requests in the following phase.
\item Delivery phase: Assume that each user randomly requests one file from $N$ files independently.
Denote the request file numbers by ${\bf d}=(d_0,d_1,\cdots,d_{K-1})$, i.e., user $k$ requests the file $\mathbf{W}_{d_k}$ where $d_k\in[0,N)$ and $k\in[0,K)$. Then server sends a coded signal of size at most $S_{{\bf d}}$ packets, denoted by $\mathbf{X}_{{\bf d}}$, to
 the users such that each user's demand is satisfied with the help of the contents of its cache.
\end{itemize}

Let $R$ denote the maximum transmission amount among all the request during the delivery phase, i.e.,
\begin{align}
R=\sup_{\tiny{\mbox{$\begin{array}{c}
                 \mathbf{d}=(d_0,\cdots,d_{K-1}) \\
                 d_k\in[0,N),\forall k\in[0,K)
               \end{array}
$}}} \left\{\frac{S_{\mathbf{d}}}{F}\right\}\label{eq_def_R}.
\end{align}
$R$ is  called the rate of a coded caching scheme. 
Maddah-Ali and Niesen proposed  the first well known {deterministic} coded caching scheme.
\begin{lemma}(MN scheme, \cite{MN})
\label{le-MN}
For any positive integers $N$, $K$ and $M$ with $M\in \{N/K,2N/K,\ldots,N\}$, there exists an $F$-division $(K,M,N)$ coded caching scheme with $F={K\choose MK/N}$ and $R=\frac{K(1-M/N)}{1+KM/N}$.
\end{lemma}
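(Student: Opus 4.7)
The plan is to give the standard explicit construction of Maddah-Ali and Niesen's coded caching scheme and then verify it meets the three claimed parameters: subpacketization $F = \binom{K}{t}$ where $t = MK/N$, cache size $M$, and rate $R = K(1-M/N)/(1+KM/N)$. Since $M \in \{N/K, 2N/K, \ldots, N\}$, the integer $t$ lies in $\{1,2,\ldots,K\}$, which is the regime where the combinatorial description is clean.

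\textbf{Placement.} First I would split each file $\mathbf{W}_n$ into $F = \binom{K}{t}$ equal-sized packets indexed by the $t$-subsets of $[K]$, written $\mathbf{W}_n = \{\mathbf{W}_{n,T} : T \subseteq [K],\, |T|=t\}$. For each user $k \in [K]$, let the cache be
\begin{equation*}
\mathcal{Z}_k = \{\mathbf{W}_{n,T} : n \in [N],\, T \ni k\}.
\end{equation*}
A direct count gives $|\mathcal{Z}_k| = N\binom{K-1}{t-1}$ packets, which in units of a whole file is $N\binom{K-1}{t-1}/\binom{K}{t} = Nt/K = M$. So the memory constraint is satisfied exactly.

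\textbf{Delivery.} Given a request vector $\mathbf{d}=(d_0,\ldots,d_{K-1})$, for every $(t+1)$-subset $S \subseteq [K]$ the server broadcasts the single XOR
\begin{equation*}
\mathbf{X}_{\mathbf{d},S} = \bigoplus_{k \in S} \mathbf{W}_{d_k,\, S\setminus\{k\}}.
\end{equation*}
To see that user $k$ can recover an arbitrary missing packet $\mathbf{W}_{d_k, T}$ (with $k \notin T$, $|T|=t$), set $S = T \cup \{k\}$. In $\mathbf{X}_{\mathbf{d},S}$ the target packet appears, and every other summand has the form $\mathbf{W}_{d_j, S\setminus\{j\}}$ with $j \neq k$, so $k \in S \setminus \{j\}$ and the packet is in $\mathcal{Z}_k$. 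Cancelling these, user $k$ obtains $\mathbf{W}_{d_k,T}$; running this over all $T \not\ni k$ recovers the entire requested file.

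\textbf{Rate.} The number of broadcast packets is exactly $\binom{K}{t+1}$, independent of $\mathbf{d}$, so
\begin{equation*}
R = \frac{\binom{K}{t+1}}{\binom{K}{t}} = \frac{K-t}{t+1} = \frac{K(1-M/N)}{1+KM/N},
\end{equation*}
which matches the claim. I do not anticipate a genuine obstacle here — the only subtle point is the bookkeeping that shows each user's every desired packet is covered by exactly one transmission and that the other $t$ summands in that transmission are all already cached; both follow immediately from the symmetry of the $t$-subset indexing. The argument is self-contained and uses no machinery beyond elementary binomial identities and XOR.
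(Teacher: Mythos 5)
Your construction is exactly the standard Maddah-Ali--Niesen scheme that the paper cites from \cite{MN} without reproving, and all three verifications (memory count $N\binom{K-1}{t-1}/\binom{K}{t}=M$, decodability via $S=T\cup\{k\}$, and rate $\binom{K}{t+1}/\binom{K}{t}=(K-t)/(t+1)$) are correct. No gaps; this matches the intended argument.
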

Denote the minimum value of $R$ by $R^*$. Maddah-Ali and Niesen in \cite{MN} proved that the ratio between their rate
 and $R^*$ is less than or equal to a constant $12$. So MN scheme is ordered optimal. Up to now, the rate of {MN} scheme
 is the smallest rate among of all the known schemes when $K\leq N$.

\subsection{Placement delivery array}

Before introduce our new linear methods, we review the placement delivery array. Our new method is inspired from this structure.
As \cite{SDLT} indicated that  PDA method is essentially the
basic method for most  previous constructions. Our purpose is to generalize the PDA method and get better results from
the generalization.

To construct  uncoded caching, Yan et al. \cite{YCTC} generated some $F$-division $(K,M,N)$ coded caching scheme by constructing a related combinatorial structure
which is defined as follows.
\begin{definition}[\cite{YCTC}]
For positive integers $K,F, S$ and $Z$, an $F\times K$ array $\mathbf{P}=(p_{j,k})$, $j\in [0,F), k\in[0,K)$, composed of a specific symbol $``*"$  and $S$ nonnegative integers
$0,1,\cdots, S-1$, is called a $(K,F,Z,S)$ placement delivery array (PDA) if it satisfies the following conditions:
\begin{itemize}
  \item The symbol $``*"$ appears $Z$ times in each column;
  \item Each integer occurs at least once in the array;
  \item For any two distinct entries $p_{j_1,k_1}$ and $p_{j_2,k_2}$,    $p_{j_1,k_1}=p_{j_2,k_2}=s$ is an integer  only if
  \begin{itemize}
     \item $j_1\ne j_2$, $k_1\ne k_2$, i.e., they lie in distinct rows and distinct columns; and
     \item $p_{j_1,k_2}=p_{j_2,k_1}=*$, i.e., the corresponding $2\times 2$  subarray formed by rows $j_1,j_2$ and columns $k_1,k_2$ must be of the following form
  \begin{eqnarray*}
    \left(\begin{array}{cc}
      s & *\\
      * & s
    \end{array}\right)~\textrm{or}~
    \left(\begin{array}{cc}
      * & s\\
      s & *
    \end{array}\right)
  \end{eqnarray*}
   \end{itemize}
\end{itemize}
\end{definition}
\begin{example}\label{exam2} It is easily checked that the following array is a $(6,4,2,4)$ PDA.
\begin{align}
\mathbf{P}=\left(\begin{array}{cccccc}
 *&  1&  *&  2&  *&  0\\
 0&  *&  *&  3&  1&  *\\
 *&  3&  0&  *&  2&  *\\
 2&  *&  1&  *&  *&  3
\end{array}
\right)\label{eq_A}
\end{align}
\end{example}
Based on a $(K,F,Z,S)$  PDA $\mathbf{P}=(p_{j,k})$ with $j\in [0,F)$ and $k\in[0,K)$,  Yan et al., in \cite{YCTC} showed that an $F$-division caching scheme for a $(K,M,N)$ caching system with $\frac{M}{N}=\frac{Z}{F}$ can be conducted by Algorithm \ref{alg:PDA}.
\begin{algorithm}[htb]
\caption{caching scheme based on PDA in \cite{YTCC}}\label{alg:PDA}
\begin{algorithmic}[1]
\Procedure {Placement}{$\mathbf{P}$, $\mathcal{W}$}
\State Split each file $\mathbf{W}_i\in\mathcal{W}$ into $F$ packets, i.e., $\mathbf{W}_{i}=\{\mathbf{W}_{i,j}\ |\ j=0,1,\cdots,F-1\}$.
\For{$k\in \mathcal{K}$}
\State $\mathcal{Z}_k\leftarrow\{\mathbf{W}_{i,j}\ |\ p_{j,k}=*, \forall~i=0,1,\cdots,N-1\}$
\EndFor
\EndProcedure
\Procedure{Delivery}{$\mathbf{P}, \mathcal{W},{\bf d}$}
\For{$s=0,1,\cdots,S-1$}
\State  Server sends $\bigoplus_{p_{j,k}=s,0\leq j< F,0\leq k<K}\mathbf{W}_{d_{k},j}$.
\EndFor
\EndProcedure
\end{algorithmic}
\end{algorithm}
\begin{theorem}[\cite{YCTC}]\label{thm1} For a given $(K,F,Z,S)$ PDA, by Algorithm \ref{alg:PDA} there exists an $F$-division $(K,M,N)$ caching scheme with $\frac{M}{N}=\frac{Z}{F}$ and $R=\frac{S}{F}$ for any $M$ and $N$.
\end{theorem}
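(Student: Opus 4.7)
The plan is to verify three claims separately: the memory size constraint $M/N = Z/F$, the rate equality $R = S/F$, and the decodability of every requested file.

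First I would handle the memory constraint by a simple counting argument on the placement rule. By the first PDA axiom, each column of $\mathbf{P}$ contains exactly $Z$ stars, so for every user $k$ there are exactly $Z$ values of $j$ with $p_{j,k}=*$. For each such $j$, Algorithm~\ref{alg:PDA} places the packet $\mathbf{W}_{i,j}$ into $\mathcal{Z}_k$ for every $i\in[0,N)$. Hence $|\mathcal{Z}_k|=NZ$ packets, and since each packet has size $1/F$ of a file, the cache occupies $NZ/F$ files. Equating this with $M$ yields $M/N=Z/F$.

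Next I would verify the rate. In the delivery procedure the server sends exactly one XOR for each $s\in\{0,1,\ldots,S-1\}$, and each such XOR is a sum of packets and therefore has the size of a single packet. Thus the total delivery load is $S$ packets, i.e.\ $S/F$ files worth of data, regardless of the request vector $\mathbf{d}$. Plugging this into the definition \eqref{eq_def_R} gives $R=S/F$.

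The main (and only nontrivial) step is decodability, and this is exactly where the PDA axioms are used. Fix a user $k$ and a requested packet $\mathbf{W}_{d_k,j}$. If $p_{j,k}=*$ then $\mathbf{W}_{d_k,j}\in\mathcal{Z}_k$ and we are done. Otherwise $p_{j,k}=s$ for some integer $s$, and the server transmits $Y_s=\bigoplus_{p_{j',k'}=s}\mathbf{W}_{d_{k'},j'}$, which contains $\mathbf{W}_{d_k,j}$ as one summand. It suffices to show that every other summand is already in $\mathcal{Z}_k$. For any other coordinate $(j',k')$ with $p_{j',k'}=s$, the third PDA axiom guarantees $j'\ne j$, $k'\ne k$, and crucially $p_{j',k}=*$. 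The latter forces $\mathbf{W}_{d_{k'},j'}\in\mathcal{Z}_k$ by the placement rule (which caches $\mathbf{W}_{i,j'}$ for every file index $i$, including $i=d_{k'}$, whenever $p_{j',k}=*$). Consequently the user can XOR $Y_s$ with the known summands to recover $\mathbf{W}_{d_k,j}$. Since the integers $0,1,\ldots,S-1$ each appear at least once (second PDA axiom) and cover every non-star cell, every missing packet for user $k$ is recovered in this way.

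The hard part is conceptual rather than computational: one must recognize that the symmetric $2\times 2$ subarray condition is precisely the "each side-information is cached" condition needed for XOR decoding. Once that correspondence is identified, the three verifications are short counting arguments, so I would spend the bulk of the write-up on making the decoding argument transparent and leave the memory and rate computations as one-line remarks.
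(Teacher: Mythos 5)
Your proof is correct and complete. Note that the paper itself does not prove this statement --- it imports Theorem \ref{thm1} from \cite{YCTC} without proof --- but your argument is exactly the standard one given in that reference: the column-star count gives the memory ratio, the one-XOR-per-integer count gives the rate, and the $2\times 2$ subarray condition ($p_{j',k}=*$ for every other coordinate sharing the integer $s$) is precisely what guarantees that all interfering summands lie in user $k$'s cache, so each missing packet is recoverable by XOR cancellation.
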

\begin{example}\label{exam3}
From Example \ref{exam2}, let us generate a $4$-division $(6,3,6)$ coded caching scheme by Algorithm \ref{alg:PDA}. Clearly $\mathcal{W}=\{\mathbf{W}_0,\mathbf{W}_1,\mathbf{W}_2,\mathbf{W}_3,\mathbf{W}_4,\mathbf{W}_5\}$ and each file is $\mathbf{W}_i=\{\mathbf{W}_{i,0},\mathbf{W}_{i,1},\mathbf{W}_{i,2},\mathbf{W}_{i,3}\}$, $i\in [0,6)$ by Line-2 of Algorithm \ref{alg:PDA}. According to Lines 2-5 and Lines 8-10, the two phases of a coded caching scheme can be implemented as follows:
\begin{itemize}
   \item \textbf{Placement Phase}:   The contents in each users are
       \begin{align*}
       \mathcal{Z}_0=\left\{\mathbf{W}_{i,0},\mathbf{W}_{i,2}:i\in[0,6)\right\}~~~~ \mathcal{Z}_1=\left\{\mathbf{W}_{i,1},\mathbf{W}_{i,3}:i\in[0,6)\right\}\notag\\
       \mathcal{Z}_2=\left\{\mathbf{W}_{i,0},\mathbf{W}_{i,1}:i\in[0,6)\right\}~~~~ \mathcal{Z}_3=\left\{\mathbf{W}_{i,2},\mathbf{W}_{i,3}:i\in[0,6)\right\}\notag\\
       \mathcal{Z}_4=\left\{\mathbf{W}_{i,0},\mathbf{W}_{i,3}:i\in[0,6)\right\}~~~~ \mathcal{Z}_5=\left\{\mathbf{W}_{i,1},\mathbf{W}_{i,2}:i\in[0,6)\right\}\notag
       \end{align*}
   \item \textbf{Delivery Phase}: Assume the request vector $\mathbf{d}=(0,1,\cdots,5)$. Table \ref{table1} shows the transmitting process.
\begin{table}[!htp]
\centering
\caption{Deliver steps in example \ref{exam2} }\label{table1}
\normalsize{
\begin{tabular}{cc}
\hline
Time Slot& Transmitted Signnal  \\
\hline
$0$& $\mathbf{W}_{0,1}+ \mathbf{W}_{2,2}+ \mathbf{W}_{5,0}$\\
$1$& $\mathbf{W}_{1,0}+ \mathbf{W}_{2,3}+ \mathbf{W}_{4,1}$\\
$2$& $\mathbf{W}_{0,3}+ \mathbf{W}_{3,0}+ \mathbf{W}_{4,2}$\\
$3$& $\mathbf{W}_{1,2}+ \mathbf{W}_{3,1}+ \mathbf{W}_{5,3}$\\
\hline
\end{tabular}
}
\end{table}
\end{itemize}
\end{example}


By constructing an appropriate PDA, Yan et al., proposed the following two schemes where the rate is near to  that of MN scheme
 but the packet number is far less than that of MN scheme.
\begin{lemma}(\cite{YCTC})\label{le-Yan}
For any positive integers $q$, $m$ with $q\geq2$, there exists a $q^{m}$-division $((m+1)q,M,N)$ coded caching scheme with $\frac{M}{N}=\frac{1}{q}$, $\frac{q-1}{q}$ and rate $R=q-1$, $\frac{1}{q-1}$ respectively.
\end{lemma}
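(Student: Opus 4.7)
The plan is to establish both schemes by explicitly constructing a $(K,F,Z,S)$ PDA for each parameter pair and then invoking Theorem \ref{thm1}. The natural index sets arise from viewing the $F=q^m$ rows as vectors in $\mathbb{Z}_q^{m+1}$ of coordinate-sum $0$ (of which there are exactly $q^m$), and the $K=(m+1)q$ columns as pairs $(k,s)$ with $k\in\{0,1,\ldots,m\}$ and $s\in\mathbb{Z}_q$, so that the $m+1$ symmetric coordinates correspond to $m+1$ groups of $q$ users each. This symmetric indexing will make the PDA conditions easy to state and force both constructions to be highly homogeneous.

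For the first scheme ($M/N=1/q$, $R=q-1$), I would place a star at entry $(\mathbf{f},(k,s))$ exactly when $f_k=s$. Each row then has exactly $m+1$ stars (one per coordinate), each column has $Z=q^{m-1}$ stars, so $M/N=Z/F=1/q$. For the non-star entries I would label the cell $(\mathbf{f},(k,s))$ (with $f_k\neq s$) by the vector obtained from $\mathbf{f}$ by replacing $f_k$ with $s$ and simultaneously adjusting a designated "redundant" coordinate to preserve the sum-zero constraint. Two cells carry the same label only if they agree on all but two coordinates, and a short case analysis on which coordinates differ shows that the complementary pair in the induced $2\times 2$ sub-array lies precisely on the diagonal where the row coordinate matches the column value, hence carries stars. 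A counting argument then gives $(m+1)$ cells per equivalence class and yields $S=q^m(q-1)$, whence $R=S/F=q-1$.

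For the second scheme ($M/N=(q-1)/q$, $R=1/(q-1)$), I would work with the complementary star pattern: place a star at $(\mathbf{f},(k,s))$ whenever $f_k\neq s$. This gives $Z=(q-1)q^{m-1}$ stars per column, so $M/N=(q-1)/q$. The remaining entries then sit on the "diagonal" pattern used by the stars of the first scheme, and I would fuse the $m+1$ cells of each row that satisfy $f_k=s$ into one joint integer label (possibly after a mild repacketization of each packet into $q-1$ chunks if the arithmetic requires it, which still yields a $q^m$-division scheme by relabeling). The rate count then gives $S/F=1/(q-1)$ as claimed.

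The main obstacle in both constructions is the verification of the $2\times 2$ sub-array axiom, since this is the only condition that is not immediate from the counting. The point is that the sum-zero indexing makes the labels essentially \emph{linear functions} of the row vector, so the coincidence $p_{j_1,k_1}=p_{j_2,k_2}=s$ forces $\mathbf{f}_1$ and $\mathbf{f}_2$ to differ in exactly one coordinate pair, at which point the forced "$*$" entries fall out of the definition. Once both arrays are validated as PDAs, Theorem \ref{thm1} immediately translates them into the claimed $((m+1)q,M,N)$ coded caching schemes, completing the proof.
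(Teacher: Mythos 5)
First, a point of reference: the paper does not prove Lemma \ref{le-Yan} at all --- it is quoted verbatim from \cite{YCTC} --- so your attempt can only be measured against the construction in that source. Your overall plan for the first parameter pair (index the $F=q^m$ rows by the sum-zero vectors of $\mathbb{Z}_q^{m+1}$, index the $(m+1)q$ columns by pairs $(k,s)$, place a star iff $f_k=s$, then invoke Theorem \ref{thm1}) is exactly the symmetrized form of the Yan et al.\ construction, and the star count giving $Z/F=1/q$ is correct. The gap is in the step you yourself identify as the crux: the integer labels. As written, you label the non-star cell $(\mathbf{f},(k,s))$ by the vector obtained from $\mathbf{f}$ by substituting $s$ for $f_k$ \emph{and then restoring the sum-zero constraint} via a designated redundant coordinate. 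That restoration is fatal: it leaves only $q^m$ possible labels rather than the $S=q^m(q-1)$ needed for $R=q-1$ (so it also contradicts your own count of $m+1$ cells per class), and it collapses distinct cells within a single column. Concretely, for $m=1$, $q=3$ the rows are $(0,0),(1,2),(2,1)$; in column $(0,1)$ the two non-star rows $(0,0)$ and $(2,1)$ both receive the label $(1,2)$ under your rule, violating the PDA requirement that equal integers lie in distinct rows \emph{and} distinct columns. The fix is to leave the substituted vector \emph{unadjusted}: its coordinate sum equals $s-f_k\neq 0$, so the label set is the $q^{m+1}-q^m=q^m(q-1)$ vectors of nonzero sum, each label occurs exactly once per group of $q$ columns (given $\mathbf{g}$ with $\sigma=\sum_l g_l\neq0$ and a group $k$, the unique preimage is the row agreeing with $\mathbf{g}$ off $k$ and having $f_k=g_k-\sigma$, in column $(k,g_k)$), and for two cells $(\mathbf{f}_1,(k_1,s_1))$, $(\mathbf{f}_2,(k_2,s_2))$ sharing a label one gets $f_{1,k_2}=s_2$ and $f_{2,k_1}=s_1$, which is precisely the required pattern of stars. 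The invariant that makes the $2\times2$ axiom work is the \emph{nonzero} sum of the label, not a preserved zero sum.

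For the second parameter pair the argument is too thin to count as a proof. Complementing the star pattern does give $Z/F=(q-1)/q$, but ``fusing the $m+1$ cells with $f_k=s$ into one joint label, possibly after repacketizing each packet into $q-1$ chunks, which still yields a $q^m$-division scheme'' is not substantiated and is internally inconsistent: splitting each of $q^m$ packets into $q-1$ chunks produces a $(q-1)q^m$-division scheme (which is in fact the subpacketization of the corresponding scheme in \cite{YCTC}), not a $q^m$-division one, and no relabeling undoes that. Moreover the integer labeling of the complementary array and the verification of the $2\times2$ condition for it are never actually carried out. This half of the lemma needs an explicit PDA, or a precise duality/complementation argument reducing it to the first one, rather than a sketch.
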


PDA is used to design most caching schemes previously.
 In the PDA ,  each column represents the cache data for a user. A `*' means that this part of the data is stored in
this user's cache. An integer indicates that this part of the data is not stored. In case the user requests data not stored in its cache
(i.e., some integer, say $s$,  in the cell of PDA), then the server look at
the corresponding row (say $i$th row) of the matrix. For those columns containing $s$, the server includes the data of its $i$th cell to the
delivery data.
The property of the PDA guarantees the user can get the requested file.

\section{Linear coded caching schemes}
\label{se-new character}

The PDA method showed that we need some algorithm to do the arrangement of the data for caching and delivery. For efficiency, we propose
linear algorithms to do that. That is the main motivation of our linear coded caching schemes. Now we start to introduce our method below.

Assume that each file is a column vector with length $F$ over a certain filed, say $\mathbb{F}_p$ for some prime power $p$, and the identical caching policy for all files is carried out for each user. When $\frac{FM}{N}$ is a positive integer let us characterize the coded caching scheme from
 the viewpoint of linear algorithms, i.e., the two phases can be written as follows.
\begin{itemize}
\item  In the placement phase, for each $k=0,1,\ldots,K-1$, the $k$th user caches
$$\mathcal{Z}_k=\{\mathbf{S}_k\mathbf{W}_n\ |\ n\in [0,N)\}$$
where $\mathbf{S}_k$, which is called a {\em caching matrix}, is an $\frac{FM}{N}\times F$ matrix  over some field $\mathbb{F}_p$. Clearly each user caches the contents of size $\frac{FM}{N}N=MF$.
\item In the delivery phase, for any fixed request ${\bf d}=(d_0,d_1,\cdots,d_{K-1})$, sever broadcasts signal
\begin{eqnarray}
\label{eq-broadcast}
\mathbf{X}_{{\bf d}}=\mathbf{A}_0 \mathbf{W}_{d_0}+\mathbf{A}_1 \mathbf{W}_{d_1}+\ldots+\mathbf{A}_{K-1} \mathbf{W}_{d{_K-1}}
\end{eqnarray}where $\mathbf{A}_k$, which is called a {\em coding matrix}, is an $RF\times F$ matrix. Then user $k$ can obtain the required file $\mathbf{W}_{d_k}$ by the contents $\mathcal{Z}_k$ and the contents
$$\mathbf{W}'_{d_k}=\mathbf{S}'_k \mathbf{X}_{{\bf d}}$$
where $\mathbf{S}'_k$, which is called a {\em decoding matrix}, is an $F(1-\frac{M}{N})\times RF$ matrix.
\end{itemize}

In this paper, a coded caching scheme which can be characterized by the above three classes of matrices, i.e., caching matrices, coding matrices and decoding matrices, is called linear caching scheme.

Next, we shall prove some   algebraic properties of the three series of matrices.
\begin{theorem}
\label{th-main}
For any request ${\bf d}$ and the related signals $\mathbf{X}_{{\bf d}}$ in \eqref{eq-broadcast}, user $k$ can obtain the required $\mathbf{W}_{d_i}$ if and only if the matrices $\mathbf{S}_k$, $\mathbf{A}_k$ and $\mathbf{S}'_k$, $k\in [0,K)$, satisfy the following conditions.
\begin{eqnarray}\label{Eqn rank condition}
\mathrm{rank}\left(
  \begin{array}{c}
    \mathbf{S}_k \\
    \mathbf{S}'_{k}\mathbf{A}_{k'}
 \end{array}
\right)=\left\{\begin{array}{lll}F,&\mbox{if}\ \    k=k'
\\
\frac{FM}{N},&\mbox{otherwise}\end{array}\right.  \mbox{~~for~~}0\le k'< K.
\end{eqnarray}
\end{theorem}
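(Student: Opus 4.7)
The statement is an iff characterizing when linear decoding succeeds at user $k$, so I would organize the proof into a sufficiency direction and a necessity direction. The unifying viewpoint is that user $k$'s total knowledge is a linear functional of the file vector $(\mathbf{W}_0,\ldots,\mathbf{W}_{N-1})$, namely the values $\{\mathbf{S}_k\mathbf{W}_n\}_{n\in[0,N)}$ together with $\mathbf{S}'_k\mathbf{X}_{\mathbf d}=\sum_{k'}\mathbf{S}'_k\mathbf{A}_{k'}\mathbf{W}_{d_{k'}}$. Recovering $\mathbf{W}_{d_k}$ is equivalent, by linearity, to the statement that the only file-perturbation $(\mathbf{V}_n)_n$ annihilating every observation has $\mathbf{V}_{d_k}=0$. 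I will work throughout with requests having pairwise distinct entries, which is admissible since $K\le N$.

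For sufficiency, I would first observe that when $\mathrm{rank}\bigl(\begin{smallmatrix}\mathbf{S}_k\\\mathbf{S}'_k\mathbf{A}_{k'}\end{smallmatrix}\bigr)=FM/N$ for $k'\neq k$, the matrix $\mathbf{S}_k$ must itself have full row rank $FM/N$, and the rows of $\mathbf{S}'_k\mathbf{A}_{k'}$ lie in the row span of $\mathbf{S}_k$. Hence there exist matrices $\mathbf{B}_{k,k'}$ with $\mathbf{S}'_k\mathbf{A}_{k'}=\mathbf{B}_{k,k'}\mathbf{S}_k$, so user $k$ can compute the interference term $\mathbf{S}'_k\mathbf{A}_{k'}\mathbf{W}_{d_{k'}}=\mathbf{B}_{k,k'}(\mathbf{S}_k\mathbf{W}_{d_{k'}})$ from its cache. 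Subtracting these from $\mathbf{S}'_k\mathbf{X}_{\mathbf d}$ leaves the clean term $\mathbf{S}'_k\mathbf{A}_k\mathbf{W}_{d_k}$. Together with the cached $\mathbf{S}_k\mathbf{W}_{d_k}$, user $k$ knows $\bigl(\begin{smallmatrix}\mathbf{S}_k\\\mathbf{S}'_k\mathbf{A}_k\end{smallmatrix}\bigr)\mathbf{W}_{d_k}$, and since the stacked matrix has rank $F$ it has a left inverse, which recovers $\mathbf{W}_{d_k}$.

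For necessity, I would argue by contrapositive, splitting on which rank condition fails. If the $k'=k$ condition fails, i.e.\ the joint matrix has rank strictly less than $F$, then its kernel contains a nonzero $\mathbf u$, and I build two file libraries differing only in $\mathbf{W}_{d_k}$ by $\mathbf u$: they produce identical observations but distinct target files, so decoding is impossible. If instead the $k'=k$ condition holds but for some $k'\neq k$ the rank exceeds $FM/N$, then (as noted above) $\mathbf{S}_k$ has full row rank $FM/N$ and, by a dimension count, $\mathbf{S}'_k\mathbf{A}_k$ restricted to $\ker\mathbf{S}_k$ is a bijection onto $\mathbb{F}_p^{F(1-M/N)}$. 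Choose $\mathbf v\in\ker\mathbf{S}_k$ with $\mathbf{S}'_k\mathbf{A}_{k'}\mathbf v\neq 0$ (which exists precisely because rowspan of $\mathbf{S}'_k\mathbf{A}_{k'}$ escapes rowspan of $\mathbf{S}_k$), then invoke the bijection to pick $\mathbf u\in\ker\mathbf{S}_k$ with $\mathbf{S}'_k\mathbf{A}_k\mathbf u=-\mathbf{S}'_k\mathbf{A}_{k'}\mathbf v$; automatically $\mathbf u\neq 0$. Setting $\mathbf{V}_{d_k}=\mathbf u$, $\mathbf{V}_{d_{k'}}=\mathbf v$, and $\mathbf{V}_n=0$ otherwise zeroes every observation yet changes the requested file, again defeating decoding.

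The main obstacle is the Case B portion of necessity: a naive perturbation of $\mathbf{W}_{d_{k'}}$ alone changes $\mathbf{S}'_k\mathbf{X}_{\mathbf d}$, so one cannot directly exhibit two indistinguishable inputs. The resolution is the coupled perturbation above, which requires using the Case A rank condition to guarantee surjectivity of $\mathbf{S}'_k\mathbf{A}_k$ on $\ker\mathbf{S}_k$ so that the stray interference can be ``cancelled'' by a genuine change in $\mathbf{W}_{d_k}$. A minor bookkeeping point is that the chosen request must have distinct $d_k,d_{k'}$, which I handle by appealing to $K\le N$.
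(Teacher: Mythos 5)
Your proposal is correct, and its sufficiency half coincides with the paper's argument: both express $\mathbf{S}'_k\mathbf{X}_{\mathbf d}$ as $\sum_{k'}\mathbf{S}'_k\mathbf{A}_{k'}\mathbf{W}_{d_{k'}}$, use the rank-$\frac{FM}{N}$ condition to write $\mathbf{S}'_k\mathbf{A}_{k'}=\mathbf{B}_{k,k'}\mathbf{S}_k$ so the interference is computable from the cache, and then invert the rank-$F$ stacked matrix to recover $\mathbf{W}_{d_k}$. Where you genuinely diverge is the necessity direction. The paper disposes of it in one sentence, asserting that the interference can be cancelled \emph{only if} each $\mathbf{S}'_k\mathbf{A}_{k'}$ lies in the row span of $\mathbf{S}_k$; this implicitly assumes the decoder must neutralize each term separately from cache contents and gives no argument. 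You instead give a bona fide indistinguishability proof: a kernel vector of the stacked matrix for the $k'=k$ failure, and for the $k'\neq k$ failure a coupled perturbation $(\mathbf u,\mathbf v)\in\ker\mathbf{S}_k\times\ker\mathbf{S}_k$ whose existence rests on the injectivity (hence, by dimension count, surjectivity) of $\mathbf{S}'_k\mathbf{A}_k$ on $\ker\mathbf{S}_k$, which is exactly what the rank-$F$ condition supplies. This buys a necessity statement that actually rules out \emph{any} decoder operating on $(\mathcal Z_k,\mathbf{S}'_k\mathbf{X}_{\mathbf d})$, not merely the term-by-term cancellation strategy, at the cost of needing $K\le N$ to realize a request with $d_k\neq d_{k'}$. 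One small caution: your remark that the $k'\neq k$ condition alone forces $\mathbf{S}_k$ to have full row rank is not accurate in isolation (the stacked rank could reach $\frac{FM}{N}$ with a deficient $\mathbf{S}_k$ plus independent rows from $\mathbf{S}'_k\mathbf{A}_{k'}$); the full row rank of $\mathbf{S}_k$ should be drawn from the $k'=k$ condition, which you do have in hand in both places where you use it, so nothing breaks.
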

\begin{proof} For any request vector ${\bf d}$ and the related signals $\mathbf{X}_{{\bf d}}$, user $k$ can get the following contents by decoding matrix $\mathbf{S}'_{k}$
$$\mathbf{S}'_{k}\mathbf{X}_{{\bf d}}=\mathbf{S}'_{i}\mathbf{A}_{0}\mathbf{W}_{d_0}+\mathbf{S}'_{k}\mathbf{A}_{1}\mathbf{W}_{d_1}+\ldots+
\mathbf{S}'_{k}\mathbf{A}_{K-1}\mathbf{W}_{d_{K-1}}.$$
This is,
\begin{eqnarray}
\label{eq01}
\mathbf{S}'_{k}\mathbf{A}_{k}\mathbf{W}_{d_k}=\mathbf{S}'_{k}\mathbf{X}_{{\bf d}}-\sum\limits_{k'\in[0,K)\setminus \{k\}}\mathbf{S}'_{k}\mathbf{A}_{k'}\mathbf{W}_{d_{k'}}.
\end{eqnarray}
Clearly user $k$ can cancel the last item on the right side of the above equality if and only if for each $k'\in [0,K)\setminus\{k\}$ one can obtain $\mathbf{S}'_k\mathbf{A}_{k'}\mathbf{W}_{d_{k'}}$ based on user $k$'s caching contents $\mathbf{S}_k\mathbf{W}_{d_{k'}}$, i.e., $\mathbf{S}'_k\mathbf{A}_{k'}$ can be linear represented by the rows of $\mathbf{S}_k$. This implies that the following equation always holds.
\begin{eqnarray*}
\mathrm{rank}\left(
  \begin{array}{c}
    \mathbf{S}_k \\
    \mathbf{S}'_{k}\mathbf{A}_{k'}
 \end{array}
\right)=\frac{FM}{N}.
\end{eqnarray*}
Furthermore, user $k$ can obtain the $F$ packets of $\mathbf{W}_{d_k}$ if and only if the following equation holds.
\[\mathrm{rank}\left(
  \begin{array}{c}
    \mathbf{S}_k \\
    \mathbf{S}'_k\mathbf{A}_{k}
  \end{array}
\right)=F.
\]
The proof is complete.
\end{proof}
\begin{remark}
By \eqref{Eqn rank condition} in Theorem \ref{th-main}, we know that the matrices $\mathbf{S}_k$ and $\mathbf{S}'_k$, $k\in [0,K)$, must be full row rank. From the knowledge of linear algebra, for each $k'\neq k$, $k'\in [0,K)$, we could get matrix $\mathbf{D}_{k,k'}$ satisfying
\begin{eqnarray}
\label{eq-decoding_matrix_1}
\mathbf{D}_{k,k'}\mathbf{S}_0=\mathbf{S}'_k\mathbf{A}_{k'}.
\end{eqnarray}
So from the above equation, \eqref{eq01} can be written as follows.
\begin{eqnarray*}
\label{eq01-1}
\mathbf{S}'_{k}\mathbf{A}_{k}\mathbf{W}_{d_k}=\mathbf{S}'_{k}\mathbf{X}_{{\bf d}}-\sum\limits_{k'\in[0,K)\setminus \{k\}}\mathbf{D}_{k,k'}\mathbf{S}_0\mathbf{W}_{d_{k'}}.
\end{eqnarray*}
Then
\begin{eqnarray*}
 \left(
  \begin{array}{c}
    \mathbf{S}_k \mathbf{W}_{d_k}\\
    \mathbf{S}'_{k}\mathbf{A}_{k}\mathbf{W}_{d_k}
 \end{array}
\right)= \left(
  \begin{array}{c}
    \mathbf{S}_k\\
    \mathbf{S}'_{k}\mathbf{A}_{k}
 \end{array}
\right) \mathbf{W}_{d_k}= \left(
  \begin{array}{c}
    \mathbf{S}_k \mathbf{W}_{d_k}\\
    \mathbf{S}'_{k}\mathbf{X}_{{\bf d}}-\sum\limits_{k'\in[0,K)\setminus \{k\}}\mathbf{D}_{k,k'}\mathbf{S}_0\mathbf{W}_{d_{k'}}
 \end{array}
\right).
\end{eqnarray*}
This implies that
\begin{eqnarray}
\label{eq-decoding-2}
\mathbf{W}_{d_k}=\left(
  \begin{array}{c}
    \mathbf{S}_k\\
    \mathbf{S}'_{k}\mathbf{A}_{k}
 \end{array}
\right)^{-1} \left(
  \begin{array}{c}
    \mathbf{S}_k \mathbf{W}_{d_k}\\
    \mathbf{S}'_{k}\mathbf{X}_{{\bf d}}-\sum\limits_{k'\in[0,K)\setminus \{k\}}\mathbf{D}_{k,k'}\mathbf{S}_0\mathbf{W}_{d_{k'}}
 \end{array}
\right).
\end{eqnarray}
\end{remark}

The following example demonstrate how the scheme works.
\begin{example}
\label{exm3}
Let $K=N=6$, $M=3$ and $F=4$. We have $\frac{FM}{N}=2$. For $k=0,1,\ldots$, $5$, define $\mathbf{S}'_k=\mathbf{S}_k$ and $\mathbf{A}_k$ in the following way.
\begin{equation*}
\mathbf{S}_0=\left(
\begin{array}{c}
{\bf e}_0\\ {\bf e}_2
\end{array}
\right)\ \
\mathbf{S}_1=\left(
\begin{array}{c}
{\bf e}_1\\ {\bf e}_3
\end{array}
\right)\ \
\mathbf{S}_2=\left(
\begin{array}{c}
{\bf e}_0\\ {\bf e}_1
\end{array}
\right)\  \
\mathbf{S}_3=\left(
\begin{array}{c}
{\bf e}_2 \\ {\bf e}_3
\end{array}
\right)\ \
\mathbf{S}_4=\left(
\begin{array}{c}
{\bf e}_0+{\bf e}_1\\  {\bf e}_2+{\bf e}_3
\end{array}
\right)\ \
\mathbf{S}_5=\left(
\begin{array}{c}
{\bf e}_0+{\bf e}_2\\ {\bf e}_1+{\bf e}_3
\end{array}
\right)\ \
\end{equation*}
\begin{equation*}
\mathbf{A}_0=\left(
\begin{array}{c}
{\bf e}_1\\ {\bf e}_1\\ {\bf e}_3\\ {\bf e}_3
\end{array}
\right)\ \
\mathbf{A}_1=\left(
\begin{array}{c}
{\bf e}_0\\ {\bf e}_0\\ {\bf e}_2\\ {\bf e}_2
\end{array}
\right)\ \
\mathbf{A}_2=\left(
\begin{array}{c}
{\bf e}_2\\ {\bf e}_3\\ {\bf e}_2\\ {\bf e}_3
\end{array}
\right)\ \
\mathbf{A}_3=\left(
\begin{array}{c}
{\bf e}_0 \\ {\bf e}_1\\ {\bf e}_0\\ {\bf e}_1
\end{array}
\right)\ \
\mathbf{A}_4=\left(
\begin{array}{c}
{\bf e}_0\\ 0\\ {\bf e}_2\\ 0
\end{array}
\right)\ \
\mathbf{A}_5=\left(
\begin{array}{c}
{\bf e}_0\\ {\bf e}_1\\ 0 \\ 0
\end{array}
\right)
\end{equation*}
Here $\mathbf{e}_s$ is a row vector with length $F$ where the $s$th entry is $1$ and other entries are $0$s, $0\leq j<F$.
It is easy to check that \eqref{Eqn rank condition} holds for each integer $k\in [0,6)$.
A $4$-division $(6,3,6)$ coded caching system can be implemented as follows, where $\mathbf{W}_i = (w_i^0, w_i^1, w_i^2, w_i^3)^T, w_i^j \in \mathbb{F}_p$:
\begin{itemize}
\item  \textbf{Placement Phase}:   The contents in each users are
       \begin{align*}
       \mathcal{Z}_0=\left\{\mathbf{S}_0\mathbf{W}_{i}:i\in[0,6)\right\}~~~~ \mathcal{Z}_1=\left\{\mathbf{S}_1\mathbf{W}_{i}:i\in[0,6)\right\}\notag\\
       \mathcal{Z}_2=\left\{\mathbf{S}_2\mathbf{W}_{i}:i\in[0,6)\right\}~~~~ \mathcal{Z}_3=\left\{\mathbf{S}_3\mathbf{W}_{i}:i\in[0,6)\right\}\notag\\
       \mathcal{Z}_4=\left\{\mathbf{S}_4\mathbf{W}_{i}:i\in[0,6)\right\}~~~~ \mathcal{Z}_5=\left\{\mathbf{S}_5\mathbf{W}_{i}:i\in[0,6)\right\}\notag
       \end{align*}
 \item \textbf{Delivery Phase}: Assume the request vector $\mathbf{d}=(0,1,\cdots,5)$. Then the server just broadcasts
 \begin{eqnarray*}
\mathbf{X}_{{\bf d}}=\mathbf{A}_0\mathbf{W}_0+\mathbf{A}_1\mathbf{W}_1+\mathbf{A}_2\mathbf{W}_2+\mathbf{A}_3\mathbf{W}_3+\mathbf{A}_4\mathbf{W}_4+
\mathbf{A}_5\mathbf{W}_5
\end{eqnarray*}
\end{itemize}
Now let us consider user $0$ first. User $0$ can obtain the following content by $\mathbf{S}'_0$ and $\mathbf{X}_{{\bf d}}$.
\begin{eqnarray*}
\mathbf{S}'_0\mathbf{X}_{{\bf d}}&=&\mathbf{S}_0\mathbf{A}_0\mathbf{W}_0+\mathbf{S}_0\mathbf{A}_1\mathbf{W}_1+\mathbf{S}_0\mathbf{A}_2\mathbf{W}_2+\mathbf{S}_0\mathbf{A}_3\mathbf{W}_3+\mathbf{S}_0\mathbf{A}_4\mathbf{W}_4+\mathbf{S}_0\mathbf{A}_5\mathbf{W}_5\\
&=& {{\bf e}_1\choose {\bf e}_3}\mathbf{W}_0+{{\bf e}_0\choose {\bf e}_2}\mathbf{W}_1+{{\bf e}_2\choose {\bf e}_2}\mathbf{W}_2+{{\bf e}_0\choose {\bf e}_0}\mathbf{W}_3+{{\bf e}_0\choose {\bf e}_2}\mathbf{W}_4+{{\bf e}_0\choose 0}\mathbf{W}_5
\end{eqnarray*}
That is,
\begin{eqnarray*}
{{\bf e}_1\choose {\bf e}_3}\mathbf{W}_0&=&\mathbf{S}'_0\mathbf{X}_{{\bf d}}-{{\bf e}_0\choose {\bf e}_2}\mathbf{W}_1-{{\bf e}_2\choose {\bf e}_2}\mathbf{W}_2-{{\bf e}_0\choose {\bf e}_0}\mathbf{W}_3-{{\bf e}_0\choose {\bf e}_2}\mathbf{W}_4-{{\bf e}_0\choose 0}\mathbf{W}_5.
\end{eqnarray*}
Then we could get matrices
\begin{eqnarray*}
\mathbf{D}_{0,1}=\left(\begin{matrix}1&0\\ 0&1\end{matrix}\right),\
\mathbf{D}_{0,2}=\left(\begin{matrix}0&1\\ 0&1\end{matrix}\right),\
\mathbf{D}_{0,3}=\left(\begin{matrix}1&0\\ 1&0\end{matrix}\right),\
\mathbf{D}_{0,4}=\left(\begin{matrix}1&0\\ 0&1\end{matrix}\right),\
\mathbf{D}_{0,5}=\left(\begin{matrix}1&0\\ 0&0\end{matrix}\right)
\end{eqnarray*}
satisfying \eqref{eq-decoding_matrix_1}. So we have
\begin{eqnarray*}
{{\bf e}_1\choose {\bf e}_3}\mathbf{W}_0={w^1_0\choose w^3_0}=\mathbf{S}'_0\mathbf{X}_{{\bf d}}-\mathbf{D}_{0,1}\mathbf{S}_0\mathbf{W}_1-\mathbf{D}_{0,2}\mathbf{S}_0\mathbf{W}_2-\mathbf{D}_{0,3}\mathbf{S}_0\mathbf{W}_3-
\mathbf{D}_{0,4}\mathbf{S}_0\mathbf{W}_4-\mathbf{D}_{0,5}\mathbf{S}_0\mathbf{W}_5.
\end{eqnarray*}
By \eqref{eq-decoding-2}, User $0$ can get
\[\mathbf{W}_0=\left(
\begin{array}{c}
    \mathbf{S}_0 \\
    \mathbf{S}_0\mathbf{A}_{0}
  \end{array}\right)^{-1}
\left(
\begin{array}{c}
    w^0_0 \\
    w^2_0\\
    w^1_0\\
    w^3_0
  \end{array}\right)
= \left(
\begin{array}{c}
    {\bf e}_0 \\
    {\bf e}_2\\
    {\bf e}_1\\
    {\bf e}_3
  \end{array}\right)^{-1}\left(
\begin{array}{c}
    w^0_0 \\
    w^2_0\\
    w^1_0\\
    w^3_0
  \end{array}\right) = \left(
\begin{array}{c}
    w^0_0 \\
    w^1_0\\
    w^2_0\\
    w^3_0
  \end{array}\right).
\]
Similarly we can show that the other users' requests can be all satisfied.
\end{example}

Now we indicate that the linear coded caching system is a generalization of the PDA. We have the following theorem which
is proved in Appendix.

\begin{theorem}
\label{th-PDA-Linear}
The coded caching scheme obtained by a PDA is also linear.
\end{theorem}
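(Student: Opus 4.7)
The plan is to construct, from a given $(K,F,Z,S)$ PDA $\mathbf{P}=(p_{j,k})$, three families of matrices built out of standard-basis row vectors, and then to verify that they literally reproduce Algorithm \ref{alg:PDA} and satisfy the rank identities of Theorem \ref{th-main}. Specifically, I would take $\mathbf{S}_k$ to be the $Z\times F$ matrix whose rows are the $\mathbf{e}_j$ for each $j$ with $p_{j,k}=*$; take $\mathbf{A}_k$ to be the $S\times F$ matrix whose $s$-th row is $\mathbf{e}_j$ when $p_{j,k}=s$ (such a $j$ is unique by the PDA axiom, since two equal integer entries must lie in distinct columns) and is the zero row otherwise; and take $\mathbf{S}'_k$ to be the $(F-Z)\times S$ matrix selecting exactly those coordinates indexed by the $F-Z$ integer symbols appearing in column $k$ of $\mathbf{P}$. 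These dimensions are consistent: the non-$*$ entries in a single column of $\mathbf{P}$ are pairwise distinct (if two coincided they would occupy the same column, violating the PDA axiom), so their count is $F-Z$, matching $FM/N=Z$ and $RF=S$.

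Next I would check that this linear scheme implements Algorithm \ref{alg:PDA} verbatim. On the placement side, $\mathbf{S}_k\mathbf{W}_n$ stacks precisely the packets $\{\mathbf{W}_{n,j}:p_{j,k}=*\}$ cached by user $k$ in Line~4. On the delivery side, the $s$-th row of $\mathbf{X}_{\mathbf{d}}=\sum_k\mathbf{A}_k\mathbf{W}_{d_k}$ evaluates to $\sum_{k:\,\exists j,\,p_{j,k}=s}\mathbf{W}_{d_k,j}$, which is exactly the $s$-th coded broadcast in Line~9.

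The heart of the argument is the verification of \eqref{Eqn rank condition}. When $k'=k$, the rows of $\mathbf{S}_k$ are the $\mathbf{e}_j$ for $j$ with $p_{j,k}=*$, while the nonzero rows of $\mathbf{S}'_k\mathbf{A}_k$ are the $\mathbf{e}_j$ for $j$ with $p_{j,k}$ an integer; together they exhaust $\{\mathbf{e}_0,\ldots,\mathbf{e}_{F-1}\}$, so the block matrix has rank $F$. When $k'\ne k$, consider any nonzero row of $\mathbf{S}'_k\mathbf{A}_{k'}$: it has the form $\mathbf{e}_{j_2}$ coming from an index $s$ with $p_{j_1,k}=s$ for some $j_1$ (which is why $s$ is selected by $\mathbf{S}'_k$) and $p_{j_2,k'}=s$. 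The PDA axiom applied to the two equal integer entries $p_{j_1,k}=p_{j_2,k'}=s$ forces $p_{j_2,k}=*$, so $\mathbf{e}_{j_2}$ already lies in the row space of $\mathbf{S}_k$. Hence the block matrix has rank $Z=FM/N$, and Theorem \ref{th-main} applies.

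The proof is essentially a translation between the combinatorial language of PDAs and the linear-algebraic language of Section \ref{se-new character}, with no serious obstacle. The only substantive use of the PDA axiom occurs in two places: first to ensure that the integer symbols within a single column are distinct (giving the correct row count for $\mathbf{S}'_k$), and second to guarantee the compatibility condition $p_{j_2,k}=*$ that collapses the cross-user rank to $Z$. Both are immediate from the definition, so assembling these pieces yields the claim.
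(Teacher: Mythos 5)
Your construction of $\mathbf{S}_k$, $\mathbf{A}_k$, and $\mathbf{S}'_k$ from the PDA is exactly the one the paper uses in Appendix A (rows $\mathbf{e}_j$ at the $*$-positions for caching, the $0/1$ indicator array $\chi_s(p_{j,k})$ for coding, and row selection at the integer positions for decoding), and your verification of \eqref{Eqn rank condition} via the PDA axiom is correct. In fact you spell out the rank check that the paper dismisses as ``readily checked,'' so the proposal matches the paper's proof and is, if anything, slightly more complete.
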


The following example gives a demonstration.

\begin{example}
\label{exm4}
It is easy to check that in Example \ref{exam2} the coded caching scheme realized by a $(6,4,2,4)$ PDA can be represented by the following matrices $\mathbf{S}_k=\mathbf{S}'_k$ and $\mathbf{A}_k$, $k=0$, $1$, $\ldots$, $5$.
\begin{eqnarray*}
\mathbf{S}_0=\left(\begin{array}{c}
 {\bf e}_0\\
 {\bf e}_2
\end{array}
\right)\
\mathbf{S}_1=\left(\begin{array}{c}
 {\bf e}_1\\
 {\bf e}_3
\end{array}
\right)\
\mathbf{S}_2=\left(\begin{array}{c}
 {\bf e}_0\\
 {\bf e}_1
\end{array}
\right)\
\mathbf{S}_3=\left(\begin{array}{c}
 {\bf e}_2\\
 {\bf e}_3
\end{array}
\right)\
\mathbf{S}_4=\left(\begin{array}{c}
 {\bf e}_0\\
 {\bf e}_3
\end{array}
\right)\
\mathbf{S}_5=\left(\begin{array}{c}
 {\bf e}_1\\
 {\bf e}_2
\end{array}
\right)
\end{eqnarray*}
\begin{eqnarray*}
\mathbf{A}_0=\left(\begin{array}{cccc}
 0&  1&  0& 0\\
 0&  0& 0&  0\\
 0&  0&  0& 1\\
 0&  0&  0& 0
 \end{array}
\right)\ \
\mathbf{A}_1=\left(\begin{array}{cccc}
 0&  0&  0& 0\\
 1&  0& 0&  0\\
 0&  0&  0& 0\\
 0&  0&  1& 0
 \end{array}
\right)\ \
\mathbf{A}_2=\left(\begin{array}{cccc}
 0&  0& 1& 0\\
 0&  0& 0& 1\\
 0&  0&  0& 0\\
 0&  0&  0& 0
 \end{array}
\right)\\[0.2cm]
\mathbf{A}_3=\left(\begin{array}{cccc}
 0&  0&  0& 0\\
 0&  0& 0&  0\\
 1&  0&  0& 1\\
 0& 1 &  0& 0
 \end{array}
\right)\ \
\mathbf{A}_4=\left(\begin{array}{cccc}
 0&  0& 0& 0\\
 0&  1& 0&  0\\
 0&  0& 1& 1\\
 0&  0&  0& 0
 \end{array}
\right)\ \
\mathbf{A}_5=\left(\begin{array}{cccc}
 1&  0&  0& 0\\
 0&  0& 0&  0\\
 0&  0&  0& 0\\
 0&  0&  0& 1
 \end{array}
\right)
\end{eqnarray*}
It is easy to check that \eqref{Eqn rank condition} holds for each integer $k\in [0,5]$.
\end{example}

\section{A New construction of linear coded caching scheme}
\label{sec-new-construction}

From Theorem \ref{th-main}, we  need to find the three classes of matrices $\mathbf{S}_k$, $\mathbf{A}_k$ and $\mathbf{S}'_k$, $k\in[0,K)$
satisfying the rank conditions.
In this section, we give one concrete  construction. There should be different constructions and one interesting open problem is finding some better
constructions.

Throughout this paper we do not specifically distinguish a matrix and the vector space spanned by its rows if the context is clear. Note that
 \eqref{Eqn rank condition} holds if and only if the following formula holds.
\begin{eqnarray}\label{Eqn space condition}
\left\{
\begin{array}{ll}
\mathbf{S}'_{k}\mathbf{A}_{k'}\subseteq\mathbf{S}_k &\mbox{if}\ \    k\neq k'\\
\mathbf{S}_k+\mathbf{S}'_{k}\mathbf{A}_{k'}=\mathbb{F}_2^F&  \mbox{if}\ \    k= k'
\end{array}
\right.\ \ \  k,k'\in [1,K)
\end{eqnarray}
Here the sum of two subspace $\mathcal{U}$, $\mathcal{V}$ of $\mathbb{F}^F_2$ is defined as $\mathcal{U}+\mathcal{V}=\{{\bf u}+{\bf v}\ |\ {\bf u}\in\mathcal{U},{\bf v}\in\mathcal{V}\}$.

In this section, we will focus on the scheme over $\mathbb{F}_2^F$. For simplicity, we assume that each file has the size of $F$ bits. 
\subsection{Intuitions}
From the proof of Theorem \ref{th-PDA-Linear}, we known that a PDA can be represented by caching matrices, coding matrices and decoding matrices satisfying. So we may construct a linear coded caching schemes by borrowing some similar key structures which are used in constructions
of  PDAs. But we should note here that there should be other methods to construct linear code caching schemes.

Now let us construct caching matrices, coding matrices and decoding matrices satisfying \eqref{Eqn space condition}. For any fixed integers $m$ and $q$, we can denote each integer $s\in[0,q^m)$ by $s=(s_0,\cdots, s_{m-1} )_q$ if $s=\sum_{l=0}^{m-1}s_lq^l$ where $0\leq l<m$ and $s_l\in[0,q)$. We refer to $s=(s_0,\cdots, s_{m-1})_q$ as the {\em $q$-ary representation\/} of $s$.
There are $m$ partitions of $[0,q^m)$, \emph{i.e.}, for each $0\leq u<m$, the $u$th partition is
\begin{eqnarray}\label{Eqn_Partition_3}
\mathcal{V}_{u,v}=\left\{s=(s_0,\cdots, s_{m-1})_q\ \left|\ s_{u}=v, s\in[0,q^m)\right.\right\}, \ \  0\le v<q.
\end{eqnarray}
which are the placement sets to construct PDA in \cite{YCTC}.
It is easy to check that the following formula holds for any integers $u_1$, $u_2$ and $v_1$, $v_2$ with $(u_1,v_1)\neq (u_2,v_2)$.
\begin{eqnarray*}\label{Eqn partition}
\left|\mathcal{V}_{u_1,v_1}\bigcap\mathcal{V}_{u_2,v_2}\right|=\left\{
\begin{array}{ll}
0 &\mbox{if}\ \    u_1=u_2, v_1\neq v_2\\
q^{m-2}&  \mbox{if}\ \    u_1\neq u_2
\end{array}
\right.
\end{eqnarray*}
Let ${\bf e}_s$ be a $q^m$ length row vector where the $s$th entry is $1$ and other entries are $0$s. Define
\begin{eqnarray}\label{Eqn_base vector1}
\mathcal{E}_{u,v}=\{{\bf e}_{s}\ |\ s\in \mathcal{V}_{u,v}\}
\end{eqnarray}
and
\begin{eqnarray}\label{Eqn_sum vector1}
\mathcal{Q}_{u}=\left\{\sum_{s_u=0}^{q-1}{\bf e}_{(s_0,\cdots, s_{m-1})_q}\ \Big|\ s_j\in [0,q), j\neq u \right\}
\end{eqnarray}
where the sum is performed under modulo $q$.
For each $u$, $v$ and $v'$ with $v\neq v'$, define
\begin{eqnarray}
\label{eq-cons1-2}
\mathbf{C}_{u,v,v'}=\left(
\begin{matrix}
\phi_{u,v}(0){\bf e}_{\varphi_{u,v'}(0)}\\
\phi_{u,v}(1){\bf e}_{\varphi_{u,v'}(1)}\\
\ldots\\
\phi_{u,v}(q^m-1){\bf e}_{{\varphi_{u,v'}(q^m-1)}}
\end{matrix}\right)+\left(\label{eq-cons1-2}
\begin{matrix}
\phi_{u,v'}(0){\bf e}_0\\
\phi_{u,v'}(1){\bf e}_1\\
\ldots\\
\phi_{u,v'}(q^m-1){\bf e}_{q^m-1}
\end{matrix}\right)
\end{eqnarray}
and
\begin{eqnarray}\label{eq-cons1-3}
\mathbf{C}_{u,q,v}=\left(
\begin{matrix}
\phi_{u,v}(0){\bf e}_0\\
\phi_{u,v}(1){\bf e}_1\\
\ldots\\
\phi_{u,v}(q^m-1){\bf e}_{q^m-1}
\end{matrix}\right)
\end{eqnarray}
where
\begin{eqnarray}
\phi_{u,v}(s)=\left\{\label{eq-fuct2}
\begin{array}{cc}
1& \hbox{if}\ s\in \mathcal{V}_{u,v}\\
0  &  \hbox{otherwise}\ \ \
\end{array}\right.\ \ \ \ \ \
\varphi_{u,v}(s)=(s_0,\ldots,s_{u-1},v , s_{u+1},\ldots,s_{m-1})_q.
\end{eqnarray}

\

\begin{example}
\label{exam6}
Let $q=3$ and $m=2$. From \eqref{Eqn_base vector1} and \eqref{Eqn_sum vector1} we have
\begin{eqnarray*}\label{Eqn_base 32vector}
\begin{split}
\mathcal{E}_{0,0}=\{{\bf e}_{0},\ {\bf e}_{3},\ {\bf e}_{6}\},& \ \ \ \ \
\mathcal{E}_{0,1}=\{{\bf e}_{1},\ {\bf e}_{4},\ {\bf e}_{7}\},& \ \ \ \ \
\mathcal{E}_{0,2}=\{{\bf e}_{2},\ {\bf e}_{5},\ {\bf e}_{8}\},\\
\mathcal{E}_{1,0}=\{{\bf e}_{0},\ {\bf e}_{1},\ {\bf e}_{2}\},& \ \ \ \ \
\mathcal{E}_{1,1}=\{{\bf e}_{3},\ {\bf e}_{4},\ {\bf e}_{5}\}& \ \ \ \ \
\mathcal{E}_{1,2}=\{{\bf e}_{6},\ {\bf e}_{7},\ {\bf e}_{8}\}
\end{split}
\end{eqnarray*}
and
\begin{eqnarray*}\label{Eqn_sum 32vector}
\begin{split}
&\mathcal{Q}_{0}=\{{\bf e}_{0}+{\bf e}_{1}+{\bf e}_{2},\ {\bf e}_{3}+ {\bf e}_{4}+{\bf e}_{5},\ {\bf e}_{6}+{\bf e}_{7}+{\bf e}_8\}, \\
&\mathcal{Q}_{1}=\{{\bf e}_{0}+{\bf e}_{3}+{\bf e}_{6},\ {\bf e}_{1}+{\bf e}_{4}+{\bf e}_{7},\ {\bf e}_{2}+{\bf e}_{5}+{\bf e}_{8}\}.
\end{split}
\end{eqnarray*}
\end{example}

By \eqref{eq-cons1-2} and \eqref{eq-cons1-3}, we have

\begin{eqnarray}
\label{eq-CM-q=3}
\begin{small}
\begin{split}
&&\mathbf{C}_{0,0,1}=\left(\begin{matrix}{\bf e}_1\\[-0.1cm] {\bf e}_1\\[-0.1cm] 0\\[-0.1cm] {\bf e}_4\\[-0.1cm] {\bf e}_4\\[-0.1cm] 0\\[-0.1cm] {\bf e}_7\\[-0.1cm] {\bf e}_7\\[-0.1cm] 0 \end{matrix}\right),
\mathbf{C}_{0,0,2}=\left(\begin{matrix}{\bf e}_2\\[-0.1cm] 0\\[-0.1cm] {\bf e}_2\\[-0.1cm] {\bf e}_5\\[-0.1cm] 0\\[-0.1cm] {\bf e}_5\\[-0.1cm] {\bf e}_8\\[-0.1cm] 0\\[-0.1cm] {\bf e}_8 \end{matrix}\right),
\mathbf{C}_{0,1,0}=\left(\begin{matrix}{\bf e}_0\\[-0.1cm] {\bf e}_0\\[-0.1cm] 0\\[-0.1cm] {\bf e}_3\\[-0.1cm] {\bf e}_3\\[-0.1cm] 0\\[-0.1cm] {\bf e}_6\\[-0.1cm] {\bf e}_6\\[-0.1cm] 0 \end{matrix}\right),
\mathbf{C}_{1,1,2}=\left(\begin{matrix}0\\[-0.1cm] {\bf e}_2\\[-0.1cm] {\bf e}_2\\[-0.1cm] 0\\[-0.1cm] {\bf e}_5\\[-0.1cm] {\bf e}_5\\[-0.1cm] 0\\[-0.1cm] {\bf e}_8\\[-0.1cm] {\bf e}_8 \end{matrix}\right),
\mathbf{C}_{0,2,0}=\left(\begin{matrix}{\bf e}_{0}\\[-0.1cm] 0\\[-0.1cm] {\bf e}_{0}\\[-0.1cm] {\bf e}_3\\[-0.1cm]0\\[-0.1cm] {\bf e}_3 \\[-0.1cm] {\bf e}_6 \\[-0.1cm] 0\\[-0.1cm] {\bf e}_6 \end{matrix}\right),
\mathbf{C}_{0,2,1}=\left(\begin{matrix}0\\[-0.1cm] {\bf e}_{1}\\[-0.1cm] {\bf e}_1\\[-0.1cm] 0\\[-0.1cm] {\bf e}_4\\[-0.1cm] {\bf e}_4 \\[-0.1cm] 0 \\[-0.1cm] {\bf e}_7\\[-0.1cm] {\bf e}_7 \end{matrix}\right),\\
&&\mathbf{C}_{0,3,0}=\left(\begin{matrix}{\bf e}_{0}\\[-0.1cm] 0\\[-0.1cm] 0\\[-0.1cm] {\bf e}_3\\[-0.1cm] 0\\[-0.1cm] 0 \\[-0.1cm] {\bf e}_6 \\[-0.1cm] 0\\[-0.1cm] 0 \end{matrix}\right),
\mathbf{C}_{0,3,1}=\left(\begin{matrix}0\\[-0.1cm] {\bf e}_{1}\\[-0.1cm] 0\\[-0.1cm] 0\\[-0.1cm] {\bf e}_4\\[-0.1cm] 0 \\[-0.1cm] 0 \\[-0.1cm] {\bf e}_7\\[-0.1cm] 0 \end{matrix}\right),
\mathbf{C}_{1,0,1}=\left(\begin{matrix}{\bf e}_3\\[-0.1cm] {\bf e}_4\\[-0.1cm] {\bf e}_5\\[-0.1cm] {\bf e}_3\\[-0.1cm] {\bf e}_4\\[-0.1cm] {\bf e}_5\\[-0.1cm] 0\\[-0.1cm] 0\\[-0.1cm] 0 \end{matrix}\right),
\mathbf{C}_{1,0,2}=\left(\begin{matrix}{\bf e}_6\\[-0.1cm] {\bf e}_7\\[-0.1cm] {\bf e}_8\\[-0.1cm] 0\\[-0.1cm] 0\\[-0.1cm] 0\\[-0.1cm] {\bf e}_6\\[-0.1cm] {\bf e}_7\\[-0.1cm] {\bf e}_8 \end{matrix}\right),
\mathbf{C}_{1,1,0}=\left(\begin{matrix}{\bf e}_0\\[-0.1cm] {\bf e}_1\\[-0.1cm] {\bf e}_2\\[-0.1cm]{\bf e}_0\\[-0.1cm] {\bf e}_1\\[-0.1cm] {\bf e}_2\\[-0.1cm] 0\\[-0.1cm] 0\\[-0.1cm] 0\end{matrix}\right),
\mathbf{C}_{1,1,2}=\left(\begin{matrix}0\\[-0.1cm] 0\\[-0.1cm] 0\\[-0.1cm] {\bf e}_6\\[-0.1cm] {\bf e}_7\\[-0.1cm] {\bf e}_8\\[-0.1cm] {\bf e}_6\\[-0.1cm] {\bf e}_7\\[-0.1cm] {\bf e}_8\end{matrix}\right),\\
&&\mathbf{C}_{1,2,0}=\left(\begin{matrix}{\bf e}_0 \\[-0.1cm] {\bf e}_1\\[-0.1cm] {\bf e}_2\\[-0.1cm] 0\\[-0.1cm] 0\\[-0.1cm] 0 \\[-0.1cm] {\bf e}_{0}\\[-0.1cm] {\bf e}_{1}\\[-0.1cm] {\bf e}_{2}\end{matrix}\right),
\mathbf{C}_{1,2,1}=\left(\begin{matrix}0 \\[-0.1cm] 0\\[-0.1cm] 0\\[-0.1cm] {\bf e}_3\\[-0.1cm] {\bf e}_4\\[-0.1cm] {\bf e}_5 \\[-0.1cm] {\bf e}_{3}\\[-0.1cm] {\bf e}_{4}\\[-0.1cm] {\bf e}_{5}\end{matrix}\right),
\mathbf{C}_{1,3,0}=\left(\begin{matrix}{\bf e}_0 \\[-0.1cm] {\bf e}_1\\[-0.1cm] {\bf e}_2 \\[-0.1cm] 0\\[-0.1cm] 0\\[-0.1cm] 0\\[-0.1cm] 0\\[-0.1cm] 0\\[-0.1cm] 0\end{matrix}\right),
\mathbf{C}_{1,3,1}=\left(\begin{matrix}0 \\[-0.1cm] 0\\[-0.1cm] 0 \\[-0.1cm] {\bf e}_{3}\\[-0.1cm] {\bf e}_{4}\\[-0.1cm] {\bf e}_{5}\\[-0.1cm] 0\\[-0.1cm] 0\\[-0.1cm] 0\end{matrix}\right)\ \ \ \ \ \ \ \ \ \ \ \ \ \ \ \ \ \ \ \ \ \ \ \ \ \ \ \ \ \ \ \ \ \ \ \ \ \ \ \ \ \ \
\end{split}
\end{small}
\end{eqnarray}

The following result is very useful for our new construction of coded caching scheme.
\begin{lemma}
\label{lem1}
Given positive integers $m$ and $q\geq 2$, sets and matrices in \eqref{Eqn_base vector1}, \eqref{Eqn_sum vector1}, \eqref{eq-cons1-2} and \eqref{eq-cons1-3} satisfy the following conditions for any integers $u_1$, $u_2\in [0,m)$ and any integers $v_1$, $v_2\in [0,q)$, $v_3\in[0,q]$.
\begin{itemize}
\item[(I)] If $u_1\neq u_2$, $\mathcal{E}_{u_1,v_1}\mathbf{C}_{u_2,v_3,v_2}\subseteq\mathcal{E}_{u_1,v_1}$.
\item[(II)] If $v_1=v_3$, $\mathcal{E}_{u_1,v_1}\mathbf{C}_{u_1,v_3,v_2}=\mathcal{E}_{u_1,v_2}$. Otherwise, if $v_1\neq v_3$, $\mathcal{E}_{u_1,v_1}\mathbf{C}_{u_1,v_3,v_2}\subseteq\mathcal{E}_{u_1,v_1}$.
\item[(III)] If $u_1=u_2$ and $v_3=q$, $\mathcal{Q}_{u_1}\mathbf{C}_{u_2,v_3,v_2}=\mathcal{E}_{u_1,v_2}$. Otherwise, if $u_1\neq u_2$ or $v_3\neq q$, $\mathcal{Q}_{u_1}\mathbf{C}_{u_2,v_3,v_2}\subseteq\mathcal{Q}_{u_1}$.
\end{itemize}
\end{lemma}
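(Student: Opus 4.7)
The approach is to reduce everything to the action of $\mathbf{C}_{u_2,v_3,v_2}$ on individual basis vectors $\mathbf{e}_s$, since $\mathbf{e}_s\mathbf{C}$ simply picks out the $s$-th row of $\mathbf{C}$. From \eqref{eq-cons1-2} and \eqref{eq-cons1-3}, that row equals
\[
\phi_{u_2,v_3}(s)\,\mathbf{e}_{\varphi_{u_2,v_2}(s)} + \phi_{u_2,v_2}(s)\,\mathbf{e}_s \quad (\text{if } v_3<q),
\]
or just $\phi_{u_2,v_2}(s)\mathbf{e}_s$ (if $v_3=q$). The key structural observation from \eqref{eq-fuct2} is that $\phi_{u_2,v}(s)$ depends only on $s_{u_2}$, and $\varphi_{u_2,v}(s)$ modifies only the $u_2$-th coordinate. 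Once this is noted, each of (I)--(III) is a case analysis on how $u_1,u_2$ and $v_1,v_2,v_3$ relate.

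For (I), I would fix $\mathbf{e}_s\in\mathcal{E}_{u_1,v_1}$, i.e.\ $s_{u_1}=v_1$, with $u_1\neq u_2$. Both $\mathbf{e}_s$ and $\mathbf{e}_{\varphi_{u_2,v_2}(s)}$ retain $u_1$-th coordinate $v_1$, hence belong to $\mathcal{E}_{u_1,v_1}$, and the claim follows. For (II), I would take $u_1=u_2$ and split on whether $v_1=v_3$: if so, $\phi_{u_1,v_3}(s)=1$ while $\phi_{u_1,v_2}(s)=0$ (because $v_2\neq v_3=v_1$), so the row equals $\mathbf{e}_{\varphi_{u_1,v_2}(s)}$, and as $s$ ranges over $\mathcal{V}_{u_1,v_1}$ the index $\varphi_{u_1,v_2}(s)$ ranges bijectively over $\mathcal{V}_{u_1,v_2}$, giving equality with $\mathcal{E}_{u_1,v_2}$. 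If instead $v_1\neq v_3$, then $\phi_{u_1,v_3}(s)=0$ and the row is either $\mathbf{e}_s$ (when $v_1=v_2$) or $0$ (when $v_1\neq v_2$), both in $\mathcal{E}_{u_1,v_1}$.

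For (III), I would parametrize $\mathbf{q}=\sum_{v=0}^{q-1}\mathbf{e}_{t_v}\in\mathcal{Q}_{u_1}$ where $t_v$ has $u_1$-th coordinate $v$ and all other coordinates fixed to some tuple $(a_j)_{j\neq u_1}$. In the distinguished case $u_1=u_2$, $v_3=q$, only the term with $v=v_2$ survives, giving $\mathbf{q}\mathbf{C}_{u_1,q,v_2}=\mathbf{e}_{t_{v_2}}$; letting $\mathbf{q}$ vary, these outputs span $\mathcal{E}_{u_1,v_2}$. In the subcase $u_1=u_2$, $v_3<q$, the $v=v_3$ term contributes $\mathbf{e}_{\varphi_{u_1,v_2}(t_{v_3})}=\mathbf{e}_{t_{v_2}}$ and the $v=v_2$ term contributes $\mathbf{e}_{t_{v_2}}$; these two identical vectors cancel in $\mathbb{F}_2$, giving $0\in\mathcal{Q}_{u_1}$. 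In the subcase $u_1\neq u_2$, both $\phi_{u_2,v_3}(s)$ and $\phi_{u_2,v_2}(s)$ are constant across the summands of $\mathbf{q}$ (they depend only on $a_{u_2}$), so the outcome is $0$, $\mathbf{q}$ itself, or the element of $\mathcal{Q}_{u_1}$ obtained from $\mathbf{q}$ by replacing $a_{u_2}$ with $v_2$.

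The main obstacle is the $\mathbb{F}_2$ cancellation in the $u_1=u_2$, $v_3<q$ subcase of (III): the proof genuinely needs the observation that $\varphi_{u_1,v_2}(t_{v_3})=t_{v_2}$ coincides with the index produced by the diagonal term at $v=v_2$, and that working in characteristic $2$ is what makes the sum vanish rather than double. Elsewhere the calculations are bookkeeping on coordinates, so I expect (III) to carry the substantive content of the lemma.
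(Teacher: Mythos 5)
Your proof is correct and follows essentially the same route as the paper's Appendix~B: both arguments reduce everything to the rows $\phi_{u_2,v_3}(s)\mathbf{e}_{\varphi_{u_2,v_2}(s)}+\phi_{u_2,v_2}(s)\mathbf{e}_s$ of $\mathbf{C}_{u_2,v_3,v_2}$ and do a case analysis on how $s_{u_1}$ and $s_{u_2}$ interact with $v_1,v_2,v_3$, with the $\mathbb{F}_2$ cancellation $\varphi_{u_1,v_2}(t_{v_3})=t_{v_2}$ carrying the substance of case (III) exactly as you identified. One point where you are actually more careful than the paper: in the subcase $u_1\neq u_2$, $v_3<q$ of (III) you correctly note the product can be a \emph{nonzero} element of $\mathcal{Q}_{u_1}$ (it is $\mathbf{q}$ itself when $a_{u_2}=v_2$, and $\mathbf{q}$ with $a_{u_2}$ replaced by $v_2$ when $a_{u_2}=v_3$), whereas the paper's \eqref{eq-10} handles all of $v_3<q$ at once and asserts the product equals $0$ — an overstatement that is harmless for the needed containment $\subseteq\mathcal{Q}_{u_1}$ but is only literally true when $u_1=u_2$.
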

The proof of Lemma \ref{lem1} is included in Appendix B. For any positive integer $z$ with $1\leq z<q$, let $h=\lfloor \frac{q-1}{q-z}\rfloor$. For any integer $v\in [0,q)$ and $\varepsilon\in [0,h)$, we can define $h$ sets
\begin{eqnarray}
\label{eq-Group1}
\begin{split}
\mathcal{G}_{v,\varepsilon}=\{v+1+\varepsilon(q-z),v+2+\varepsilon(q-z), \ldots, v+(q-z)+\varepsilon(q-z)\}_q,
\end{split}
\end{eqnarray}
Here $A_q=\{a\ (\hbox{mod}\ q)\ |\ a\in A\}$ for any given integer set $A$.

\begin{example}
\label{exam4}

\

\begin{itemize}
\item When $q=3$ and $z=2$, we have $h=\lfloor \frac{q-1}{q-z}\rfloor=2$. From \eqref{eq-Group1}, we have
\begin{eqnarray}
\label{eq-Group-m=3-q=3}
\begin{split}
&\mathcal{G}_{0,0}=\{1\},\ \ \ \mathcal{G}_{0,1}=\{2\},\ \ \ \mathcal{G}_{1,0}=\{2\},\ \ \ \mathcal{G}_{1,1}=\{0\},\ \ \ \mathcal{G}_{2,0}=\{0\},\ \ \ \mathcal{G}_{2,1}=\{1\}
\end{split}
\end{eqnarray}
\item When $q=8$ and $z=6$, we have $h=\lfloor \frac{q-1}{q-z}\rfloor=3$. By \eqref{eq-Group1}, we have
\begin{eqnarray*}
&&\mathcal{G}_{0,0}=\{1,2\},\ \ \ \mathcal{G}_{0,1}=\{3,4\},\ \ \ \mathcal{G}_{0,2}=\{5,6\},\\
&&\mathcal{G}_{1,0}=\{2,3\},\ \ \ \mathcal{G}_{1,1}=\{4,5\},\ \ \ \mathcal{G}_{1,2}=\{6,7\},\\
&&\mathcal{G}_{2,0}=\{3,4\},\ \ \ \mathcal{G}_{1,1}=\{5,6\},\ \ \ \mathcal{G}_{1,2}=\{7,0\},\\
&&\mathcal{G}_{3,0}=\{4,5\},\ \ \ \mathcal{G}_{1,1}=\{6,7\},\ \ \ \mathcal{G}_{1,2}=\{0,1\},\\
&&\mathcal{G}_{4,0}=\{5,6\},\ \ \ \mathcal{G}_{1,1}=\{7,0\},\ \ \ \mathcal{G}_{1,2}=\{1,2\},\\
&&\mathcal{G}_{5,0}=\{6,7\},\ \ \ \mathcal{G}_{1,1}=\{0,1\},\ \ \ \mathcal{G}_{1,2}=\{2,3\},\\
&&\mathcal{G}_{6,0}=\{7,0\},\ \ \ \mathcal{G}_{1,1}=\{1,2\},\ \ \ \mathcal{G}_{1,2}=\{3,4\},\\
&&\mathcal{G}_{7,0}=\{0,1\},\ \ \ \mathcal{G}_{1,1}=\{2,3\},\ \ \ \mathcal{G}_{1,2}=\{4,5\}.
\end{eqnarray*}
\end{itemize}
\end{example}
\subsection{New construction}
\begin{construction}
\label{con1}
Given integers $m\geq 1$, $q\geq 2$ and $z$ with $z<q$, each user $k$ is represented by tuples $(u,v,\varepsilon)$ and $(u,q,\varepsilon)$ in the following for convenience, where $0\leq u<m$, $0\leq v< q$ and $0\leq \varepsilon<\lfloor\frac{q-1}{q-z}\rfloor$. From \eqref{eq-cons1-2}, \eqref{eq-cons1-3} and \eqref{eq-Group1}, for any $(u,v,\varepsilon)$ and $(u,q,\varepsilon)$ we can construct the caching matrices as follows
\begin{eqnarray}
\label{eq-caching1}
\begin{split}
&\mathbf{S}_{u,v,\varepsilon}=\{\mathcal{E}_{u,v'}\ |\ v'\in [0,q)\setminus G_{v,\varepsilon}\},\\
&\mathbf{S}_{u,q,\varepsilon}=\mathcal{Q}_u\bigcup\{\mathcal{E}_{u,v'}\ |\ v'\in [0,q-1)\setminus G_{q-1,\varepsilon}\}.
\end{split}
\end{eqnarray}

The coding matrices are
\begin{eqnarray}
\label{eq-coding1}
\begin{split}
&\mathbf{A}_{u,v,\varepsilon}=\left(\begin{matrix}\mathbf{C}_{u,v,v_0}\\ \mathbf{C}_{u,v,v_1}\\ \vdots \\ \mathbf{C}_{u,v,v_{q-z}} \end{matrix}\right), v_i\in \mathcal{G}_{v,\varepsilon},\ \ \hbox{and}\ \
&\mathbf{A}_{u,q,\varepsilon}=\left(\begin{matrix}\mathbf{C}_{u,q,v'_0}\\ \mathbf{C}_{u,q,v'_1}\\ \vdots \\ \mathbf{C}_{u,q,v'_{q-z}} \end{matrix}\right), v'_i\in \mathcal{G}_{q-1,\varepsilon}.
\end{split}
\end{eqnarray}

And the decoding matrices are
\begin{eqnarray}
\label{eq-decoding1}
\mathbf{S}'_{u,v,\varepsilon}=diag(\underbrace{\mathcal{E}_{u,v},\ldots,\mathcal{E}_{u,v}}_{q-z}), \ \ \ \
\mathbf{S}'_{u,q,\varepsilon}=diag(\underbrace{\mathcal{Q}_{u},\ldots,\mathcal{Q}_{u}}_{q-z}).
\end{eqnarray}
\end{construction}

\begin{example}
\label{exm5}
We use the parameters $q$, $m$ in Example \ref{exam6}.
\begin{itemize}
\item When $z=1$, we have $h=1$. By \eqref{eq-caching1} and \eqref{eq-decoding1} we have caching matrices matrices
\begin{eqnarray*}
\begin{array}{ccc}
\mathbf{S}_{0,0,0}=\mathcal{E}_{0,0}
=\left(\begin{matrix}{\bf e}_0\\[-0.1cm] {\bf e}_3\\[-0.1cm] {\bf e}_6\end{matrix}\right),\ \ \ \ \ \ \ \ \ \ \ \ \ &
\mathbf{S}_{0,1,0}=\mathcal{E}_{0,1}
=\left(\begin{matrix}{\bf e}_1 \\[-0.1cm] {\bf e}_4\\[-0.1cm] {\bf e}_7\end{matrix}\right),\ \ \ \ \ \ \ \ \ \ \ \ \ &
\mathbf{S}_{0,2,0}=\mathcal{E}_{0,2}
=\left(\begin{matrix} {\bf e}_2\\[-0.1cm] {\bf e}_5\\[-0.1cm] {\bf e}_8 \end{matrix}\right), \\[0.5cm]
\mathbf{S}_{1,0,0}=\mathcal{E}_{1,0}
=\left(\begin{matrix}{\bf e}_0\\[-0.1cm] {\bf e}_1\\[-0.1cm] {\bf e}_2\end{matrix}\right),\ \ \ \ \ \ \ \ \ \ \ \ &
\mathbf{S}_{1,1,0}=\mathcal{E}_{1,1}
=\left(\begin{matrix}{\bf e}_3\\[-0.1cm] {\bf e}_4\\[-0.1cm] {\bf e}_5\end{matrix}\right),\ \ \ \ \ \ \ \ \ \ \ \ \ &
\mathbf{S}_{1,2,0}=\mathbf{Q}_{1,2}
=\left(\begin{matrix}{\bf e}_6\\[-0.1cm] {\bf e}_7\\[-0.1cm] {\bf e}_8\end{matrix}\right),\\[0.5cm]
\mathbf{S}_{0,3,0}=\mathcal{Q}_{0,3}=\left(\begin{matrix} {\bf e}_{0}+{\bf e}_{1}+{\bf e}_{2}\\[-0.1cm] {\bf e}_{3}+ {\bf e}_{4}+{\bf e}_{5}\\[-0.1cm] {\bf e}_{6}+{\bf e}_{7}+{\bf e}_8\end{matrix}\right),&
\mathbf{S}_{1,3,0}=\mathcal{Q}_{1,3}=\left(\begin{matrix}{\bf e}_{0}+{\bf e}_{3}+{\bf e}_{6}\\[-0.1cm] {\bf e}_{1}+{\bf e}_{4}+{\bf e}_{7}\\[-0.1cm] {\bf e}_{2}+{\bf e}_{5}+{\bf e}_{8}\end{matrix}\right),&
\end{array}
\end{eqnarray*}
coding matrices
\begin{eqnarray*}
\begin{array}{cccc}
\mathbf{A}_{0,0,0}=\left(\begin{matrix}\mathbf{C}_{0,0,1}\\ \mathbf{C}_{0,0,2} \end{matrix}\right),&
\mathbf{A}_{0,1,0}=\left(\begin{matrix}\mathbf{C}_{0,1,0}\\ \mathbf{C}_{0,1,2}\end{matrix}\right),&
\mathbf{A}_{0,2,0}=\left(\begin{matrix}\mathbf{C}_{0,1,0}\\ \mathbf{C}_{0,1,1} \end{matrix}\right),&
\mathbf{A}_{1,0,0}=\left(\begin{matrix}\mathbf{C}_{1,0,1}\\ \mathbf{C}_{1,0,2} \end{matrix}\right),\\[0.5cm]
\mathbf{A}_{1,1,0}=\left(\begin{matrix}\mathbf{C}_{1,1,0}\\ \mathbf{C}_{1,1,2}\end{matrix}\right),&
\mathbf{A}_{1,2,0}=\left(\begin{matrix}\mathbf{C}_{0,1,0}\\ \mathbf{C}_{0,1,1}\end{matrix}\right),&
\mathbf{A}_{0,3,0}=\left(\begin{matrix}\mathbf{C}_{0,1,0}\\ \mathbf{C}_{0,1,2} \end{matrix}\right),&
\mathbf{A}_{1,3,0}=\left(\begin{matrix}\mathbf{C}_{1,1,0}\\ \mathbf{C}_{1,1,2}\end{matrix}\right)
\end{array}
\end{eqnarray*}
and decoding matrices
\begin{eqnarray*}
\begin{array}{cccc}
\mathbf{S}'_{0,0,0}=\left(\begin{matrix}\mathcal{E}_{0,0}&0\\ 0&\mathcal{E}_{0,0}\end{matrix}\right),&
\mathbf{S}'_{0,1,0}=\left(\begin{matrix}\mathcal{E}_{0,1}&0\\ 0&\mathcal{E}_{0,1}\end{matrix}\right),&
\mathbf{S}'_{0,2,0}=\left(\begin{matrix}\mathcal{E}_{0,2}&0\\ 0& \mathcal{E}_{0,2}\end{matrix}\right),&
\mathbf{S}'_{1,0,0}=\left(\begin{matrix}\mathcal{E}_{1,0}&0\\ 0& \mathcal{E}_{1,0}\end{matrix}\right),\\[0.5cm]
\mathbf{S}'_{1,1,0}=\left(\begin{matrix}\mathcal{E}_{1,1}&0\\ 0& \mathcal{E}_{1,1}\end{matrix}\right),&
\mathbf{S}_{1,2,0}=\left(\begin{matrix}\mathcal{E}_{1,2}&0\\ 0& \mathcal{E}_{1,2}\end{matrix}\right),&
\mathbf{S}'_{0,3,0}=\left(\begin{matrix}\mathcal{Q}_{0}&0\\ 0& \mathcal{Q}_{0}\end{matrix}\right),&
\mathbf{S}_{1,3,0}=\left(\begin{matrix}\mathcal{Q}_{1}&0\\ 0& \mathcal{Q}_{1}\end{matrix}\right).
\end{array}
\end{eqnarray*}
\item When $z=2$, we have $h=2$. By \eqref{eq-caching1} and \eqref{eq-decoding1} we have caching matrices matrices
\begin{eqnarray*}
\begin{array}{cccc}
\mathbf{S}_{0,0,0}=\{\mathcal{E}_{0,0},\mathcal{E}_{0,2}\},\ \ &
\mathbf{S}_{0,0,1}=\{\mathcal{E}_{0,0},\mathcal{E}_{0,1}\},\ \ &
\mathbf{S}_{0,1,0}=\{\mathcal{E}_{0,1},\mathcal{E}_{0,0}\},\ \ &
\mathbf{S}_{0,1,1}=\{\mathcal{E}_{0,1},\mathcal{E}_{0,2}\},\\
\mathbf{S}_{0,2,0}=\{\mathcal{E}_{0,2},\mathcal{E}_{0,1}\},\ \ &
\mathbf{S}_{0,2,1}=\{\mathcal{E}_{0,2},\mathcal{E}_{0,0}\},\ \ &
\mathbf{S}_{1,0,0}=\{\mathcal{E}_{1,0},\mathcal{E}_{1,2}\},\ \ &
\mathbf{S}_{1,0,1}=\{\mathcal{E}_{1,0},\mathcal{E}_{1,1}\},\\
\mathbf{S}_{1,1,0}=\{\mathcal{E}_{1,1},\mathcal{E}_{1,0}\},\ \ &
\mathbf{S}_{1,1,1}=\{\mathcal{E}_{1,1},\mathcal{E}_{1,2}\},\ \ &
\mathbf{S}_{1,2,0}=\{\mathcal{E}_{1,2},\mathcal{E}_{1,1}\},\ \ &
\mathbf{S}_{1,2,1}=\{\mathcal{E}_{1,2},\mathcal{E}_{1,0}\},\\
\mathbf{S}_{0,3,0}=\{\mathcal{Q}_{0,3},\mathcal{E}_{0,1}\},\ \ &
\mathbf{S}_{0,3,1}=\{\mathcal{Q}_{0,3},\mathcal{E}_{0,0}\},\ \ &
\mathbf{S}_{1,3,0}=\{\mathcal{Q}_{1,3},\mathcal{E}_{1,1}\},\ \ &
\mathbf{S}_{1,3,1}=\{\mathcal{Q}_{1,3},\mathcal{E}_{1,0}\},
\end{array}
\end{eqnarray*}
coding matrices
\begin{eqnarray*}
\begin{array}{cccc}
\mathbf{A}_{0,0,0}=\mathbf{C}_{0,0,1},\ \ &
\mathbf{A}_{0,0,1}=\mathbf{C}_{0,0,2},\ \ &
\mathbf{A}_{0,1,0}=\mathbf{C}_{0,1,0},\ \ &
\mathbf{A}_{0,1,1}=\mathbf{C}_{0,1,2},\\
\mathbf{A}_{0,2,0}=\mathbf{C}_{0,1,0},\ \ &
\mathbf{A}_{0,2,1}=\mathbf{C}_{0,1,1},\ \ &
\mathbf{A}_{1,0,0}=\mathbf{C}_{1,0,1},\ \ &
\mathbf{A}_{1,0,1}=\mathbf{C}_{1,0,2},\\
\mathbf{A}_{1,1,0}=\mathbf{C}_{1,1,0},\ \ &
\mathbf{A}_{1,1,1}=\mathbf{C}_{1,1,2},\ \ &
\mathbf{A}_{1,2,0}=\mathbf{C}_{0,1,0},\ \ &
\mathbf{A}_{1,2,1}=\mathbf{C}_{0,1,1},\\
\mathbf{A}_{0,3,0}=\mathbf{C}_{0,1,0},\ \ &
\mathbf{A}_{0,3,1}=\mathbf{C}_{0,1,1},\ \ &
\mathbf{A}_{1,3,0}=\mathbf{C}_{1,1,0},\ \ &
\mathbf{A}_{1,3,1}=\mathbf{C}_{1,1,1}
\end{array}
\end{eqnarray*}
and decoding matrices
\begin{eqnarray*}
\begin{array}{cccc}
\mathbf{S}'_{0,0,0}=\mathbf{S}'_{0,0,1}=\mathcal{E}_{0,0},\ \ &
\mathbf{S}'_{0,1,0}=\mathbf{S}'_{0,1,1}=\mathcal{E}_{0,1},\ \ &
\mathbf{S}'_{0,2,0}=\mathbf{S}'_{0,2,1}=\mathcal{E}_{0,2},\ \ &
\mathbf{S}'_{1,0,0}=\mathbf{S}'_{1,0,1}=\mathcal{E}_{1,0},\\
\mathbf{S}'_{1,1,0}=\mathbf{S}'_{1,1,1}=\mathcal{E}_{1,1},\ \ &
\mathbf{S}_{1,2,0}=\mathbf{S}_{1,2,1}=\mathcal{E}_{1,2},\ \ &
\mathbf{S}'_{0,3,0}=\mathbf{S}'_{0,3,1}=\mathcal{Q}_{0},\ \ &
\mathbf{S}_{1,3,0}=\mathbf{S}_{1,3,0}=\mathcal{Q}_{1}.
\end{array}
\end{eqnarray*}
\end{itemize}
\end{example}
Based on Construction \ref{con1}, the following result can be obtained.
\begin{theorem}
\label{th-main-R}
For any positive integers $q$, $z$, $m$ with $q\geq2$ and $z<q$, there exists a $q^m$-division coded caching scheme with parameters $K=m(q+1)\lfloor\frac{q-1}{q-z}\rfloor $, $\frac{M}{N}=\frac{z}{q}$ and rate $R=q-z$. The operation is over the finite filed $\mathbb{F}_2$.
\end{theorem}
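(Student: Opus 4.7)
The plan is to invoke Theorem~\ref{th-main}, which reduces the task to checking that the matrices produced by Construction~\ref{con1} satisfy the subspace conditions~\eqref{Eqn space condition}. Before getting to the rank verification I would do the bookkeeping: the user set $\{(u,v,\varepsilon):0\le u<m,0\le v<q,0\le\varepsilon<h\}\cup\{(u,q,\varepsilon):0\le u<m,0\le\varepsilon<h\}$ with $h=\lfloor(q-1)/(q-z)\rfloor$ has cardinality $m(q+1)h$; the ambient dimension is $F=q^m$; each caching matrix in \eqref{eq-caching1} has $z\cdot q^{m-1}$ rows (for $v<q$ one removes $|\mathcal{G}_{v,\varepsilon}|=q-z$ blocks $\mathcal{E}_{u,v'}$ of size $q^{m-1}$; for $v=q$ one keeps $\mathcal{Q}_u$ of size $q^{m-1}$ and $z-1$ blocks $\mathcal{E}_{u,v'}$), giving $M/N=z/q$; and each coding matrix in \eqref{eq-coding1} stacks $q-z$ blocks $\mathbf{C}_{u,*,*}$ of size $q^m\times q^m$, giving $R=q-z$.

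Next I would record two elementary properties of the groups $\mathcal{G}_{v,\varepsilon}$ in \eqref{eq-Group1}: the $h$ sets $\{\mathcal{G}_{v,\varepsilon}\}_{\varepsilon}$ are pairwise disjoint blocks of $q-z$ consecutive residues mod $q$, none of which contains $v$ itself; and when $v=q-1$ the inequality $(\varepsilon+1)(q-z)\le q-1$ from the definition of $h$ forces $\mathcal{G}_{q-1,\varepsilon}\subseteq[0,q-1)$. The first property implies $\mathcal{E}_{u,v}\subseteq\mathbf{S}_{u,v,\varepsilon}$ and the second guarantees that all the $\mathcal{E}_{u,v'_i}$ appearing in the analysis below are already included in $\mathbf{S}_{u,q,\varepsilon}$ when they should be.

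The heart of the proof is a case analysis of $\mathbf{S}'_{k}\mathbf{A}_{k'}$ driven entirely by Lemma~\ref{lem1}. Writing the row blocks of $\mathbf{S}'_{k}\mathbf{A}_{k'}$ as either $\mathcal{E}_{u,v}\,\mathbf{C}_{u',v',v'_i}$ or $\mathcal{Q}_u\,\mathbf{C}_{u',v',v'_i}$ with $v'_i$ ranging over $\mathcal{G}_{v',\varepsilon'}$ or $\mathcal{G}_{q-1,\varepsilon'}$, parts (I)--(III) of the lemma show that every such block is either contained in $\mathcal{E}_{u,v}$ (resp.\ $\mathcal{Q}_u$) or, in the single ``resonant'' configuration ($u=u'$ together with $v_1=v_3$, or $u=u'$ together with $v_3=q$), equals some $\mathcal{E}_{u,v'_i}$. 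For $k\neq k'$ with $k=k'$'s resonant index $v$, disjointness of the $\mathcal{G}$'s forces $v'_i\notin\mathcal{G}_{v,\varepsilon}$ (or $v'_i\notin\mathcal{G}_{q-1,\varepsilon}$), so in every off-diagonal situation $\mathbf{S}'_k\mathbf{A}_{k'}\subseteq\mathbf{S}_k$, which is the first half of \eqref{Eqn space condition}.

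For the diagonal case $k=k'$ I would argue spanning. If $k=(u,v,\varepsilon)$ with $v<q$, Lemma~\ref{lem1}(II) gives $\mathbf{S}'_k\mathbf{A}_k=\bigcup_{v'_i\in\mathcal{G}_{v,\varepsilon}}\mathcal{E}_{u,v'_i}$, which together with $\mathbf{S}_k=\bigcup_{v''\in[0,q)\setminus\mathcal{G}_{v,\varepsilon}}\mathcal{E}_{u,v''}$ already exhausts the full standard basis of $\mathbb{F}_2^{q^m}$. If $k=(u,q,\varepsilon)$, Lemma~\ref{lem1}(III) gives $\mathbf{S}'_k\mathbf{A}_k=\bigcup_{v'_i\in\mathcal{G}_{q-1,\varepsilon}}\mathcal{E}_{u,v'_i}\subseteq\mathbb{F}_2^{q^m}$, so the sum contains $\mathcal{Q}_u$ and every $\mathcal{E}_{u,v''}$ with $v''\in[0,q-1)$; the vectors in $\mathcal{Q}_u$ then XOR against the already-present basis vectors with $s_u<q-1$ to recover the remaining basis vectors with $s_u=q-1$, yielding all of $\mathbb{F}_2^{q^m}$. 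I expect the main obstacle to be organizing the multiple sub-cases (two families of users crossed with two families again) neatly; once Lemma~\ref{lem1} is in hand the linear algebra is essentially packaged, and what remains is careful combinatorial bookkeeping on the residue blocks $\mathcal{G}_{v,\varepsilon}$.
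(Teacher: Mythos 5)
Your proposal is correct and follows essentially the same route as the paper's Appendix~C proof: reduce to the subspace conditions \eqref{Eqn space condition}, then run the case analysis on $(u_1,v_1,\varepsilon_1)$ versus $(u_2,v_2,\varepsilon_2)$ using parts (I)--(III) of Lemma~\ref{lem1}, with disjointness of the blocks $\mathcal{G}_{v,\varepsilon}$ handling the off-diagonal cases and the spanning argument (including recovering the $s_u=q-1$ basis vectors from $\mathcal{Q}_u$) handling the diagonal ones. Your explicit bookkeeping of $K$, $M/N$, $R$ and of the facts $v\notin\mathcal{G}_{v,\varepsilon}$ and $\mathcal{G}_{q-1,\varepsilon}\subseteq[0,q-1)$ is a welcome addition that the paper leaves implicit.
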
The proof of Theorem \ref{th-main-R} is included in Appendix C.

\begin{remark}

\

\begin{itemize}
\item Comparing with the uncoded caching strategy, our new scheme adds some computation. However the additional computing
 is very limited since by \eqref{eq-caching1}, \eqref{eq-coding1} and \eqref{eq-decoding1} there are just $m$ caching matrices each
  of which has exactly $q$ entries containing $1$ in each row, and the other matrices has at most one entry containing $1$s in each row.
\item For the fixed user number $K$, when $\frac{M}{N}=\frac{1}{q}$ the following statements hold from Table \ref{tab-main}.
\begin{itemize}
\item  Comparing with MN scheme, we have
\begin{eqnarray*}
\label{eq-com-F-z=1}
\frac{F}{F_{MN}}&\approx\frac{\sqrt{2\pi K(q-1)}}{q}e^{\frac{K}{q+1}\ln q-\frac{K}{q}\left(\ln q+(q-1)\ln \frac{q}{q-1} \right)}=\frac{\sqrt{2\pi K(q-1)}}{q}e^{-\frac{K}{q}\left(\frac{1}{q+1}\ln q+(q-1)\ln \frac{q}{q-1} \right)}
\end{eqnarray*}
\begin{eqnarray*}
\label{eq-com-R-z=1}
\frac{R}{R_{MN}}&=1+\frac{q}{K}.
\end{eqnarray*}
This implies that $\frac{R}{R_{MN}}\rightarrow 1$ when $K\rightarrow \infty$. However, the packet number $F$ of the scheme in Theorem \ref{th-main-R} reduces significantly.

\item Comparing with the scheme in Lemma \ref{le-Yan}, i.e., the second row of Table \ref{tab-main}.
  The scheme is of $F=q^{\frac{K}{q}-1}$ and $R=q-1$ in Lemma \ref{le-Yan}. Our scheme has the packet number $F=q^{\frac{K}{q+1}}$ and the same rate. This implies the ratio of packet number in our scheme to that of the scheme in Lemma \ref{le-Yan} is  $q^{1-\frac{K}{q(q+1)}}$. When $K$
  is very large, the ratio will close to zero.
\end{itemize}
\end{itemize}
\end{remark}
\begin{example}
Assume that $\frac{M}{N}=\frac{1}{2}$. Using Lemmas \ref{le-MN}, \ref{le-Yan} and Theorem \ref{th-main-R}, the following Table shows
the difference of the results.
\begin{table}[!htbp]
\centering
\caption{The schemes obtained from Lemmas \ref{le-MN}, \ref{le-Yan} and Theorem \ref{th-main-R} when $\frac{M}{N}=\frac{1}{2}$ \label{tab-compare}}
\begin{tabular}{|c|c|c|c|c|c|c|}
\hline
\multicolumn{1}{|c|}{ \multirow{2}*{$K$} }&
 \multicolumn{2}{|c|}{ \multirow{1}*{Lemma \ref{le-MN}}} &
 \multicolumn{2}{|c|}{ \multirow{1}*{Lemma \ref{le-Yan}}} &
 \multicolumn{2}{|c|}{ \multirow{1}*{Theorem \ref{th-main-R}}}\\
\cline{2-7}
  &R&F&R&F&R&F\\ \hline
12&0.8571 &924       &1&32    &1&16\\ \hline
18&0.9    &48620     &1&256   &1&64\\  \hline
24&0.9231 &2704156   &1&2048  &1&256\\ \hline
30&0.9375 &155117520 &1&16384 &1&1024\\ \hline
36&0.9474 &9075135300&1&131072&1&4096\\ \hline
\end{tabular}
\end{table}
\end{example}

\section{Concatenating construction}
\label{sec-concatenating}
In order to implement a coded caching scheme subject to a certain subpacketization level for any $K$ users, Cheng et al., in \cite{CJYW} proposed a concatenating construction for PDAs, which can be regarded as a generalization of grouping algorithm in \cite{SJTLD}. In this section,
 we shall use a similar idea to
  propose a concatenating construction of linear coded caching schemes which can be used for
   any $K$ users with low subpacketization level.
   The matrices constructed in \cite{CJYW} are useful. For the ease of convenience, we use the following notations.
\begin{itemize}
\item ${\bf J}_{F\times K}$ denotes a matrix with $F$ rows and $K$ columns where all entries are $1$s.   The matrix ${\bf J}_{F\times 1}$ is denoted by ${\bf J}_F$. Sometimes ${\bf J}_F$ is written as ${\bf J}$ if there is no need to emphasize the parameter $F$.
\item Given an array $\mathbf{P}=(p_{j,k})$, $0\leq j<F$, $0\leq k<K$, with alphabet $\{0,1,\ldots,S-1\}\bigcup\{*\}$, define
 $\mathbf{P}+a=(p_{j,k}+a)$ and $a\mathbf{P}=(ap_{j,k})$ for any integer $a$ where $*+a=*$ and $a*=*$.
\item
$\langle a\rangle_b$ denotes the least non-negative residue of $a$ modulo $b$ for any positive integers $a$ and $b$.
 If $\langle a\rangle_b=0$, we write $b\mid a$. If $\langle a\rangle_b\neq0$, we write $b\nmid a$.
\end{itemize}

\

\begin{construction}(\cite{CJYW})
\label{con2}
For any positive integers $K_1$ and $K_2$ with $K_2\leq K_1$, let $h_1=\frac{K_1}{gcd(K_1,K_2)}$ and $h_2=\frac{K_2}{gcd(K_1,K_2)}$.
\begin{itemize}
\item Define an $h_1\times (K_1+K_2)$ matrix
\begin{small}
\begin{eqnarray}
\label{eq-lable-matrix}
\mathbf{A}=\left(\begin{array}{cccc|cccc }
0 & 1 & \ldots & K_{1}-1 & 0         & 1           & \ldots & K_{2}-1 \\
0 & 1 & \ldots & K_{1}-1 &K_{2}        & K_{2}+1       & \ldots & 2K_{2}-1\\
0 & 1 & \ldots & K_{1}-1 &2K_{2}       & 2K_{2}+1       & \ldots & 3K_{2}-1\\
  &   & \ldots &       &           &             & \ldots &          \\
0 & 1 & \ldots & K_{1}-1 &(h_1-1)K_{2} &(h_1-1)K_{2}+1 & \ldots & h_1K_{2}-1
\end{array}\right),
\end{eqnarray}
\end{small}where all the operations are performed modulo $K_{1}$.
\item Define an $h_1\times (K_{1}+K_{2})$ array $\mathbf{B} = (b_{j,k})$, $0 \leq j < h_1$, $0 \leq k < K_{1}+K_{2}$, with  entries
\begin{eqnarray}\label{eq-recursive1}
b_{j,k}=\left\{
\begin{array}{ll}
j, & \textrm{if}~k\in [0, K_{1}+K_{2}) \setminus \mathcal{A}_j,\\
h_1+\lfloor \frac{jK_{2}+\langle k-jK_{2}\rangle_{K_{1}}}{K_{1}}\rfloor,  & \textrm{if}~k\in \mathcal{A}_j,\\
\end{array}
\right.
\end{eqnarray}
where $\mathcal{A}_j=\{\langle jK_{2}\rangle_{K_{1}},\langle jK_{2}+1\rangle_{K_{1}},\ldots,\langle jK_{2}+K_{2}-1\rangle_{K_{1}}\}$.
\end{itemize}
\end{construction}

\begin{example}\rm
\label{ex-A-B}
Assume that $K_2=2$. Then we have $h_1=3$, $h_2=2$. By \eqref{eq-lable-matrix} and \eqref{eq-recursive1}, we can obtain the following arrays.
\begin{eqnarray*}
\label{eq-lable-matrix-he}
\mathbf{A}=\left(\begin{array}{ccccc}
0 & 1 & 2 &0 & 1\\
0 & 1 & 2 &2 & 0\\
0 & 1 & 2 &1 & 2
\end{array}\right)\ \ \ \ \ \
\mathbf{B}=\left(\begin{array}{ccccc}
3 & 3 & 0 &0 & 0\\
4 & 1 & 3 &1 & 1\\
2 & 4 & 4 &2 & 2
\end{array}\right)
\end{eqnarray*}
\end{example}
The following statement was proposed in \cite{CJYW}. However, we include here a short proof for the reader's convenience.
\begin{lemma}(\cite{CJYW})\label{le-B-Property}
All the entries of each column in matrix $\mathbf{B}$ defined in Construction \ref{con2} are difference.
\end{lemma}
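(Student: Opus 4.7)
The plan is to split the argument on the column index $k$ and observe that the entries of column $k$ come from two disjoint ranges.  The entries produced by the first branch of \eqref{eq-recursive1} all lie in $[0,h_1)$, while those produced by the second branch lie in $[h_1,h_1+h_2)$ (since $jK_2+\langle k-jK_2\rangle_{K_1}<h_1K_2=h_2K_1$).  So it suffices to show that (i) the indices $j$ contributing to the first branch yield distinct values, which is immediate because $b_{j,k}=j$; and (ii) the indices $j$ contributing to the second branch also yield distinct values.

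First I would dispose of the trivial case $k\in[K_1,K_1+K_2)$: since $\mathcal{A}_j\subseteq[0,K_1)$ for every $j$, the first branch always applies and column $k$ equals $(0,1,\ldots,h_1-1)^{T}$.  For $k\in[0,K_1)$, I would denote by $j_0<j_1<\cdots<j_{t-1}$ the values of $j$ with $k\in\mathcal{A}_j$; for each such $j_\ell$, the definition of $\mathcal{A}_{j_\ell}$ gives a unique $r_\ell\in[0,K_2)$ with $k=\langle j_\ell K_2+r_\ell\rangle_{K_1}$, i.e.\
\[
j_\ell K_2+r_\ell=k+s_\ell K_1,\qquad s_\ell=\Big\lfloor\tfrac{j_\ell K_2+r_\ell}{K_1}\Big\rfloor.
\]
Thus $b_{j_\ell,k}=h_1+s_\ell$, and the problem reduces to showing that $s_{\ell_1}\neq s_{\ell_2}$ whenever $\ell_1\neq\ell_2$.

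The key step is the following short contradiction.  Suppose $s_{\ell_1}=s_{\ell_2}$ with $j_{\ell_1}<j_{\ell_2}$.  Subtracting the two defining equations yields $(j_{\ell_2}-j_{\ell_1})K_2=r_{\ell_1}-r_{\ell_2}$.  The left side is a positive multiple of $K_2$, hence at least $K_2$ in absolute value, while the right side lies in $(-K_2,K_2)$ because $r_{\ell_1},r_{\ell_2}\in[0,K_2)$.  This forces $j_{\ell_2}=j_{\ell_1}$, a contradiction; therefore the $s_\ell$'s are pairwise distinct, the second branch produces pairwise distinct entries, and combined with step (i) this proves the lemma.

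I do not anticipate a genuine obstacle here: the main subtlety is simply noticing that the two branches of \eqref{eq-recursive1} land in disjoint value ranges, which cleanly reduces the whole lemma to the one-line pigeonhole argument above.  The only bookkeeping point to be careful about is the bound $j_\ell K_2+r_\ell<h_1K_2=h_2K_1$, ensuring $s_\ell\in[0,h_2)$ and hence $b_{j_\ell,k}\in[h_1,h_1+h_2)$, so that no collision with first-branch entries can occur.
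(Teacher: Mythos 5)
Your proof is correct and follows essentially the same route as the paper: the trivial case $k\in[K_1,K_1+K_2)$, the observation that first-branch entries $j$ and second-branch entries $h_1+s_\ell$ lie in the disjoint ranges $[0,h_1)$ and $[h_1,h_1+h_2)$ (using $r_\ell\in[0,K_2)$), and a final distinctness argument for the floor values. Your subtraction step $(j_{\ell_2}-j_{\ell_1})K_2=r_{\ell_1}-r_{\ell_2}$ with the size bound on both sides is just a cleaner arithmetic rendering of the paper's concluding claim that the unrolled windows $[jK_2,(j+1)K_2)$ are pairwise disjoint, which the paper reaches after first verifying that its integers $z,z'$ are non-negative.
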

\begin{proof}
It is easy to check that the statement always holds when $K\in [K_1,K_1+K_2)$. Now let us consider the case $k\in [0,K_1)$. For any integers $j$ and $j'$, we assume that $j'>j$. Clearly the hypothesis $b_{j,k}=b_{j',k}\in [0,h_1)$ does happen.

Assume that $b_{j,k}=b_{j',k}\in [h_1,h_1+h_2)$. By \eqref{eq-recursive1}, we have $k\in \mathcal{A}_j \bigcap \mathcal{A}_{j'}$. Let $x_k=\langle k-jK_2\rangle_{K_1}$ and $x'_{k}=\langle k-{j'}K_2\rangle_{K_1}$.
Then $k+K_1z=jK_2+x_k$ and $k+K_1z'=j'K_2+x'_k$ hold for some integers $z$ and $z'$. We claim that $z$ and $z'$ are non-negative integers. We first consider $z$. When $k\geq jK_2$, we have that $x_k=\langle k-jK_2\rangle_{K_1}= k-jK_2$ since $k\in [0,K_1)$. So we have
$x_k+jK_2=k+0K_1$. This implies that $z=0$. When $k< jK_2$, there exists a nonnegative integer $y$ such that $-K_1< k+K_1y-jK_2< 0$. Then we have that $x_k=\langle k-jK_2\rangle_{K_1}= K_1+ k+ K_1y -jK_2$. So we have
$x_k+jK_2=k+K_1(y+1)$. Clearly $z=y+1\geq 1$. Similarly we can prove that $z'$ is a non-negative integer too.  If $z'=z$, then
  $\mathcal{A}_j\bigcap \mathcal{A}_{j'}=\emptyset$, a contradiction to our hypothesis. So $z'>z$ always holds. Then
\begin{eqnarray*}
b_{j,k}&=&h_1+\left\lfloor \frac{jK_2+x_k}{K_1}\right\rfloor=h_1+\left\lfloor z+\frac{k}{K_1}\right\rfloor=h_1+z,\\  b_{j',k}&=&h_1+\left\lfloor \frac{j'K_2+x'_k}{K_1}\right\rfloor=h_1+\left\lfloor z'+\frac{k}{K_1}\right\rfloor=h_1+z'.
\end{eqnarray*}
This implies $b_{j,k}\neq b_{j',k}$, a contradiction to our assumption $b_{j,k}= b_{j',k}$.
\end{proof}

Based on the $k$th column of $\mathbf{B}$ define in Construction \ref{con2}, we can define  another $(h_1+h_2)\times h_1$ matrix
\begin{eqnarray}
\label{eq-Sign}
\Gamma(\mathbf{B},k)=\left(\begin{matrix}\chi_{b_{0, k_1}}(0)&\cdots &\chi_{b_{h_1-1, k_1}}(0)\\
\chi_{b_{0, k_1}}(1)&\cdots &\chi_{b_{h_1-1, k_1}}(1)\\
\vdots& \ddots  &\vdots\\
\chi_{b_{0,k_1}}(h_1+h_2-2)&\cdots &\chi_{b_{h_1-2, k_1}}(h_1+h_2-1)\\
\chi_{b_{0,k_1}}(h_1+h_2-1)&\cdots &\chi_{b_{h_1-1, k_1}}(h_1+h_2-1)\end{matrix}\right)
\end{eqnarray}where
\begin{eqnarray*}
\chi_{k}(x)=\left\{\label{eq-sign-fuct2}
\begin{array}{cc}
1& \hbox{if}\ k=x\\
0  &  \hbox{otherwise}\ \ \
\end{array}\right.
\end{eqnarray*}
\begin{example}
\label{ex-sign-B-k}
We use the matrix $\mathbf{B}$ in Example \ref{ex-A-B}. When $k=0,1,2,3$, the following matrices can be obtained by \eqref{eq-Sign}.
\begin{eqnarray*}
\Gamma(\mathbf{B},0)=\left(
\begin{matrix}
0&0&0\\
0&0&0\\
0&0&1\\
1&0&0\\
0&1&0
\end{matrix}
\right)\ \ \ \
\Gamma(\mathbf{B},1)=\left(
\begin{matrix}
0&0&0\\
0&1&0\\
0&0&0\\
1&0&0\\
0&0&1
\end{matrix}
\right)\ \ \ \
\Gamma(\mathbf{B},2)=\left(
\begin{matrix}
1&0&0\\
0&0&0\\
0&0&0\\
0&1&0\\
0&0&1
\end{matrix}
\right)\ \ \ \
\Gamma(\mathbf{B},2)=\left(
\begin{matrix}
1&0&0\\
0&1&0\\
0&0&1\\
0&0&0\\
0&0&0
\end{matrix}
\right)
\end{eqnarray*}
\end{example}
From Lemma \ref{le-B-Property} we have that each row and each column of $\Gamma(\mathbf{B},k)$ has at most one entry containing $1$, others containing $0$. So the following result holds.
\begin{corollary}
\label{co-h1square}
Assume that $\mathbf{B}$ is a matrix generated by \eqref{eq-recursive1}. Then for each integer $k\in[0,K_1+K_2)$, the matrix $\mathbf{H}=\Gamma(\mathbf{B},k)^T\cdot\Gamma(\mathbf{B},k)$ is a permutation matrix with size $h_1\times h_1$.
\end{corollary}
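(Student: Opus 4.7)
My plan is to exploit the very special structure of $\Gamma(\mathbf{B},k)$: because of the indicator function $\chi$, each of its $h_1$ columns is a standard basis vector, and then the inner-product structure of $\Gamma(\mathbf{B},k)^T\Gamma(\mathbf{B},k)$ forces the product to be a permutation matrix (in fact, the identity).

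First, I would unpack the definition in \eqref{eq-Sign}. For a fixed column index $j\in[0,h_1)$, the $j$-th column of $\Gamma(\mathbf{B},k)$ has entries $\chi_{b_{j,k}}(i)$ for $i=0,1,\ldots,h_1+h_2-1$. Since $\chi_{b_{j,k}}(i)=1$ iff $i=b_{j,k}$, this column equals the standard basis vector ${\bf e}_{b_{j,k}}^{T}$ of $\mathbb{F}_2^{h_1+h_2}$ (viewed as a column). Thus $\Gamma(\mathbf{B},k)$ is an $(h_1+h_2)\times h_1$ matrix whose columns are $h_1$ (possibly equal) standard basis vectors indexed by the entries $b_{0,k},b_{1,k},\ldots,b_{h_1-1,k}$ of the $k$-th column of $\mathbf{B}$.

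Next, I would invoke Lemma~\ref{le-B-Property}, which is exactly the statement that $b_{0,k},b_{1,k},\ldots,b_{h_1-1,k}$ are pairwise distinct. Consequently, the $h_1$ columns of $\Gamma(\mathbf{B},k)$ are $h_1$ \emph{distinct} standard basis vectors. Finally, the $(j,j')$ entry of $\mathbf{H}=\Gamma(\mathbf{B},k)^{T}\cdot\Gamma(\mathbf{B},k)$ is the inner product of the $j$-th and $j'$-th columns of $\Gamma(\mathbf{B},k)$, i.e., $\langle {\bf e}_{b_{j,k}},{\bf e}_{b_{j',k}}\rangle$. By the distinctness just established, this inner product equals $1$ when $j=j'$ and $0$ otherwise, so $\mathbf{H}=I_{h_1}$, the $h_1\times h_1$ identity, which is a permutation matrix.

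Honestly, there is no real obstacle here: the only nontrivial ingredient is Lemma~\ref{le-B-Property}, and once one recognizes that the columns of $\Gamma(\mathbf{B},k)$ are standard basis vectors, the orthogonality argument is immediate. The whole proof should take only a few lines, and it actually yields the slightly stronger conclusion that $\mathbf{H}$ is the identity matrix rather than merely some permutation matrix.
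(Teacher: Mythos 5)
Your proof is correct and takes essentially the same route as the paper, which likewise deduces the corollary directly from Lemma~\ref{le-B-Property} by observing that each row and each column of $\Gamma(\mathbf{B},k)$ contains at most one entry equal to $1$. Your write-up is merely more explicit (viewing the columns as distinct standard basis vectors and computing the inner products) and yields the slightly sharper conclusion $\mathbf{H}=I_{h_1}$, which the paper does not state but which is consistent with its later use of $\mathbf{H}$.
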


\begin{construction}
\label{con3}
Given an $F$-division $(K_1,M,N)$ linear coded caching scheme with three classes of matrices $\mathbf{S}_k$, $\mathbf{A}_k$ and $\mathbf{S}'_k$, $k\in[0,K)$, and rate $R$, for each $k_1\in [0,K_1)$ and $k_2\in [K_1,K_1+K_2)$, define $h_1Z\times h_1F$ caching matrices
\begin{eqnarray}
\label{eq-re-caching}
\overline{\mathbf{S}}_{k_1}=diag(\ \underbrace{\mathbf{S}_{k_1},\ldots,\mathbf{S}_{k_1}}_{h_1}\ ), \ \ \ \
\overline{\mathbf{S}}_{k_2}=diag(\ \underbrace{\mathbf{S}_{a_{k_2,0}},\ldots,\mathbf{S}_{a_{k_2,h_1-1}}}_{h_1}\ ),
\end{eqnarray}
$(h_1+h_2)S\times h_1F$ coding matrices
\begin{eqnarray}
\label{eq-re-coding}
&\overline{\mathbf{A}}_{k_1}=\Gamma(\mathbf{B},k_1)\otimes\mathbf{A}_{k_1},\ \ \ \ \
\overline{\mathbf{A}}_{k_2}=diag(\mathbf{A}_{a_{k_2,0}},\ldots,\mathbf{A}_{a_{k_2,h_1-1}},
\underbrace{\mathbf{0},\ldots,\mathbf{0}}_{h_2})
\end{eqnarray}
and $h_1(F-Z)\times (h_1+h_2)RF$ decoding matrices
\begin{eqnarray}
\label{eq-re-decoding}
\overline{\mathbf{S}'}_{k_1}=\Gamma(\mathbf{B},k_1)^{T}\otimes\mathbf{S}'_{k_1},\ \ \ \ \
\overline{\mathbf{A}}_{k_2}=diag(\mathbf{S}'_{a_{k_2,0}},\ldots,\mathbf{S}'_{a_{k_2,h_1-1}})
\end{eqnarray}
where ``$\otimes$" is Kronecker product. That is, for any two matrix $\mathbf{C}=(c_{i,j})$ and $\mathbf{D}$, $\mathbf{C}\otimes \mathbf{D}=(c_{i,j}\mathbf{D})$.
\end{construction}
\begin{lemma}
\label{le-fundamental recursive}
If there exists an $F$-division $(K_1,M,N)$ linear coded caching scheme with rate $R$, then there exists an $h_1F$-division $(K_1+K_2,M,N)$ linear coded caching scheme with $(1+\frac{h_2}{h_1})R$ for any positive integer $K_2\leq K_1$, where $h_1=\frac{K_{1}}{gcd(K_{1},K_{2})}$ and $h_2=\frac{K_{2}}{gcd(K_{1},K_{2})}$.
\end{lemma}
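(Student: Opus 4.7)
The plan is to verify the rank condition \eqref{Eqn rank condition} of Theorem \ref{th-main} for the triples $(\overline{\mathbf{S}}_k, \overline{\mathbf{A}}_k, \overline{\mathbf{S}'}_k)$ produced by Construction \ref{con3}. The parameters come out immediately: the subpacketization is $h_1F$; each new caching matrix has $h_1Z$ rows, giving memory fraction $M/N$; and each new coding matrix has $(h_1+h_2)S$ rows, so the new rate equals $(h_1+h_2)S/(h_1F) = (1+h_2/h_1)R$. What remains is to show that for every pair $(k, k')$ the stack of $\overline{\mathbf{S}}_k$ on top of $\overline{\mathbf{S}'}_k \overline{\mathbf{A}}_{k'}$ has rank $h_1F$ when $k = k'$ and rank $h_1Z$ otherwise. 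I would treat three cases according to the location of the two users.

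In Case 1 both $k_1, k_1' \in [0, K_1)$, and the mixed-product property of Kronecker products gives $\overline{\mathbf{S}'}_{k_1}\overline{\mathbf{A}}_{k_1'} = (\Gamma(\mathbf{B}, k_1)^T \Gamma(\mathbf{B}, k_1')) \otimes (\mathbf{S}'_{k_1} \mathbf{A}_{k_1'})$. When $k_1 = k_1'$, Corollary \ref{co-h1square} makes the left factor an $h_1 \times h_1$ permutation matrix, so after a row permutation the stack becomes block-diagonal with $h_1$ copies of $\bigl(\begin{smallmatrix}\mathbf{S}_{k_1}\\ \mathbf{S}'_{k_1}\mathbf{A}_{k_1}\end{smallmatrix}\bigr)$, each of rank $F$ by the original scheme, yielding rank $h_1F$. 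When $k_1 \neq k_1'$, the original rank condition forces every row of $\mathbf{S}'_{k_1}\mathbf{A}_{k_1'}$ into the row span of $\mathbf{S}_{k_1}$, hence every row of the Kronecker product lies in the row span of $\overline{\mathbf{S}}_{k_1}$ block by block, and the rank collapses to $h_1Z$.

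Case 2, with both users in $[K_1, K_1+K_2)$, is the easiest: all three new matrices are block-diagonal, so the stack splits into $h_1$ independent blocks, the $i$-th being $\bigl(\begin{smallmatrix}\mathbf{S}_{a_{k_2, i}}\\ \mathbf{S}'_{a_{k_2, i}}\mathbf{A}_{a_{k_2', i}}\end{smallmatrix}\bigr)$. The formula $a_{k, i} = \langle (k-K_1) + iK_2\rangle_{K_1}$ combined with $|k_2 - k_2'| < K_2 \leq K_1$ forces either $a_{k_2, i} = a_{k_2', i}$ for every $i$ (when $k_2 = k_2'$) or $a_{k_2, i} \neq a_{k_2', i}$ for every $i$, so each block has rank $F$ or $Z$ respectively, and the totals $h_1F$ and $h_1Z$ follow.

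The main obstacle is the mixed Case 3, where the Kronecker-structured matrices of a user in $[0, K_1)$ must mesh with the block-diagonal matrices of a user in $[K_1, K_1+K_2)$. In both sub-directions the relevant product has an $(i, j)$-block proportional to some $\mathbf{S}'_u\mathbf{A}_v$; by the structure of $\mathbf{B}$ the entry $b_{i, k_1}$ equals $i$ when $k_1 \notin \mathcal{A}_i$ and otherwise lies in $[h_1, h_1+h_2)$, so only diagonal positions $j = i$ with $i \in J_{k_1} := \{i : k_1 \notin \mathcal{A}_i\}$ survive (here Lemma \ref{le-B-Property} is what guarantees the uniqueness of $j$ given $i$). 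The crucial combinatorial check is then that $k_1 \neq a_{k_2, i}$ for every $i \in J_{k_1}$: the defining property of $J_{k_1}$ gives $\langle k_1 - iK_2\rangle_{K_1} \in [K_2, K_1)$, while $\langle a_{k_2, i} - iK_2 \rangle_{K_1} = k_2 - K_1 \in [0, K_2)$, so the two residues cannot coincide. Invoking the original non-matching rank condition then places each surviving block in the row span of the relevant caching matrix, yielding the required rank $h_1Z$.
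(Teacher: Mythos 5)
Your proposal is correct and follows essentially the same route as the paper: verify the rank condition \eqref{Eqn rank condition} of Theorem \ref{th-main} for the matrices of Construction \ref{con3}, splitting into cases by the location of $k$ and $k'$ and using the mixed-product property together with Corollary \ref{co-h1square}. In fact your treatment of the mixed case is more complete than the paper's (which dismisses it as ``similar''), since you explicitly verify the key combinatorial fact that $k_1\neq a_{k_2,i}$ for every $i$ with $k_1\notin\mathcal{A}_i$, which is precisely what makes the surviving diagonal blocks fall into the row span of the corresponding caching matrices.
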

\begin{proof} Given an $F$-division $(K_1,M,N)$ linear coded caching scheme, its caching, coding and decoding matrices are $\mathbf{S}_{k_1}$, $\mathbf{A}_{k_1}$ and $\mathbf{S}'_{k_1}$, $k_1\in [0,K_1)$ respectively. For any positive integer $K_2\leq K_1$, from Constructions \ref{con2} and \ref{con3} we have three classes of matrices $\overline{\mathbf{S}}_{k}$, $\overline{\mathbf{A}}_{k}$ and $\overline{\mathbf{S}'}_{k}$, $k\in [0,K_1+K_2)$. We claim that they satisfy the condition \eqref{Eqn rank condition} in Theorem \ref{th-main}.
\begin{itemize}
\item When $k,k'\in [0,K_1)$, from Corollary \ref{co-h1square} we have
\begin{eqnarray*}
\mathrm{rank}\left(
  \begin{array}{c}
    \overline{\mathbf{S}}_k \\
    \overline{\mathbf{S}'}_{k}\overline{\mathbf{A}}_{k'}
 \end{array}
\right)&=&\mathrm{rank}\left(\begin{matrix}
\begin{array}{ccc}
\mathbf{S}_{k} & & \\
            &\ddots & \\
      & &\mathbf{S}_{k}
\end{array}\\ \hline \\[-0.2cm]
\Gamma(\mathbf{B},k)^{T}\otimes\mathbf{S}'_{k}\cdot
\Gamma(\mathbf{B},k')\otimes\mathbf{A}_{k'}
\end{matrix}\right)
\leq\mathrm{rank}\left(\begin{matrix}
\begin{array}{ccc}
\mathbf{S}_{k} & & \\
            &\ddots & \\
      & &\mathbf{S}_{k}
\end{array}\\ \hline \\[-0.2cm]
\mathbf{H}\otimes \mathbf{S}'_{k}\mathbf{A}_{k'}
\end{matrix}\right) \\
&=&\mathrm{rank}\left(
\begin{array}{ccc}
\mathbf{S}_{k} & & \\
            &\ddots & \\
      & &\mathbf{S}_{k}\\ \hline
\mathbf{S}'_{k}\mathbf{A}_{k} & & \\
            &\ddots & \\
      & &\mathbf{S}'_{k}\mathbf{A}_{k'}
\end{array}\right)
=\mathrm{rank}\left(
\begin{array}{ccc}
\mathbf{S}_{k} & & \\
\mathbf{S}'_{k}\mathbf{A}_{k'} & & \\ \hline
            &\ddots & \\ \hline
      & &\mathbf{S}_{k}\\
      & &\mathbf{S}'_{k}\mathbf{A}_{k'}
\end{array}\right)\\
&=&h_1\cdot\mathrm{rank}\left(
\begin{array}{c}
\mathbf{S}_{k} \\
\mathbf{S}'_{k}\mathbf{A}_{k'}
\end{array}\right)\\
&=&\left\{\begin{array}{lll}h_1F,&\mbox{if}\ \    k=k'
\\
h_1\frac{FM}{N},&\mbox{otherwise}\end{array}\right.  \mbox{~~for~~}0\le k'< K_1.
\end{eqnarray*}

\item When $k\in [0,K_1)$ and $k'\in [K_1,K_1+K_2)$, from Corollary \ref{co-h1square} we have
\begin{eqnarray*}
\mathrm{rank}\left(
  \begin{array}{c}
    \overline{\mathbf{S}}_k \\
    \overline{\mathbf{S}'}_{k}\overline{\mathbf{A}}_{k'}
 \end{array}
\right)&=&\mathrm{rank}\left(\begin{matrix}
\begin{array}{ccc}
\mathbf{S}_{k} & & \\
            &\ddots & \\
      & &\mathbf{S}_{k}
\end{array}\\ \hline \\[-0.2cm]
\Gamma(\mathbf{B},k)^{T}\otimes\mathbf{S}'_{k}\cdot
\overline{\mathbf{A}}_{k'}
\end{matrix}\right)
\leq\mathrm{rank}\left(\begin{matrix}
\begin{array}{ccc}
\mathbf{S}_{k} & & \\
            &\ddots & \\
      & &\mathbf{S}_{k}
\end{array}\\ \hline \\[-0.2cm]
\mathbf{H}\otimes \mathbf{S}'_{k}\mathbf{A}_{k'}
\end{matrix}\right) \\
&=&h_1\cdot\mathrm{rank}\left(
\begin{array}{c}
\mathbf{S}_{k} \\
\mathbf{S}'_{k}\mathbf{A}_{k}
\end{array}\right)\\
&=&h_1\frac{FM}{N}.
\end{eqnarray*}
\item Similar to the above discussions, it is easy to check that \eqref{Eqn rank condition} always holds for any integers $k\in [K_1,K_1+K_2)$ and $k'\in [0,K_1+K_2)$.
\end{itemize}
\end{proof}
\begin{theorem}
\label{th-fundamental recursive}
Given an $F$-division $(K_1,M,N)$ linear coded caching scheme with rate $R$, for any $K>K_1$, there exists an $h_1F$-division $(K,M,N)$ linear coded caching scheme with rate $\frac{K}{K_1}R$, where $h_1=\frac{K_{1}}{gcd(K_{1},K)}$.
\end{theorem}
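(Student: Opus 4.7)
The plan is to reduce Theorem \ref{th-fundamental recursive} to Lemma \ref{le-fundamental recursive} by combining it with a simple direct-sum construction that runs independent linear schemes in parallel on disjoint user groups. Write $K = qK_1 + r$ with $q \geq 1$ and $0 \leq r < K_1$. Since $\gcd(K_1,K) = \gcd(K_1,r)$ when $r>0$ and $\gcd(K_1,K) = K_1$ when $r=0$, the target subpacketization is $h_1 F = K_1 F/\gcd(K_1,r)$ in the first case and $h_1 F = F$ in the second; this matches the $h_1$ produced by Lemma \ref{le-fundamental recursive} when it is applied to $(K_1,r)$.

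Case $r=0$: I would partition the $K$ users into $q$ disjoint groups of $K_1$ users and run the given $(K_1,M,N)$ scheme independently on each group. Denote the matrices of the $i$-th copy by $\mathbf{S}^{(i)}_k,\mathbf{A}^{(i)}_k,\mathbf{S}'^{(i)}_k$. Define $\overline{\mathbf{S}}_k = \mathbf{S}^{(i)}_k$ when $k$ belongs to group $i$, let $\overline{\mathbf{A}}_k$ be the row-stacked matrix whose $i$-th block is $\mathbf{A}^{(i)}_k$ (with zeros in the blocks of the other $q-1$ groups), and let $\overline{\mathbf{S}}'_k$ project $\overline{\mathbf{X}}_{\mathbf d}$ onto the block of user $k$'s own group and then apply $\mathbf{S}'^{(i)}_k$. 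By construction $\overline{\mathbf{S}}'_k \overline{\mathbf{A}}_{k'} = 0$ whenever $k$ and $k'$ lie in different groups, so condition \eqref{Eqn rank condition} collapses to the condition for the original scheme inside each group. The rate adds to $qR = (K/K_1)R$ and the subpacketization is $F = h_1 F$.

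Case $r>0$: First apply Lemma \ref{le-fundamental recursive} with $K_2 = r \leq K_1$ to obtain an $h_1 F$-division $(K_1+r,M,N)$ linear scheme of rate $\frac{K_1+r}{K_1}R$. Next, ``lift'' the given $F$-division $(K_1,M,N)$ scheme to an $h_1 F$-division $(K_1,M,N)$ scheme by splitting each file packet into $h_1$ equal sub-packets; equivalently, replace each of $\mathbf{S}_k,\mathbf{A}_k,\mathbf{S}'_k$ by its Kronecker product with $I_{h_1}$. This lifting preserves rank condition \eqref{Eqn rank condition} (the ranks simply scale by $h_1$) and keeps the rate at $R$. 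Finally, concatenate one copy of the $(K_1+r)$-scheme with $q-1$ lifted $K_1$-schemes on disjoint user groups using the same block-direct-sum recipe as in Case $r=0$. The combined scheme has $K$ users, subpacketization $h_1 F$, and rate
\[
\frac{K_1+r}{K_1}R + (q-1)R \;=\; \Bigl(q+\frac{r}{K_1}\Bigr)R \;=\; \frac{K}{K_1}R.
\]

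The main obstacle is the bookkeeping that ensures the direct-sum construction actually satisfies \eqref{Eqn rank condition}: one must choose the zero-padding of the coding matrices $\overline{\mathbf{A}}_k$ and the projection in $\overline{\mathbf{S}}'_k$ so that every cross-group product $\overline{\mathbf{S}}'_k \overline{\mathbf{A}}_{k'}$ vanishes identically, making the stacked rank drop back to $\mathrm{rank}(\overline{\mathbf{S}}_k) = h_1 FM/N$ when $k$ and $k'$ are in different groups, while within a single group the rank condition follows from the rank condition of the sub-scheme. Everything else—the identification of $\gcd(K_1,r)$ with $\gcd(K_1,K)$, the rate additivity across groups, and the subpacketization match between the lifted and unlifted sub-schemes—is routine.
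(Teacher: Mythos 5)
Your proof is correct, and in the case $K-K_1>K_1$ it takes a genuinely different decomposition from the paper's. The paper writes $K=K_1+K_2$ and, when $K_2>K_1$, first direct-sums $m+1=\lfloor K_2/K_1\rfloor+1$ copies of the base scheme into an $F$-division $((m+1)K_1,M,N)$ scheme and \emph{then} applies Lemma \ref{le-fundamental recursive} once to the pair $\bigl((m+1)K_1,\,K_2'\bigr)$; you instead apply Lemma \ref{le-fundamental recursive} only to the small pair $(K_1,r)$ and then direct-sum the resulting $(K_1+r)$-user scheme with $q-1$ lifted copies of the base scheme. Both routes give rate $\frac{K}{K_1}R$, but they do not give the same subpacketization: the paper's route yields $\frac{(m+1)K_1}{\gcd((m+1)K_1,K_2')}F$, which can strictly exceed the claimed $h_1F=\frac{K_1}{\gcd(K_1,K)}F$ (e.g.\ $K_1=2$, $K=7$ gives $6F$ versus the stated $2F$), whereas your route, thanks to the identity $\gcd(K_1,r)=\gcd(K_1,K)$ and the Kronecker-with-identity lifting that equalizes all constituent subpacketizations at $h_1F$ before the direct sum, actually delivers the $h_1F$ asserted in the theorem statement. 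The two ingredients you flag as bookkeeping are both sound: the lifting preserves \eqref{Eqn rank condition} because
\begin{equation*}
\mathrm{rank}\left(
\begin{array}{c}
\mathbf{S}_k\otimes I_{h_1}\\
(\mathbf{S}'_k\otimes I_{h_1})(\mathbf{A}_{k'}\otimes I_{h_1})
\end{array}
\right)=h_1\cdot\mathrm{rank}\left(
\begin{array}{c}
\mathbf{S}_k\\
\mathbf{S}'_k\mathbf{A}_{k'}
\end{array}
\right),
\end{equation*}
and the block-diagonal direct sum forces every cross-group product $\overline{\mathbf{S}}'_k\overline{\mathbf{A}}_{k'}$ to vanish, which is exactly the (unstated) justification the paper itself relies on when it asserts that splitting users into groups gives total rate $(m+1)R$. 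In short, your argument is a valid proof and, as a bonus, matches the theorem's subpacketization claim more faithfully than the paper's own Case 2.
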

\begin{proof}
Assume that there exists an $F$-division $(K_1,M,N)$ linear coded caching scheme with rate $R$. Let $K_2 = K -K_1$. We have
 $h_1=\frac{K_1}{gcd(K_1,K_2)} = \frac{K_1}{gcd(K_1,K)}$. Now let us consider the value of $K_2$ as follows.
\begin{itemize}
\item If $K_2\leq K_1$, from Lemma
\ref{le-fundamental recursive}, there exists an $h_1F$-division $(K_1+K_2,M,N)$, i.e., $(K,M,N)$, linear coded caching scheme with $(1+\frac{h_2}{h_1})R=\frac{K}{K_1}R$, where $h_1=\frac{K_{1}}{gcd(K_{1},K_{2})}$ and $h_2=\frac{K_{2}}{gcd(K_{1},K_{2})}$.

\item If $K_2>K_1$, let $m=\lfloor \frac{K_2}{K_1}\rfloor$ and $K_2'=K_2-mK_1$. Clearly $m>1$ always holds. Now we  consider the first $(m+1)K_1$ users. We split these $(m+1)K_1$ users into $m+1$ groups such that each group has $K_1$ users, then use an $F$-division $(K_1,M,N)$ linear coded caching scheme with rate $R$ for these groups respectively. Clearly the total rate is $(m+1)R$. Then we have an $F$-division $((m+1)K_1,M,N)$ linear coded caching scheme with rate $(m+1)R$. Clearly when $K'_2=0$, our statement holds. When $K_2'\neq 0$, based on our $F$-division $((m+1)K_1,M,N)$ linear coded caching scheme we have a new $h_1F$-division $((m+1)K_1+K'_2,M,N)$, i.e., $(K,M,N)$ linear coded caching scheme with rate $(1+\frac{h_2}{h_1})(m+1)R$, where $h_1=\frac{(m+1)K_{1}}{gcd((m+1)K_{1},K'_{2})}$ and $h_2=\frac{K'_{2}}{gcd((m+1)K_{1},K_{2})}$. It is easy to check that
    $$(1+\frac{h_2}{h_1})(m+1)R=\frac{(m+1)K_1+K'_2}{(m+1)K_1}(m+1)R=\frac{K}{K_1}R.$$
\end{itemize}
The proof is completed.
\end{proof}

\section{Constructions from minimum storage regenerating codes}
\label{sec_MSRC}
We notice that the format of the condition \eqref{Eqn rank condition} is very similar to the necessary condition of the optimal bandwidth for minimum storage regenerating (MSR) code. Minimum storage regenerating (MSR) codes was introduced in \cite{DGWWR} for distributed storage systems recently. It is interesting to know if we can obtain some linear coded caching schemes from the constructions of MSR codes.
That is the purpose of this section.

Let us introduce the module and requirements of MSR codes briefly. Assume that a file of size $\mathcal{M}=KF$ denoted by
the column vector ${\bf W}\in\mathbf{F}_p^{K F}$ is partitioned into $k$ parts ${\bf W}=\{{\bf W}_0, {\bf W}_1, \cdots, {\bf W}_{K-1}\}$, each of size $ F$,
where $p$ is a prime power. We encode $\mathbf{W}$ using an $(n=K+r,K)$ MSR code $\mathcal{C}$ and  store it
across $K$ systematic and
$r$ parity storage nodes. Precisely, the first $K$  (systematic) nodes store the file parts ${\bf W}_0$, ${\bf W}_1$, $\ldots$, ${\bf W}_{K-1}$ in an uncoded form respectively,
and the parity nodes store linear combinations of ${\bf W}_0$, ${\bf W}_1$, $\ldots$, ${\bf W}_{K-1}$. Without loss of generality, it is assumed that the nodes $K+x$ stores ${\bf W}_{K+x}=\sum\limits_{i=0}^{K-1}\mathbf{A}_{x,i}{\bf W}_i$ for each $x\in[0,r)$ and some $ F\times F$ matrices $\mathbf{A}_{x,0},\cdots,\mathbf{A}_{x,K-1}$ over $\mathbf{F}_p$, where the {nonsingular} matrix $A_{x,i}$ is called the \emph{encoding matrix} for the $i$th systematic node, $0\le i< K$. { Here the nonsingular property of encoding matrix is necessary to guarantee the resiliency to any $n-k$ node failures. This is also called maximum distance separable (MDS) property.} Table \ref{MSR_Model} illustrates the structure of a $(K+r,K)$ MSR code which has been widely studied.
\begin{table}[htbp]
\begin{center}
\caption{Structure of a $(K+r,K)$ MSR code\label{MSR_Model}}
\begin{tabular}{|c|c|}
\hline
Systematic node & Systematic data \\
\hline
0 & ${\bf W}_0$ \\
\hline
\vdots & \vdots \\
\hline
$K-1$ & ${\bf W}_{K-1}$ \\
\hline
Parity node & Parity data \\
\hline
0 & ${\bf W}_{K}=\mathbf{A}_{0,0}{\bf W}_0+ \cdots+ \mathbf{A}_{K-1,0}{\bf W}_{K-1}$ \\
\hline
1 & ${\bf W}_{K+1}=\mathbf{A}_{0,1}{\bf W}_0+\cdots + \mathbf{A}_{0,1}{\bf W}_{K-1}$ \\
\hline
\vdots & \vdots \\
\hline
r-1 & ${\bf W}_{K+r-1}=\mathbf{A}_{0,r-1}{\bf W}_0+\cdots + \mathbf{A}_{0,r-1}{\bf W}_{K-1}$ \\
\hline
\end{tabular}
\end{center}
\end{table}

Clearly the code is uniquely defined by the matrix
\begin{eqnarray}\label{matrix code}
\mathcal{C}=\left(\begin{array}{ccc}
\mathbf{A}_{0,0} & \ldots &\mathbf{A}_{0,K-1}\\
\vdots  &\ddots  & \vdots\\
\mathbf{A}_{r-1,0} &\ldots  &\mathbf{A}_{r-1,K-1}
\end{array}\right).
\end{eqnarray}


\begin{lemma}(\cite{TWB})
\label{optimal bandwith}
When one information node fails, the code defined in \eqref{matrix code} has optimal repairing bandwidth if there exist subspaces $S_{i,m}$, $\ldots$, $S_{i,K}$ each of dimension $1/r$, such that for any $k$, $k'\in[0,K)$ and $i\in [0,r)$
\begin{eqnarray}\label{property MSR}
\mathrm{rank}\left(
  \begin{array}{c}
    S_{0,k}A_{0,k'} \\
    \vdots\\
    S_{r-1,k}A_{r-1,k'}
 \end{array}
\right)=\left\{\begin{array}{lll} F,&\mbox{if}\ \    k=k'
\\
\frac{ F}{r},&\mbox{otherwise}\end{array}\right.
\end{eqnarray}
\end{lemma}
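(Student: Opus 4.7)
The plan is to exhibit an explicit repair scheme whose bandwidth meets the MSR lower bound, using the subspaces $S_{i,k}$ to prescribe what is downloaded and the two rank conditions to certify that the scheme works.

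First I would describe the repair procedure for a failed systematic node $k$. From each parity node $K+i$ ($0\le i<r$), I project the stored content $\mathbf{W}_{K+i}=\sum_{k'=0}^{K-1}\mathbf{A}_{i,k'}\mathbf{W}_{k'}$ onto the row space of $S_{i,k}$, that is, the helper forwards $S_{i,k}\mathbf{W}_{K+i}=\sum_{k'}S_{i,k}\mathbf{A}_{i,k'}\mathbf{W}_{k'}$. Since $\dim S_{i,k}=F/r$, this contributes $F/r$ symbols per parity node, totalling $F$ symbols from parity nodes. To deal with surviving systematic nodes $k'\neq k$, observe that the hypothesis
\begin{equation*}
\mathrm{rank}\begin{pmatrix}S_{0,k}\mathbf{A}_{0,k'}\\ \vdots\\ S_{r-1,k}\mathbf{A}_{r-1,k'}\end{pmatrix}=\frac{F}{r}
\end{equation*}
implies that all the row spaces $S_{i,k}\mathbf{A}_{i,k'}$ ($i=0,\dots,r-1$) lie in a common $F/r$-dimensional subspace $T_{k,k'}$ of $\mathbb{F}_p^{F}$. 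Thus if helper $k'$ forwards $T_{k,k'}\mathbf{W}_{k'}$ ($F/r$ symbols), the newcomer can compute every $S_{i,k}\mathbf{A}_{i,k'}\mathbf{W}_{k'}$ by a linear transformation and subtract it from $S_{i,k}\mathbf{W}_{K+i}$.

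After the subtractions, the newcomer holds $S_{i,k}\mathbf{A}_{i,k}\mathbf{W}_{k}$ for every $i\in[0,r)$. Stacking these vectors yields the linear system whose coefficient matrix has rank $F$ by the first case of the hypothesis, so $\mathbf{W}_k$ is uniquely recovered. Summing the downloads, the newcomer contacts $r$ parity nodes and $K-1$ surviving systematic nodes, each contributing $F/r$ symbols, giving total bandwidth $(K+r-1)F/r$. This equals the cut-set lower bound for MSR codes with $d=n-1$ helpers, so the repair is optimal.

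The delicate step is the second one: the $F/r$-rank condition lets me choose a single subspace $T_{k,k'}$ that dominates all $r$ row spaces $S_{i,k}\mathbf{A}_{i,k'}$, rather than needing a separate download per parity projection. Without this collapse, each systematic helper would have to send up to $F$ symbols and the bandwidth would blow up by a factor of $r$. I would therefore spell out explicitly that $T_{k,k'}:=\sum_{i}\mathrm{rowspan}(S_{i,k}\mathbf{A}_{i,k'})$ has dimension exactly $F/r$ by hypothesis, and that $S_{i,k}\mathbf{A}_{i,k'}=M_{i,k,k'}\,T_{k,k'}$ for some $\frac{F}{r}\times\frac{F}{r}$ matrices $M_{i,k,k'}$ known to all parties from the code description, which is what enables the cancellation. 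Everything else reduces to invertibility of the stacked matrix and to counting downloaded symbols.
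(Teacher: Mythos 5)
Your proof is correct. Note, however, that the paper does not prove Lemma~\ref{optimal bandwith} at all --- it is imported verbatim from \cite{TWB} --- so there is no in-paper argument to compare against; what you have written is precisely the standard interference-alignment repair argument underlying that cited result: parity node $K+i$ forwards the $F/r$-dimensional projection $S_{i,k}\mathbf{W}_{K+i}$, the rank-$F/r$ condition collapses the $r$ interference spaces $S_{i,k}\mathbf{A}_{i,k'}$ into a single $F/r$-dimensional download $T_{k,k'}\mathbf{W}_{k'}$ from each surviving systematic node, and the rank-$F$ condition makes the stacked system invertible, giving total bandwidth $(K+r-1)F/r$, which matches the cut-set bound $\frac{dF}{d-K+1}$ at $d=n-1$. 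You also correctly read through the typographical slips in the statement as printed (the subspaces should be $S_{0,k},\ldots,S_{r-1,k}$ for each $k\in[0,K)$, each of dimension $F/r$, not ``$1/r$''), and your explicit identification of $T_{k,k'}$ with the row span of the stacked matrix, together with the factorization $S_{i,k}\mathbf{A}_{i,k'}=M_{i,k,k'}T_{k,k'}$, is exactly the step that makes the cancellation legitimate.
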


From aforemention introduction, the following relationship between coded caching scheme and minimum storage regenerating code can be obtained.
\begin{theorem}\label{thm2}
Given a $(n=K+r,K)$ MSR code $\mathcal{C}$ defined in \eqref{matrix code} with optimal repairing bandwidth {and the size of systematic node $F$,}
 an $ F$-division caching scheme for a $(K,M,N)$ caching system with $\frac{M}{N}=\frac{1}{r}$ can be obtained by $\mathcal{C}$.
\end{theorem}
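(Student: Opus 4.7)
The plan is to invoke Theorem \ref{th-main}: it suffices to exhibit caching, coding and decoding matrices $\mathbf{S}_k,\mathbf{A}_k,\mathbf{S}'_k$ satisfying the rank condition \eqref{Eqn rank condition} with $FM/N=F/r$. Since $\mathcal{C}$ has optimal repairing bandwidth, Lemma \ref{optimal bandwith} supplies subspaces $S_{i,k}$ of dimension $F/r$ (for $i\in[0,r)$, $k\in[0,K)$) such that the stacked matrix
\[\mathbf{M}_{k,k'}=\left(\begin{array}{c}S_{0,k}A_{0,k'}\\ \vdots\\ S_{r-1,k}A_{r-1,k'}\end{array}\right)\]
has rank $F$ when $k=k'$ and rank $F/r$ otherwise. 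This is exactly the dichotomy demanded by \eqref{Eqn rank condition}, so the MSR ingredients are a natural source for the three caching matrices.

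Setting the rate $R=r$, I would choose $\mathbf{A}_k$ to be the column-stacked $rF\times F$ matrix $(A_{0,k};A_{1,k};\ldots;A_{r-1,k})$ and $\mathbf{S}'_k$ to be the $F(r-1)/r\times rF$ matrix
\[\mathbf{S}'_k=\left(\begin{array}{c|c}\mathbf{0}&\mathrm{diag}(S_{1,k},\ldots,S_{r-1,k})\end{array}\right),\]
i.e.\ the block diagonal formed by $S_{1,k},\ldots,S_{r-1,k}$ with a zero block-column of width $F$ prepended, so that $\mathbf{S}'_k\mathbf{A}_{k'}$ returns the bottom $r-1$ blocks of $\mathbf{M}_{k,k'}$. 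For the caching matrix, a natural candidate is to let $\mathbf{S}_k$ be a basis matrix of the row space of $S_{0,k}A_{0,k}$, yielding an $F/r\times F$ matrix. With these choices the diagonal case $k=k'$ is handled immediately: the stack $\left(\begin{array}{c}\mathbf{S}_k\\ \mathbf{S}'_k\mathbf{A}_k\end{array}\right)$ coincides (up to row rearrangement) with $\mathbf{M}_{k,k}$, whose rank is $F$ by Lemma \ref{optimal bandwith}.

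The main obstacle is the off-diagonal case $k\neq k'$. The stack $\left(\begin{array}{c}\mathbf{S}_k\\ \mathbf{S}'_k\mathbf{A}_{k'}\end{array}\right)=(S_{0,k}A_{0,k};\,S_{1,k}A_{1,k'};\ldots;S_{r-1,k}A_{r-1,k'})$ mixes $A_{0,k}$ with $A_{i,k'}$ for $i\geq 1$, whereas Lemma \ref{optimal bandwith} only controls the pure stack $\mathbf{M}_{k,k'}$. The interference-alignment theory underlying Lemma \ref{optimal bandwith} guarantees that the bottom $r-1$ blocks already lie inside the $F/r$-dimensional subspace $V_{k,k'}:=\mathrm{rowspan}(\mathbf{M}_{k,k'})$; the missing piece is to force $S_{0,k}A_{0,k}$ into the \emph{same} alignment subspace for every $k'\neq k$. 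For the concrete MSR constructions behind Lemma \ref{optimal bandwith} (such as Tamo-Wang-Bruck or product-matrix codes) the subspaces $V_{k,k'}$ coincide with a common $V_k$ independent of $k'$, and $\mathbf{S}_k$ can be taken to be a basis of $V_k$ (adjusted, if needed, to stay complementary to the row span of the lower blocks). Invoking this structural property closes the off-diagonal case and, via Theorem \ref{th-main}, yields the claimed $F$-division $(K,M,N)$ coded caching scheme with $M/N=1/r$.
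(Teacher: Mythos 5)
Your construction is essentially the paper's own proof: the paper likewise sets $\mathbf{S}_k=S_{0,k}A_{0,k}$, $\mathbf{A}_k=(A_{1,k};\ldots;A_{r-1,k})$ and $\mathbf{S}'_k=\mathrm{diag}(S_{1,k},\ldots,S_{r-1,k})$, and reduces the claim to Theorem \ref{th-main} via the rank dichotomy \eqref{property MSR} of Lemma \ref{optimal bandwith}.

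Two remarks. First, you keep the block $A_{0,k}$ on top of the coding matrix and compensate with a zero block-column in $\mathbf{S}'_k$; that top $F$ rows of the broadcast are never used by any decoder, so your scheme has rate $R=r$ where the paper's choice achieves $R=r-1$ (the value quoted in Corollary \ref{cor-Longmds-caching}). The theorem as stated does not fix the rate, so this is only an inefficiency, cured by deleting the redundant block. Second, the off-diagonal obstacle you isolate is genuine, and it is exactly the step the paper dismisses with ``Clearly'': condition \eqref{property MSR} controls the stack whose top block is $S_{0,k}A_{0,k'}$, whereas the caching condition needs the stack whose top block is the fixed matrix $S_{0,k}A_{0,k}$. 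Your appeal to structural properties of concrete MSR constructions is one way to close this; the cleaner and standard route is to normalize the code so that $A_{0,k}=I$ for every $k$ (as is done in \cite{WTB}, the source of Lemma \ref{le-LongmDS}), after which $S_{0,k}A_{0,k}=S_{0,k}=S_{0,k}A_{0,k'}$ and the two stacks literally coincide. Be careful, though, with your fallback of redefining $\mathbf{S}_k$ as a basis of the common alignment space $V_k$: that would re-open the diagonal case, since Lemma \ref{optimal bandwith} only guarantees that $S_{0,k}A_{0,k}$ (not an arbitrary basis of $V_k$) completes the lower blocks $S_{i,k}A_{i,k}$ to rank $F$; the parenthetical ``adjusted, if needed, to stay complementary'' is doing real, unproved work there.
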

\begin{proof}
From Lemma \ref{optimal bandwith}, there exist subspaces $S_{x,k}$ each of dimension $\frac{ F}{r}$, such that formula \eqref{property MSR} holds for any $k$, $k'\in[0,K)$ and $x\in [0,r)$. For each $k\in [0,K)$, we can define
\begin{eqnarray*}
S_k=S_{0,k}A_{0,k},\
S'_{k}=\left(\begin{array}{ccc}
S_{1,k} &        &\\
        &\ddots  & \\
        &        &S_{r-1,k}
\end{array}\right)\ \mbox{and}\
A_{k}=\left(\begin{array}{c}
A_{1,k}\\
\vdots \\
A_{r-1,k}
\end{array}\right).
\end{eqnarray*}
Clearly, for any $k$, $k'\in[k]$, we have
\begin{eqnarray*}
\mathrm{rank}\left(
  \begin{array}{c}
    S_k \\
    S'_{k}A_{k'}
 \end{array}
\right)=\mathrm{rank}\left(
  \begin{array}{c}
    S_{0,k}A_{0,k'} \\
    \vdots\\
    S_{r-1,k}A_{r-1,k'}
 \end{array}
\right)=\left\{\begin{array}{lll} F,&\mbox{if}\ \    k=k'
\\
\frac{ F}{r},&\mbox{otherwise}\end{array}\right.
\end{eqnarray*}
That completes the proof.
\end{proof}

From Theorem \ref{thm2}, to use MSR codes, we want the number of systematic nodes is as large as possible when we construct a linear coded caching scheme. Wang et al., in \cite{WTB} proposed a MSR code with optimal repair bandwidth such that the number of systematic nodes $K$ is the largest.
\begin{lemma}
\label{le-LongmDS}
There exists a $(n=q+K,K=(q+1)m)$ MSR with optimal repair bandwidth and the size of a node $F=q^m$ over $\mathbb{F}_p$ for any positive integer $m$ and $q$, where $p$ is a prime power and larger than $q$.
\end{lemma}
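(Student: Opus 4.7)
The plan is to follow the tensor-product style construction due to Wang, Tamo, and Bruck. First I would index each of the $F=q^m$ coordinates of a node by an $m$-tuple $\mathbf{s}=(s_0,\ldots,s_{m-1})\in[0,q)^m$, and label the $K=(q+1)m$ systematic nodes by pairs $(u,v)$ with $u\in[0,m)$ and $v\in[0,q+1)$. For each parity $x\in[0,q)$, I would define the encoding matrix $\mathbf{A}_{x,(u,v)}$ to be diagonal in the standard basis, with the $\mathbf{s}$-th diagonal entry a scalar $\alpha_{x,u,v,s_u}\in\mathbb{F}_p$ depending only on $x$, $v$ and the single coordinate $s_u$. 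Thus all encoding matrices within the $u$-th group ($u$ fixed, $v$ varying) are simultaneously diagonal, hence commute; this is the invariant-subspace structure that drives the optimal-repair argument.

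Next I would verify the optimal-repair condition of Lemma~\ref{optimal bandwith}. For the systematic node $k=(u,v)$ I would take $S_{x,k}$ to be the $q^{m-1}$-dimensional coordinate subspace spanned by $\{\mathbf{e}_\mathbf{s}:s_u=c_x\}$ for a judiciously chosen value $c_x\in[0,q)$ depending on $x,v$. Because $\mathbf{A}_{x,k'}$ is diagonal, $S_{x,k}\mathbf{A}_{x,k'}$ is again supported on the same coordinate slice; when $k'=(u',v')$ with $u'\neq u$ or with $u'=u$ and $v'\neq v$, all $q$ terms $S_{x,k}\mathbf{A}_{x,k'}$ collapse to one and the same $q^{m-1}$-dimensional subspace, giving rank $F/q$. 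When $k'=k$, stacking the $q$ diagonal blocks yields a coefficient matrix of Vandermonde type in the scalars $\alpha_{x,u,v,\cdot}$, so the stacked rank is the full $F=q^m$ exactly when these scalars are chosen distinct across $x$ for each $(u,v)$.

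Third, I would establish the MDS property: any $K$ of the $n=K+q$ nodes should determine the message. This reduces to showing that every $q\times q$ block submatrix obtained by picking one encoding column from each of $q$ systematic nodes is invertible over $\mathbb{F}_p$. Because each such block submatrix is a block-diagonal with $q\times q$ Vandermonde-like blocks in the chosen scalars, I would engineer the scalars $\alpha_{x,u,v,s_u}$ to be values of a fixed Vandermonde system at $(q+1)m$ distinct evaluation points; the hypothesis $p>q$ plus a field-extension argument (or a direct count) guarantees the existence of enough distinct elements, and all relevant subdeterminants become nonzero Vandermonde determinants.

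The main obstacle will be reconciling the two sets of constraints: the repair step wants the diagonal scalars within each group $u$ to have a specific "$q$-out-of-$q+1$" degenerate pattern so that the subspaces $S_{x,k}\mathbf{A}_{x,k'}$ collapse cleanly for $k'$ in the same group, while the MDS step wants all $q\times q$ evaluation minors to be Vandermonde-nonzero across \emph{different} groups as well. The trick, which is the heart of Wang--Tamo--Bruck, is to allow $q+1$ labels per group while only using $q$ parities, so that for any choice of $q$ failed nodes one can always find at least one "safe" label per group to which the Vandermonde argument applies; carrying this counting argument through carefully, together with the field-size bound, is where most of the real work lies.
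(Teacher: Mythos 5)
The paper does not actually prove Lemma \ref{le-LongmDS}; it quotes it from \cite{WTB}, and the only guidance it gives is the remark at the end of Section VI that each encoding matrix is obtained by choosing $q$ of the $q+1$ subspaces $\{\mathcal{E}_{u,0},\ldots,\mathcal{E}_{u,q-1},\mathcal{Q}_u\}$ from \eqref{Eqn_base vector1} and \eqref{Eqn_sum vector1} as eigenspaces with suitable eigenvalues. Your ansatz contradicts this in a way that is fatal. If every $\mathbf{A}_{x,(u,v)}$ is diagonal in the standard basis with entries depending only on $s_u$, then every coordinate slice $\mathcal{E}_{u,c}=\mathrm{span}\{\mathbf{e}_{\mathbf{s}}:s_u=c\}$ is an eigenspace of every encoding matrix in group $u$. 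For the ``collapse'' half of \eqref{property MSR} you are then forced to take $c_x$ independent of $x$ (otherwise the $q$ distinct slices $\mathcal{E}_{u,c_0},\ldots,\mathcal{E}_{u,c_{q-1}}$ already stack to rank $F$ against a node $k'\neq k$ of the same group); but with $c_x\equiv c$, each block $S_{x,k}\mathbf{A}_{x,k}$ is just the scalar $\alpha_{x,u,v,c}$ times the same slice $\mathcal{E}_{u,c}$, so the stacked rank for $k'=k$ is $q^{m-1}$, not $F=q^m$. No choice of scalars rescues this: the Vandermonde argument requires the repair subspace to have a nonzero component in each of the $q$ eigenspaces, which is impossible when it is itself an eigenspace. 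The actual construction uses $q+1$ invariant subspaces per group, namely $\mathcal{E}_{u,0},\ldots,\mathcal{E}_{u,q-1}$ \emph{and} $\mathcal{Q}_u$; the node whose repair subspace is $V$ receives encoding matrices whose eigenspaces are the \emph{other} $q$ subspaces, with distinct eigenvalues. For $v<q$ this forces $\mathcal{Q}_u$ to be an eigenspace, so the matrix is not diagonal in the standard basis --- it is a diagonal part plus a shift along the $u$-th digit, exactly the shape of $\mathbf{C}_{u,v,v'}$ in \eqref{eq-cons1-2}; only the $(u,q)$-node, whose repair subspace is $\mathcal{Q}_u$, gets a genuinely diagonal matrix.

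A secondary problem is the MDS step. Recovering the file from any $K$ of the $n$ nodes requires every $t\times t$ block submatrix of \eqref{matrix code} with $t\le q$ to be invertible as a $tF\times tF$ matrix, which is strictly stronger than each individual $F\times F$ block being nonsingular; your reduction to ``every $q\times q$ block submatrix \ldots obtained by picking one encoding column from each of $q$ systematic nodes'' conflates the two, and the ``one safe label per group'' count you sketch is not how that argument runs. It is in this block-invertibility step, not in the repair step, that the hypothesis $p>q$ is consumed. As written, the proposal would not compile into a correct proof of the lemma.
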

From Theorem \ref{thm2} and Lemma \ref{le-LongmDS}, the following result can be obtained.
\begin{corollary}
\label{cor-Longmds-caching}
For any positive integers $m$ and $q \geq 2$, there exists a $q^m$-division $((q+1)m,M,N)$ coded caching scheme with $\frac{M}{N}=\frac{1}{r}$ and $R=q-1$ over $\mathbb{F}_p$, where $p>q$ is a prime power.
\end{corollary}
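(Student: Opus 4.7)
The plan is to combine Lemma \ref{le-LongmDS} with Theorem \ref{thm2}, essentially reading off the coded-caching parameters directly from the MSR code supplied by Wang et al.

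First I would apply Lemma \ref{le-LongmDS} with the given $m$ and $q$ to obtain an $(n,K)$ MSR code with $K=(q+1)m$ systematic nodes, per-node size $F=q^m$, and optimal repair bandwidth over $\mathbb{F}_p$ for any prime power $p>q$. Since $n=q+K$, the number of parity nodes is $r=n-K=q$. This identifies the ``$r$'' appearing in the corollary: it is the redundancy of the underlying MSR code, and hence $\frac{M}{N}=\frac{1}{r}=\frac{1}{q}$.

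Next I would feed this MSR code into Theorem \ref{thm2}. The theorem already performs the conversion: by Lemma \ref{optimal bandwith}, optimal repair bandwidth guarantees subspaces $S_{x,k}$ of dimension $F/r=q^{m-1}$ satisfying the rank condition \eqref{property MSR}, and the proof of Theorem \ref{thm2} assembles the caching matrices $S_k=S_{0,k}A_{0,k}$, the decoding matrices $S'_k=diag(S_{1,k},\ldots,S_{r-1,k})$, and the coding matrices $A_k=(A_{1,k}^T,\ldots,A_{r-1,k}^T)^T$. By construction these satisfy \eqref{Eqn rank condition} of Theorem \ref{th-main}, so we obtain a valid $F$-division linear coded caching scheme for the $(K,M,N)$ system with $M/N=1/q$.

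The last numerical check is the rate: each coding matrix $A_k$ stacks $r-1=q-1$ blocks, each of size $F\times F$, so the broadcast vector $\mathbf{X}_{\mathbf{d}}$ contains $(q-1)F$ symbols and therefore $R=(q-1)F/F=q-1$, matching the claim. I do not anticipate a genuine obstacle: once Lemma \ref{le-LongmDS} and Theorem \ref{thm2} are invoked, everything reduces to parameter bookkeeping. The only subtlety worth flagging in the writeup is that, unlike the $\mathbb{F}_2$ constructions of Section \ref{sec-new-construction}, the resulting scheme lives over $\mathbb{F}_p$ with $p>q$, since this is the field over which the Wang--Tamo--Bruck MSR code of Lemma \ref{le-LongmDS} is guaranteed to exist.
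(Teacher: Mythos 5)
Your proposal is correct and follows exactly the route the paper intends: Corollary \ref{cor-Longmds-caching} is stated as an immediate consequence of Lemma \ref{le-LongmDS} (which supplies the $(q+K,K=(q+1)m)$ MSR code with node size $F=q^m$ over $\mathbb{F}_p$, $p>q$) combined with Theorem \ref{thm2} (which converts it into a linear caching scheme with $\frac{M}{N}=\frac{1}{r}=\frac{1}{q}$). Your additional bookkeeping---identifying $r=n-K=q$ and reading off $R=r-1=q-1$ from the $(r-1)F\times F$ shape of the coding matrices---is exactly the detail the paper leaves implicit, so there is nothing to correct.
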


Clearly for any positive integers $q\geq 2$ and $m$, when $z=1$ the scheme from Theorem \ref{th-main-R} and the scheme from Corollary \ref{cor-Longmds-caching} have all the same parameters except the computed field. However the scheme from Theorem \ref{th-main-R} is over $\mathbb{F}_2$ , while the scheme from Corollary \ref{cor-Longmds-caching} is over  $\mathbb{F}_p$, where $p>q$ is a prime power. In fact when constructing coding matrices $\mathbf{A}_{k}$ in Theorem \ref{thm2}, the MDS property is not necessary. It is interesting to know if
we can ignore the MDS property of the MSR to get better results.  So we have the following question.

\noindent{\bf Open problem:} How to construct linear coded aching schemes by modifying
 the constructions of MSR codes  for reducing the computed field or getting more classes of schemes.

Even though our work is completed independently to the work in \cite{WTB}, the used key sets in \eqref{Eqn_base vector1} and \eqref{Eqn_sum vector1} are the same. In \cite{WTB}, the authors first set that the encoding matrices $A_{0,k}$, $0\leq k<K$, equals identity matrix $I$. Then constructed each other encoding matrix, say $A_{i,k}$,  by choosing appropriate $q$ elements from set $\{\mathcal{E}_{u,v},\ \ \mathcal{Q}_{u}\ |\ 0\leq u<m, 0\le v<q\}$ as eigenvectors and choosing related $q$ non-zero elements from $\mathbb{F}_p$ as eigenvalues. For the details the interested reader could be referred to \cite{WTB}. However this method is not fit for further reducing the computed field when we construct a linear coded caching scheme.

\section{Conclusion}
\label{sec_conclusion}
In this paper, we consider efficient
constructions of coded caching schemes with good rate, lower subpacketization level and flexible number of users.
For this purpose,
  we first characterized a general coded caching scheme using linear algorithms, which generalized the previous
  main construction. Then we  showed that designing such a coded caching scheme depends
   on constructing three classes of matrices satisfying some rank conditions.
    We gave a concrete construction for several classes of new coded caching schemes over $\mathbb{F}_2$  by constructing these three classes of matrices. The rate of our new construction
    is the same as the scheme construct by Yan et al. in \cite{YCTC}, while the packet number is significantly smaller. This construction showed
    that the concept of  general linear caching scheme is very  promising for finding good code caching schemes.
    Finally,
    using the general definition of  linear caching scheme,
     we proved that the optimal  minimum storage regenerating codes can be used to construct coded caching schemes.

\vskip 1 cm


\section*{Appendix A: Proof of Theorem \ref{th-PDA-Linear}}
\begin{proof}
From the characterization of a linear coded caching scheme, it is sufficient to show that for any given a $(K,F,Z,S)$ PDA $\mathbf{P}=(p_{j,k})$, $0\leq j<F$, $0\leq k<K$, can be represented by the three classes of matrices $\mathbf{S}_k$, $\mathbf{A}_k$ and $\mathbf{S}'_k$, where $K$, $F$, $Z$ and $S$ are positive integers. Let $Z=\frac{FM}{N}$. For each $k=0$, $1$, $\ldots$, $K-1$, define a $Z\times F$ caching matrix
\begin{equation*}
\mathbf{S}_k=
\left(\begin{array}{c}
\psi(p_{0,k}){\bf e}_0\\
\psi(p_{1,k}){\bf e}_1\\
\vdots\\
\psi(p_{F-1,k}){\bf e}_{F-1}
\end{array}
\right).
\end{equation*}
where $\psi(p_{j,k})=1$ if $p_{j,k}=*$ otherwise $\psi(p_{j,k})=0$. For any request ${\bf d}$, servers broadcasts the following coded signals from Lines 8-10 of Algorithm \ref{alg:PDA}.
\begin{equation*}
\begin{array}{c}
\sum_{j\in[0,F),k\in[0,K),p_{j,k}=0}\mathbf{W}_{d_k,j},\\[0.3cm]
\sum_{j\in[0,F),k\in[0,K),p_{j,k}=1}\mathbf{W}_{d_k,j},\\
\vdots\\
\sum_{j\in[0,F),k\in[0,K),p_{j,k}=S-1}\mathbf{W}_{d_k,j}.
\end{array}
\end{equation*}
This is

\begin{eqnarray*}
\mathbf{X}_{{\bf d}}&=&\left(\begin{array}{c}
\sum_{j\in[0,F),k\in[0,K),p_{j,k}=0}\mathbf{W}_{d_k,j}\\
\sum_{j\in[0,F),k\in[0,K),p_{j,k}=1}\mathbf{W}_{d_k,j}\\
\vdots\\
\sum_{j\in[0,F),k\in[0,K),p_{j,k}=S-1}\mathbf{W}_{d_k,j}
\end{array}
\right)\\
&=&\sum\limits_{k=0}^{K-1}\left(\begin{array}{c}
\sum_{j\in[0,F),p_{j,k}=0}\mathbf{W}_{d_k,j}\\
\sum_{j\in[0,F),p_{j,k}=1}\mathbf{W}_{d_k,j}\\
\vdots\\
\sum_{j\in[0,F),p_{j,k}=S-1}\mathbf{W}_{d_k,j}
\end{array}
\right)=\sum\limits_{k=0}^{K-1}\left(\begin{array}{c}
\sum_{j\in[0,F)}\chi_0(p_{j,k})\mathbf{W}_{d_k,j}\\
\sum_{j\in[0,F)}\chi_1(p_{j,k})\mathbf{W}_{d_k,j}\\
\vdots\\
\sum_{j\in[0,F)}\chi_{S-1}(p_{j,k})\mathbf{W}_{d_k,j}
\end{array}
\right)\\
&=&
\sum\limits_{k=0}^{K-1}\left(\begin{array}{cccc}
\chi_0(p_{0,k})&\chi_0(p_{1,k})&\ldots&\chi_0(p_{F-1,k})\\
\chi_1(p_{0,k})&\chi_1(p_{1,k})&\ldots&\chi_1(p_{F-1,k})\\
\vdots\\
\chi_{S-1}(p_{0,k})&\chi_{S-1}(p_{1,k})&\ldots&\chi_{S-1}(p_{F-1,k})\\
\end{array}
\right)
\left(\begin{array}{c}
\mathbf{W}_{d_k,0}\\
\mathbf{W}_{d_k,1}\\
\vdots\\
\mathbf{W}_{d_k,F-1}
\end{array}
\right)\\
&=&\sum\limits_{k=0}^{K-1}\mathbf{A}_k
\left(\begin{array}{c}
\mathbf{W}_{d_k,0}\\
\mathbf{W}_{d_k,1}\\
\vdots\\
\mathbf{W}_{d_k,F-1}
\end{array}
\right)=\sum\limits_{k=0}^{K-1}\mathbf{A}_k\mathbf{W}_{d_k}
\end{eqnarray*}
where $\chi_s(p_{j,k})=1$ if $p_{j,k}=s$ otherwise $\psi(p_{j,k})=0$.
Then the coding matrices $\mathbf{A}_k$ can be obtained by the above equation.

Finally let us consider the decoding matrix $\mathbf{S}'_k$. From the definition of a PDA, we know that there are exactly $F-\frac{M}{N}F$ rows containing an integer $1$ in $\mathbf{A}_k$. Suppose these rows are  $i_1$, $i_2$, $\ldots$, $i_{F-\frac{M}{N}F}$. Define
\begin{eqnarray*}
\mathbf{S}'_k=
\left(
\begin{array}{c}
{\bf e}_{i_1}\\
{\bf e}_{i_2}\\
\ldots\\
{\bf e}_{i_{F-\frac{M}{N}F}}
\end{array}
\right)
\end{eqnarray*}
It is readily checked that these matrices forms the decoding matrices.
\end{proof}

\section*{Appendix B: Proof of Lemma \ref{lem1}}
\begin{proof}
We will verify the statements according to the cases $v_3<q$ and $v_3=q$.
\begin{itemize}
\item First let us show the statement (I). When $v_3<q$, by \eqref{Eqn_base vector1} and \eqref{eq-cons1-2} we have
\begin{eqnarray}
\label{eq-1}
\begin{split}
\mathcal{E}_{u_1,v_1}\mathbf{C}_{u_2,v_3,v_2}&=\mathcal{E}_{u_1,v_1}\left( \left(
\begin{matrix}
\phi_{u_2,v_3}(0){\bf e}_{\varphi_{u_2,v_2}(0)}\\
\phi_{u_2,v_3}(1){\bf e}_{\varphi_{u_2,v_2}(1)}\\
\ldots\\
\phi_{u_2,v_3}(q^m-1){\bf e}_{{\varphi_{u_2,v_2}(q^m-1)}}
\end{matrix}\right)+\left(
\begin{matrix}
\phi_{u_2,v_2}(0){\bf e}_0\\
\phi_{u_2,v_2}(1){\bf e}_1\\
\ldots\\
\phi_{u_2,v_2}(q^m-1){\bf e}_{q^m-1}
\end{matrix}\right)  \right)\\
&= \mathcal{E}_{u_1,v_1}\left(
\begin{matrix}
\phi_{u_2,v_3}(0){\bf e}_{\varphi_{u_2,v_2}(0)}\\
\phi_{u_2,v_3}(1){\bf e}_{\varphi_{u_2,v_2}(1)}\\
\ldots\\
\phi_{u_2,v_3}(q^m-1){\bf e}_{{\varphi_{u_2,v_2}(q^m-1)}}
\end{matrix}\right)+\mathcal{E}_{u_1,v_1}\left(
\begin{matrix}
\phi_{u_2,v_2}(0){\bf e}_0\\
\phi_{u_2,v_2}(1){\bf e}_1\\
\ldots\\
\phi_{u_2,v_2}(q^m-1){\bf e}_{q^m-1}
\end{matrix}\right)\\
&=\left\{{\bf e}_{\varphi_{u_2,v_2}(s)}\ |\ s\in
\mathcal{V}_{u_1,v_{1}}\bigcap\mathcal{V}_{u_2,v_{3}}\right\}+\left\{{\bf e}_s\ |\ s\in\mathcal{V}_{u_1,v_{1}}\bigcap\mathcal{V}_{u_2,v_{2}}\right\}\\
&=\mathcal{E}_{u_1,v_{1}}\bigcap\mathcal{E}_{u_2,v_{2}}\subseteq\mathcal{E}_{u_1,v_{1}}.
\end{split}
\end{eqnarray}
When $v_3=q$, by \eqref{Eqn_base vector1} and \eqref{eq-cons1-3} we have
\begin{eqnarray}
\label{eq-2}
\mathcal{E}_{u_1,v_1}\mathbf{C}_{u_2,q,v_2}&=&\mathcal{E}_{u_1,v_1} \left(
\begin{matrix}
\phi_{u_2,v_2}(0){\bf e}_0\\
\phi_{u_2,v_2}(1){\bf e}_1\\
\ldots\\
\phi_{u_2,v_2}(q^m-1){\bf e}_{q^m-1}
\end{matrix}\right)=\left\{{\bf e}_s\ |\ s\in\mathcal{V}_{u_1,v_{1}}\bigcap\mathcal{V}_{u_2,v_{2}}\right\}\subseteq\mathcal{E}_{u_1,v_{1}}.
\end{eqnarray}
So the statement (I) holds by \eqref{eq-1} and \eqref{eq-2}.
\item Now let us show the statement (II). When $v_3<q$. Similar to the proof of \eqref{eq-1} we have
\begin{eqnarray}
\label{eq-3}
\mathcal{E}_{u_1,v_1}\mathbf{C}_{u_1,v_2,v_2}=
\left\{{\bf e}_{\varphi_{u_1,v_2}(s)}\ |\ s\in
\mathcal{V}_{u_1,v_{1}}\bigcap\mathcal{V}_{u_1,v_{3}}\right\}+\left\{{\bf e}_s\ |\ s\in\mathcal{V}_{u_1,v_{1}}\bigcap\mathcal{V}_{u_1,v_{2}}\right\}
\end{eqnarray}
It is not difficult to check that
\begin{itemize}
\item if $v_1=v_3$ \eqref{eq-3} can be written as
\begin{eqnarray}
\label{eq-4}
\mathcal{E}_{u_1,v_1}\mathbf{C}_{u_1,v_2,v_2}=\mathcal{E}_{u_1,v_{2}}
+\left\{{\bf e}_s\ |\ s\in\mathcal{V}_{u_1,v_{1}}\bigcap\mathcal{V}_{u_1,v_{2}}\right\}=\mathcal{E}_{u_1,v_{2}},
\end{eqnarray}
\item and if $v_1\neq v_3$  \eqref{eq-3} can be written as
\begin{eqnarray}
\label{eq-5}
\mathcal{E}_{u_1,v_1}\mathbf{C}_{u_1,v_2,v_2}=\emptyset
+\left\{{\bf e}_s\ |\ s\in\mathcal{V}_{u_1,v_{1}}\bigcap\mathcal{V}_{u_2,v_{2}}\right\}\subseteq\mathcal{E}_{u_1,v_{1}}.
\end{eqnarray}
\end{itemize}
When $v_3=q$, similar to the proof of \eqref{eq-2}, we have
\begin{eqnarray}
\label{eq-6}
\mathcal{E}_{u_1,v_1}\mathbf{C}_{u_1,q,v_2} =\left\{{\bf e}_s\ |\ s\in\mathcal{V}_{u_1,v_{1}}\bigcap\mathcal{V}_{u_2,v_{2}}\right\}\subseteq\mathcal{E}_{u_1,v_{1}}.
\end{eqnarray}
So the statement (II) holds by \eqref{eq-4}, \eqref{eq-5} and \eqref{eq-6}.
\item When $v_3=q$, by \eqref{Eqn_sum vector1}, \eqref{eq-cons1-3} and \eqref{Eqn_base vector1} we have
\begin{eqnarray}
\label{eq-7}
\begin{split}
\mathcal{Q}_{u_1}\mathbf{C}_{u_2,q,v_2}&=\left(
\begin{matrix}
\sum\limits_{s_u=0}^{q-1}{\bf e}_{s}
\end{matrix}\right)_{s_{u'}\in[0,q),u'\in[0,m)\setminus\{u_1\}}\mathbf{C}_{u_2,q,v_2}
\\
&=\left(
\begin{matrix}
\sum\limits_{s_u=0}^{q-1}{\bf e}_{s}
\end{matrix}\right)_{s_{u'}\in[0,q),u'\in[0,m)\setminus\{u_1\}}\left(
\begin{matrix}
\phi_{u_2,v_2}(0){\bf e}_0\\
\phi_{u_2,v_2}(1){\bf e}_1\\
\ldots\\
\phi_{u_2,v_2}(q^m-1){\bf e}_{q^m-1}
\end{matrix}\right)\\
&=\left(
\begin{matrix}
\sum\limits_{s_u=0}^{q-1}{\bf e}_{s}\left(
\begin{matrix}
\phi_{u_2,v_2}(0){\bf e}_0\\
\phi_{u_2,v_2}(1){\bf e}_1\\
\ldots\\
\phi_{u_2,v_2}(q^m-1){\bf e}_{q^m-1}
\end{matrix}\right)
\end{matrix}\right)_{s_{u'}\in[0,q),u'\in[0,m)\setminus\{u_1\}}\\
&=\left(
\begin{matrix}
\sum\limits_{s_u=0}^{q-1}{\bf e}_{s}
\end{matrix}\right)_{s_{u'}\in[0,q),s_{u_2}=v_2,u'\in[0,m)\setminus\{u_1,u_2\}}
\end{split}
\end{eqnarray}
It is not difficult to check that
\begin{itemize}
\item if $u_1=u_2$ \eqref{eq-7} can be written as
\begin{eqnarray}
\label{eq-8}
\mathcal{Q}_{u_1}\mathbf{C}_{u_2,q,v_2}=\left(
\begin{matrix}
\sum\limits_{s_u=0}^{q-1}{\bf e}_{s}
\end{matrix}\right)_{s_{u'}\in[0,q),s_{u_1}=v_2,u'\in[0,m)\setminus\{u_1,u_2\}}=\mathcal{E}_{u_1,v_{2}},
\end{eqnarray}
\item and if $u_1\neq u_2$  \eqref{eq-7} can be written as
\begin{eqnarray}
\label{eq-9}
\begin{split}
\mathcal{Q}_{u_1}\mathbf{C}_{u_2,q,v_2}&=\left(
\begin{matrix}
\sum\limits_{s_u=0}^{q-1}{\bf e}_{s}
\end{matrix}\right)_{s_{u'}\in[0,q),s_{u_2}=v_2,u'\in[0,m)\setminus\{u_1,u_2\}}\\
&\subseteq \mathcal{Q}_{u_1}.
\end{split}
\end{eqnarray}
\end{itemize}
When $v_3<q$, by \eqref{Eqn_base vector1}, \eqref{Eqn_sum vector1}, \eqref{eq-cons1-2} and \eqref{eq-cons1-3} we have
 \begin{eqnarray}
\label{eq-10}
\begin{split}
&\mathcal{Q}_{u_1}\mathbf{C}_{u_2,v_3,v_2}\\
&=\left(
\begin{matrix}
\sum\limits_{s_u=0}^{q-1}{\bf e}_{s}
\end{matrix}\right)_{s_{u'}\in[0,q),u'\in[0,m)\setminus\{u_1\}}\mathbf{C}_{u_2,v_3,v_2}
\\
&=\left(
\begin{matrix}
\sum\limits_{s_u=0}^{q-1}{\bf e}_{s}
\end{matrix}\left( \left(
\begin{matrix}
\phi_{u_2,v_3}(0){\bf e}_{\varphi_{u_2,v_2}(0)}\\
\phi_{u_2,v_3}(1){\bf e}_{\varphi_{u_2,v_2}(1)}\\
\ldots\\
\phi_{u_2,v_3}(q^m-1){\bf e}_{{\varphi_{u_2,v_2}(q^m-1)}}
\end{matrix}\right)+\left(
\begin{matrix}
\phi_{u_2,v_2}(0){\bf e}_0\\
\phi_{u_2,v_2}(1){\bf e}_1\\
\ldots\\
\phi_{u_2,v_2}(q^m-1){\bf e}_{q^m-1}
\end{matrix}\right)  \right)\right)_{s_{u'}\in[0,q),u'\in[0,m)\setminus\{u_1\}}\\
&=\left({\bf e}_{\varphi_{u_2,v_2}(s')}+{\bf e}_{s}\right)_{s'_{u'}=s_{u'}\in[0,q),s'_{u_2}=v_3,s_{u_2}=v_2,u'\in[0,m)\setminus\{u_1,u_2\}}\\
&=0=\emptyset\subseteq\mathcal{Q}_{u_1}.
\end{split}
\end{eqnarray}
So the statement (III) holds by \eqref{eq-8}, \eqref{eq-9} and \eqref{eq-10}.
\end{itemize}
\end{proof}
\section*{Appendix C: Proof of Theorem \ref{th-main-R}}
\begin{proof}
From Construction \ref{con1}, for any fixed positive integers $u_1$, $u_2$, $v_1$, $v_2$, $\varepsilon_1$ and $\varepsilon_2$, let us show that the caching matrix $\mathbf{S}_{u_1,v_1,\varepsilon_1}$, coding matrix $\mathbf{A}_{u_2,v_2,\varepsilon_2}$ and decoding matrix $\mathbf{S}'_{u_1,v_1,\varepsilon_1}$ satisfy \eqref{Eqn rank condition} in Theorem \ref{th-main}. Let $\mathcal{G}_{v_1,\varepsilon_1}=\{v_{1,1},\ldots,v_{1,q-z}\}$ and $\mathcal{G}_{v_2,\varepsilon_2}=\{v_{2,1},\ldots,v_{2,q-z}\}$. Obviously we only need to consider \eqref{Eqn space condition} in the following case.
\begin{itemize}
\item When $(u_1,v_1,\varepsilon_1)=(u_2,v_2,\varepsilon_2)$, the following subcases can be obtained by \eqref{eq-caching1}.
\begin{itemize}
\item If $v_1<q$, from Lemma \ref{lem1}-(II) we have \begin{eqnarray*}
\mathbf{S}'_{u_1,v_1,\varepsilon_1}\mathbf{A}_{u_1,v_1,\varepsilon_1}
&=&
\left(\begin{matrix}\mathcal{E}_{u_1,v_1}&&\\ &\ddots& \\  && \mathcal{E}_{u_1,v_1}\end{matrix}\right)\left(\begin{matrix}\mathbf{C}_{u_1,v_1,v_{1,1}}\\ \mathbf{C}_{u_1,v_1,v_{1,2}}\\ \vdots \\ \mathbf{C}_{u_1,v_1,v_{1,q-z}} \end{matrix}\right)
=\left(\begin{matrix}\mathcal{E}_{u_1,v_1}\mathbf{C}_{u_1,v_1,v_{1,1}}\\ \mathcal{E}_{u_1,v_1}\mathbf{C}_{u_1,v_1,v_{1,1}}\\ \vdots \\ \mathcal{E}_{u_1,v_1}\mathbf{C}_{u_1,v_1,v_{1,q-z}} \end{matrix}\right) =\sum_{i=1}^{q-z}\mathcal{E}_{u_1,v_{1,i}}
\end{eqnarray*}
So we have
\begin{eqnarray*}
\mathbf{S}_{u_1,v_1,\varepsilon}+\mathbf{S}'_{u_1,v_1,\varepsilon}\mathbf{A}_{u_1,v_1,\varepsilon}
&=&\{\mathcal{E}_{u_1,v'}\ |\ v'\in [0,q)\setminus G_{v_1,\varepsilon_1}\} +\sum_{i=1}^{q-z} \mathcal{E}_{u_1,v_{2,i}}\\
&=&\sum_{v'\in [0,q)} \mathcal{E}_{u_1,v'}=(\mathbb{F}_2)^{q^m}
\end{eqnarray*}
Clearly the second equality of \eqref{Eqn space condition} holds.
\item If $v_1=q$, from Lemma \ref{lem1}-(III) we have
\begin{eqnarray*}
\mathbf{S}'_{u_1,q,\varepsilon}\mathbf{A}_{u_1,q,\varepsilon}
&=&
\left(\begin{matrix}\mathcal{Q}_{u_1}&&\\ &\ddots& \\  && \mathcal{Q}_{u_1}\end{matrix}\right)\left(\begin{matrix}\mathbf{C}_{u_1,q,v_{1,1}}\\ \mathbf{C}_{u_1,q,v_{1,2}}\\ \vdots \\ \mathbf{C}_{u_1,q,v_{1,q-z}} \end{matrix}\right)_{v_{1,i}\in \mathcal{G}_{q-1,\varepsilon}}=\left(\begin{matrix}\mathcal{Q}_{u_1}\mathbf{C}_{u_1,q,v_{1,1}}\\ \mathcal{Q}_{u_1}\mathbf{C}_{u_1,q,v_{1,2}}\\ \vdots \\ \mathcal{Q}_{u_1}\mathbf{C}_{u_1,q,v_{1,q-z}} \end{matrix}\right)\\
&=& \sum\limits_{i=1}^{q-z}\mathcal{E}_{u_1,v_{1,i}}
\end{eqnarray*}

So we have
\begin{eqnarray*}
\mathbf{S}_{u_1,q,\varepsilon_1}+\mathbf{S}'_{u_1,q,\varepsilon_1}\mathbf{A}_{u_1,q,\varepsilon_1}
&=&\mathcal{Q}_{u_1}\bigcup\{\mathcal{E}_{u_1,v'}\ |\ v'\in [0,q-1)\setminus G_{q-1,\varepsilon_1}\}+\sum\limits_{i=1}^{q-z}\mathcal{E}_{u_1,v_{1,i}}\\
&=&\mathcal{Q}_{u_1}\bigcup\{\mathcal{E}_{u_1,v'}\ |\ v'\in [0,q-1)\}\\
&=&\sum_{v'\in [0,q)} \mathcal{E}_{u_1,v'}=(\mathbb{F}_2)^{q^m}
\end{eqnarray*}
Clearly the second equality of \eqref{Eqn space condition} holds.
\end{itemize}
\item When $(u_1,v_1,\varepsilon_1)\neq(u_2,v_2,\varepsilon_2)$, we should consider the following subcases.
\begin{itemize}
\item If $u_1=u_2$, $v_1=v_2$ and $\varepsilon_1\neq \varepsilon_2$, let us consider the case $v_1<q$ first. From Lemma \ref{lem1}-(II) we have
\begin{eqnarray*}
\mathbf{S}'_{u_1,v_1,\varepsilon_1}\mathbf{A}_{u_1,v_2,\varepsilon_2}
&=&
\left(\begin{matrix}\mathcal{E}_{u_{1},v_1}&&\\ &\ddots& \\  && \mathcal{E}_{u_{1},v_1}\end{matrix}\right)\left(\begin{matrix}\mathbf{C}_{u_{1},v_2,v_{2,1}}\\ \mathbf{C}_{u_{1},v_2,v_{2,1}}\\ \vdots \\ \mathbf{C}_{u_{1},v_2,v_{2,q-z}} \end{matrix}\right)\\
&=&\left(\begin{matrix}\mathcal{E}_{u_{1},v_1}\mathbf{C}_{u_{1},v_2,v_{2,1}}\\ \mathcal{E}_{u_{1},v_1}\mathbf{C}_{u_{1},v_2,v_{2,2}}\\ \vdots \\ \mathcal{E}_{u_{1},v_1}\mathbf{C}_{u_{1},v_2,v_{2,q-z}} \end{matrix}\right)\\
&=&\sum\limits_{i=1}^{q-z}\mathcal{E}_{u_{1},v_{2,i}}.
\end{eqnarray*}

Since $0\leq\varepsilon_1\neq\varepsilon_2<h$ and $v_1=v_2$, $\mathcal{G}_{v_1,\varepsilon_1}\bigcap\mathcal{G}_{v_2,\varepsilon_2}=\emptyset$ always holds. This implies that $\mathcal{G}_{v_2,\varepsilon_2}\subseteq [0,q)\setminus\mathcal{G}_{v_1,\varepsilon_1}$. So we have
$$\sum\limits_{i=1}^{q-z}\mathcal{E}_{u_1,v_{2,i}}\subseteq\{\mathcal{E}_{u,v'}\ |\ v'\in [0,q)\setminus G_{v_1,\varepsilon_1}\}=\mathbf{S}_{u_1,v_1,\varepsilon_1}.$$
Clearly the first equality of \eqref{Eqn space condition} holds. Let us consider the case $v_1=q$ now. From Lemma \ref{lem1}-(III) we have
\begin{eqnarray*}
\mathbf{S}'_{u_1,q,\varepsilon_1}\mathbf{A}_{u_1,q,\varepsilon_2}
&=&
\left(\begin{matrix}\mathcal{Q}_{u_1}&&\\ &\ddots& \\  && \mathcal{Q}_{u_1}\end{matrix}\right)\left(\begin{matrix}\mathbf{C}_{u_1,q,v_{2,1}}\\ \mathbf{C}_{u_1,q,v_{2,2}}\\ \vdots \\ \mathbf{C}_{u_1,q,v_{2,q-z}} \end{matrix}\right) =\left(\begin{matrix}\mathcal{Q}_{u_1}\mathbf{C}_{u_1,q,v_{2,1}}\\ \mathcal{Q}_{u_1}\mathbf{C}_{u_1,q,v_{2,2}}\\ \vdots \\ \mathcal{Q}_{u_1}\mathbf{C}_{u_1,q,v_{2,q-z}} \end{matrix}\right)\\
&=&\sum\limits_{i=1}^{q-z}\mathcal{E}_{u,v_{2,i}}
\end{eqnarray*}
Since $0\leq \varepsilon_1\neq\varepsilon<h$ and $v_1=v_2$, $\mathcal{G}_{v_1,\varepsilon_1}\bigcap\mathcal{G}_{v_2,\varepsilon_2}=\emptyset$ always holds. This implies that $\mathcal{G}_{v_2,\varepsilon_2}\subseteq [0,q-1)\setminus\mathcal{G}_{v_1,\varepsilon_1}$. So we have $$\sum\limits_{i=1}^{q-z}\mathcal{E}_{u_1,v_{2,i}}\subseteq\{\mathcal{E}_{u_1,v'}\ |\ v'\in [0,q-1)\setminus G_{v_1,\varepsilon_1}\}\subseteq\mathbf{S}_{u_1,v_1,\varepsilon_1}.$$
Clearly the first equality of \eqref{Eqn space condition} holds.
\item If $u_1=u_2$ and $v_1\neq v_2$, we also first consider the case $v_1<q$. From Lemma \ref{lem1}-(II) we have
\begin{eqnarray}
\label{eq-11}
\begin{split}
\mathbf{S}'_{u_1,v_1,\varepsilon_1}\mathbf{A}_{u_2,v_2,\varepsilon_2}
&=\left(\begin{matrix}\mathcal{E}_{u_1,v_1}&&\\ &\ddots& \\  && \mathcal{E}_{u_1,v_1}\end{matrix}\right)\left(\begin{matrix}\mathbf{C}_{u_2,v_2,v_{2,1}}\\ \mathbf{C}_{u_2,v_2,v_{2,2}}\\ \vdots \\ \mathbf{C}_{u_2,v_2,v_{2,q-z}} \end{matrix}\right)
=\left(\begin{matrix}\mathcal{E}_{u_1,v_1}\mathbf{C}_{u_2,v_2,v_{2,1}}\\ \mathcal{E}_{u_1,v_1}\mathbf{C}_{u_2,v_2,v_{2,2}}\\ \vdots \\ \mathcal{E}_{u_1,v_1}\mathbf{C}_{u_2,v_2,v_{2,q-z}} \end{matrix}\right) \\
&\subseteq\mathcal{E}_{u_1,v_1}
\end{split}
\end{eqnarray}
Clearly the first equality of \eqref{Eqn space condition} holds. Now let us consider the case $v_1=q$. Clearly $v_2<q$. From Lemma \ref{lem1}-(III) we have
\begin{eqnarray}
\label{eq-12}
\begin{split}
\mathbf{S}'_{u_1,v_1,\varepsilon_1}\mathbf{A}_{u_2,v_2,\varepsilon_2}
&=\left(\begin{matrix}\mathcal{Q}_{u_1}&&\\ &\ddots& \\  && \mathcal{Q}_{u_1}\end{matrix}\right)\left(\begin{matrix}\mathbf{C}_{u_2,v_2,v_{2,1}}\\ \mathbf{C}_{u_2,v_2,v_{2,2}}\\ \vdots \\ \mathbf{C}_{u_2,v_2,v_{2,q-z}} \end{matrix}\right)
=\left(\begin{matrix}\mathcal{Q}_{u_1}\mathbf{C}_{u_2,v_2,v_{2,1}}\\ \mathcal{Q}_{u_1}\mathbf{C}_{u_2,v_2,v_{2,2}}\\ \vdots \\ \mathcal{Q}_{u_1}\mathbf{C}_{u_2,v_2,v_{2,q-z}} \end{matrix}\right) \\
&\subseteq\mathcal{Q}_{u_1}
\end{split}
\end{eqnarray}
\item If $u_1\neq u_2$, for the cases $v_1<q$ and $v_1=q$, \eqref{eq-11} and \eqref{eq-12} can be obtained by Lemma \ref{lem1}-(I) and (III) respectively. So the first equality of \eqref{Eqn space condition} holds.
\end{itemize}
\end{itemize}
\end{proof}

\end{document}